\documentclass[a4paper,11pt]{article} 
\RequirePackage[top=1.8cm, bottom=1.6cm, left=1.4cm, right=1.4cm]{geometry}
\newcommand{\JEL}[1]{\par\noindent\textbf{JEL Classification:} #1}
\usepackage{amsmath,amssymb,amsthm}
\usepackage{booktabs}
\usepackage{bm}
\usepackage[colorlinks=true,bookmarks=false,citecolor=blue,urlcolor=blue]{hyperref} %pdflatex
\usepackage{natbib}
\usepackage{tikz}
\usepackage{float}
\usepackage{subcaption}
\usepackage{setspace}
\usetikzlibrary{arrows.meta,positioning,calc,shapes.geometric}
\newtheorem{theorem}{Theorem}[section]

\newtheorem{proposition}[theorem]{Proposition}
\newtheorem{lemma}[theorem]{Lemma}

\newtheorem{definition}[theorem]{Definition}

\newtheorem{remark}[theorem]{Remark}
\newtheorem{assumption}[theorem]{Assumption}

\DeclareMathOperator*{\argmax}{arg\,max}

\allowdisplaybreaks
\title{Nash Peer-to-Peer Insurance Bargaining\\
under Price Fairness and Coalitional Stability}
\author{Tim J. Boonen\thanks{Department of Statistics and Actuarial Science, School of Computing and Data Science, The University of Hong Kong, Pokfulam, Hong Kong; Email: \texttt{tjboonen@hku.hk}}\hspace{1em} Wing Fung Chong\thanks{Department of Statistics and Actuarial Science, School of Computing and Data Science, The University of Hong Kong, Pokfulam, Hong Kong; Email: \texttt{chongwf@hku.hk}}\hspace{1em} Kenneth Tsz Hin Ng\thanks{Department of Mathematics, The Ohio State University, Columbus, Ohio 43210,  United States; Email: \texttt{ng.499@osu.edu} }\hspace{1em} Tak Wa Ng\thanks{Department of Statistics and Actuarial Science, School of Computing and Data Science, The University of Hong Kong, Pokfulam, Hong Kong; Email: \texttt{takwang@hku.hk}}}

\date{\today}

\begin{document}
\maketitle
\begin{abstract}
    We study peer-to-peer (P2P) insurance contracting between a risk-averse P2P reinsurer and multiple risk-averse peers in an asymmetric Nash-bargaining framework, where all agents seek to improve expected utility relative to their disagreement points. 
    Consistent with the expected value premium principle, we impose a price-fairness condition requiring each peer's expected contribution to be based on a common loading applied to the peer's expected loss. To justify the bargaining formulation relative to a standard fixed-weight weighted-sum optimization problem, we provide an axiomatic characterization showing that the Nash bargaining solution satisfies properties well-suited to voluntary P2P insurance contracting in small pools. We establish the existence and uniqueness of the optimal contract and derive first-order characterizations for the full-, partial-, and zero-reinsurance regimes.  
    To address subgroup formation, we develop computationally tractable sufficient conditions that rule out viable coalitional deviations, both with and without price fairness. 
    Our numerical study investigates the impact of price fairness and pool size on the optimal contract and agents' welfare. Price fairness reduces dispersion in risk allocations and certainty-equivalent loadings among peers. Regarding pool size, welfare need not increase monotonically, highlighting that risk-pool expansion depends not only on diversification but also on the evolution of bargaining power.
\end{abstract}

\noindent\textbf{Keywords:} 
Peer-to-peer insurance, 
Nash bargaining,
Optimal reinsurance, 
Pareto optimality,
Price fairness, Coalitional stability.

\JEL{C71, C78, D86, G22.}

\section{Introduction}

%\subsection{Context and Motivation}

Recent years have witnessed a resurgence of peer-to-peer (P2P) insurance, which revives the traditional principle of mutual assistance through modern digital platforms and risk-sharing technologies. Rather than relying on the transfer of risk to a conventional centralized insurer (see, e.g., \cite{boonen2024pareto,EscobarChong2026}), P2P insurance allows participants to pool and share losses within a community, potentially improving transparency, aligning incentives, and strengthening solidarity among members. 

This paper develops a practical model for P2P risk-sharing arrangements with relatively small pools, which are commonly observed in practice. For example, Friendsurance\footnote{\href{https://www.friendsurancebusiness.com}{https://www.friendsurancebusiness.com}} in Germany has organized policyholders into groups of around ten, whereas the Dutch Broodfonds\footnote{\href{https://www.broodfonds.nl}{https://www.broodfonds.nl}} model typically organizes groups of 20 to 50 self-employed participants. In such small pools, all members and the P2P reinsurer play a material role in shaping the contract:
% differences in risk exposure, risk preferences, outside insurance opportunities, and negotiating positions may substantially affect the premium decision and risk-sharing design. Moreover, as
participation is voluntary, the allocation of risks and payments is more naturally viewed as the bargaining outcome among peers and the P2P reinsurer rather than as a decision imposed by a social planner. 
% These features motivate our use of a Nash bargaining model, as it captures the outcome of the bargaining process: \takwacomment{a cooperative and efficient interaction to compete for welfare allocation,} in which each agent aims at maximizing the relative welfare gains over the disagreement points. 
These features motivate our use of a Nash bargaining model, as it captures the dual nature of the bargaining process: a cooperative drive for efficiency paired with a competitive allocation of surplus, where agents maximize their respective welfare gains relative to their disagreement points.
This contrasts with weighted-sum Pareto optimization, which does not explicitly incorporate status quo utilities into the objective function. This is particularly relevant in P2P insurance, where participation is voluntary, and agents may opt for outside insurance instead, so that the attractiveness of the outside option influences not only participation decisions but also the distribution of welfare gains and the resulting contract. 

In the Nash P2P insurance bargaining model, 
% \takwacomment{Tim used ``Nash insurance bargaining" so I think the previous version should be fine} \KTHNcomment{Okay. I was thinking Nash bargaining should come together}
we consider a risk-averse P2P reinsurer
and multiple risk-averse peers, where all agents aim to
enhance their expected utility relative to their disagreement points. In particular, the peers' disagreement points are induced by an alternative insurance choice provided by an exogenous centralized insurer, while the P2P reinsurer's disagreement point is the utility obtained from not offering any insurance service. Under such an arrangement, each peer brings their loss exposure to the P2P insurance pool. The participating peers and the P2P reinsurer jointly absorb the aggregate loss, with each party bearing a predetermined proportion of the total loss. In return for assuming part of the aggregate risk and providing the platform, the P2P reinsurer receives premium payments from the peers. In sum, participation in the P2P arrangement allows peers to replace their individual loss exposure with a share of the aggregate loss of the pool, along with a premium payment. 

To ensure fair pricing across heterogeneous peers while limiting cross-subsidization based on expected losses, we introduce a price-fairness condition consistent with the expected value premium principle. This ensures that peers are charged in proportion to their individual mean losses, adjusted by a uniform loading factor, so that no peer is systematically overcharged or undercharged relative to their risk exposure. Notably, this pricing rule allows us to isolate the role of bargaining power in risk allocation. 

\subsection{Contributions}

%Moreover, we enforce a price-fairness condition to align all peers’ expected risk-bearing and side payments as if they pay a uniform loading with the expected value premium principle. [\KTHNcomment{MORE ON WHY WE INTRODUCE PRICE FAIRNESS}]

Our contributions are manifold. 
From the perspective of bargaining theory, we show that the P2P insurance contracting problem fits naturally into the multi-agent asymmetric Nash-bargaining framework in \cite{kalai1977nonsymmetric}. Specifically, we show that the resulting P2P insurance contract is characterized by four properties: (i) Pareto optimality, (ii) strict improvement over the disagreement points, (iii) invariance to positive affine transformations of utility, and (iv) independence of irrelevant alternatives.  Compared with a weighted-sum formulation, the Nash bargaining model explicitly incorporates disagreement points, since Property (ii) reflects agents’ outside insurance options in P2P insurance contracting. In addition, it is invariant to affine transformations of utility, meaning that it does not rely on interpersonally comparable utility units and thus focuses on relative gains over disagreement points.

% \KTHNcomment{[Any mathematical challenge or something in the proof we may slightly elaborate as contributions]} \takwacomment{Basically, the technique is the same, just translating to the current context.} \KTHNcomment{We can elaborate it a bit on the justification of the use of Nash bargaining here, say, by combining.moving the paragraph above here.}

On the methodological side, to solve the optimal P2P insurance contracting problem, we first characterize the admissible set of contracts. This is achieved by introducing lower and upper bounds on the aggregate premium depending on the risk allocation strategy, which correspond to the individual rationality (IR) constraints of the P2P reinsurer and peers, respectively. We show that the admissible set of contracts is convex and compact, and that the subset satisfying all agents' IR constraints strictly is non-empty. This yields a well-posed optimization problem that is amenable to convex analysis. In Theorem \ref{Thm:EUsol}, we establish the existence and uniqueness of the Nash bargaining contract and characterize it using systems of nonlinear first-order conditions under three regimes: full, partial, and zero reinsurance. Economically, these systems balance the agents' weighted marginal welfare effects, thereby characterizing the trade-off between risk-sharing efficiency and the distribution of contractual surplus. 

% \takwacomment{Examining the solution form, we show that, even under exponential utility, bargaining weights have a nontrivial impact on the resulting risk-sharing arrangement. This contrasts with much of the existing literature \takwacomment{(see, e.g., \cite{aase2009nash}),} where bargaining power typically affects only monetary transfers or premiums, while leaving the underlying risk-sharing structure largely unchanged.}
% \KTHNcomment{Seems this is not on the methodological side and we only show it numerically. What about saying Nash bargaining ``captures a form of cooperative interaction with competitive surplus allocation" }
% \takwacomment{I dissolved this passage (commented out) below. ``captures a form of cooperative interaction with competitive surplus allocation" this sounds other way to explain what bargaining is, which is a good line to add at the begining.}
% \KTHNcomment{The associated Karush--Kuhn--Tucker (KKT) conditions further distinguish between interior and boundary solutions, thereby identifying the regimes of partial, full, and zero reinsurance.}
%\KTHNcomment{The system, which is determined by the} \KTHNcomment{Karush–Kuhn–Tucker} (KKT) conditions, \KTHNcomment{illustrate} how the allocation balances the agents’ weighted marginal welfare effects under partial, full, and zero reinsurance. The first-order conditions , while the boundary cases are determined by the corresponding . 

Regarding the stability of such a P2P arrangement against subgroup formation,
% we address the issue of possible subgroup formation. \KTHNcomment{To this end,} 
we develop an ex post stability test that provides a computationally tractable, subgroup-independent condition under which no smaller pool can offer its members and the P2P reinsurer a weakly better contract. 
In Theorem \ref{Thm:core}, under the assumption that the P2P reinsurer must be included in any subgroup, we provide conditions under which no subgroup can improve upon the grand-coalition outcome.

%The result is summarized in 

%\takwacomment{in which the main assumption is to include the P2P reinsurer in each subgroup so that} 
%Then, the theorem 

% which in particular demonstrates that no peer has an incentive to deviate from the P2P arrangement when the alternative insurance option is sufficiently inexpensive. Under this condition, peers in any smaller coalition would prefer the outside insurance option to forming a separate P2P pool.

We further obtain tractable characterizations in two benchmark cases under exponential utility. First, when the price-fairness condition is removed, the optimal risk-sharing rule is determined by agents' risk aversions and the deadweight cost of reinsurance, while bargaining powers and the centralized insurer’s loading affect only the premium and side payments. In the zero-deadweight-loss case $\tau=0$, this rule reduces to the familiar inverse-risk-aversion allocation. Second, under price fairness with homogeneous peers, the peers optimally split the retained risk and aggregate premium equally, reducing the problem to a lower-dimensional characterization in terms of the reinsurance share and aggregate premium. In the homogeneous exponential-utility case, the reinsurance share is independent of bargaining power and the outside insurance loading, whereas the premium reflects these parameters.

% {We further obtain closed-form solutions for the optimal contract in two benchmark cases} under exponential utility. %leading to closed-form characterizations of the optimal contract.
% In the first case, we consider the scenario when the price-fairness condition is lifted. In this setting, the optimal risk allocation is determined by the risk aversion and the deadweight cost of reinsurance, whereas bargaining powers and the centralized insurer’s loading affect only the premium payments.  This is in contrast to the case with price fairness, in which bargaining weights have a nontrivial impact on the resulting risk-sharing arrangement. 
% {In the second case, we consider a pool with homogeneous peers.} As a result, the peers optimally split the retained risk and premium equally, and the reinsurance proportion is again independent of bargaining power and the safety loading of the alternative insurance option.

Our numerical analysis illustrates several economic implications of the model. First, regarding the impact of price fairness, the dispersion of risk-bearing proportions and certainty-equivalent (CE) loadings among peers is reduced. {In response,} the P2P reinsurer may assume more risk, thereby lowering the P2P reinsurer’s CE net profit despite increasing her premium income. In addition, the bargaining powers of the P2P reinsurer and peers have an impact on the risk-bearing proportions, albeit modestly, even under exponential utility. This contrasts with some results in the literature in which bargaining power plays no role in risk sharing; see, e.g., \cite{aase2009nash}. This is consistent with economic intuition, as bargaining power typically reflects negotiation strength and market influence, and is therefore expected to influence risk allocation in practical insurance arrangements. Second, under the homogeneous setting, we examine the effect of the pool size. We find that expanding the pool strengthens internal diversification and reduces peers' risk-bearing proportions. Notably, however, a larger pool does not necessarily improve agents' welfare. When the peers collectively retain sufficient bargaining power, the welfare of the P2P reinsurer and peers may become non-monotonic in pool size.
%, leading to distinct critical pool sizes.
Thus, the desirability of pool expansion depends not only on diversification but also on how bargaining power evolves as additional members enter the pool.

\subsection{Related literature}

% \begin{itemize}
%     \item Decentralized insurance
%     \item Nash bargaining model
% \end{itemize}

Our paper contributes to research on decentralized insurance schemes, including mutual insurance \citep{laux2010financing,li2025mean}, tontines \citep{milevsky2015optimal,ng2025individual}, mutual-aid schemes \citep{abdikerimova2022peer,li2023optimal}, and, most relevant for this paper, P2P insurance contracting \citep{denuit2021risk,denuit2025comonotonicity}, a literature that has developed along several research directions, including the architecture of P2P risk sharing.
%, one direction studies P2P risk sharing as the underlying architecture.
For example, \cite{feng2023peer} devise proportional risk mutualization matrices and \cite{ghossoub2025pareto} study P2P risk sharing under robust distortion risk measures, both with applications in flood risk management. Furthermore, and pertinent to P2P insurance contracting, asymmetric information induces issues such as adverse selection and moral hazard, which are studied in, e.g., \cite{chen2024cost} and \cite{boonen2025contractibility}, respectively. 

Our work considers the strategic interactions between a P2P reinsurer and pool members in small pools using a Nash bargaining model, contributing to the literature on game-theoretic contracting in P2P insurance. In this regard, the closest work to ours is \cite{boonen2026pareto}, which also accounts for an additional reinsurance layer under Pareto and Bowley game frameworks. Our work stands out in that
%In addition to different game models, what separates our work from theirs is that 
the Nash bargaining model incorporates heterogeneous bargaining power and disagreement points among the P2P reinsurer and peers directly in the objectives, reflecting a cooperative interaction with competitive welfare allocations, in contrast to the fully cooperative or monopolistic setting therein.

The present work also connects the literature on decentralized insurance and Nash bargaining models, pioneered in \cite{nash1950bargaining} and extended to the asymmetric case by \cite{kalai1977nonsymmetric}, and studied with other solution concepts in \cite{kalai1975other}. 
In the actuarial context, Nash bargaining models have been used to study centralized insurance contracting. To name a few, \cite{boonen2016pricing} adopt distortion risk measures to study reinsurance contracting. \cite{asimit2021risk} consider the Pareto-optimal insurance design, and its premium decision coincides with the Nash bargaining solution. 
% Regarding the insurance contracting under asymmetric Nash bargaining,
\cite{chi2024optimal} study a stop-loss contract between a risk-neutral insurer and a risk-averse insured, and \cite{boonen2025asymmetric} focus on proportional indemnity with a risk-averse insurer and a risk-averse insured.
%, where both the insurer and the insured are risk-averse.
In the context of decentralized insurance, extending the aforementioned two-agent models to multi-agent models is necessary, but existing contributions remain limited, with the notable exception of \cite{aase2009nash}, which uses multi-agent Nash bargaining models to study the reinsurance market and its relationship to the competitive equilibrium in the absence of the price-fairness condition and a strategic reinsurer. 
%\takwacomment{Despite of different aim of study, their work can be translated to the case without price fairness and reinsurance layer in our context, which separates us our theoretical contribution to theirs.}
%\KTHNcomment{[Also one sentence on the difference between our work? Say no price fairness or reinsurer?] }  \takwacomment{I added some lines for that.}
%Thus, our work bridges a significant divide between studies on P2P insurance and Nash bargaining models, and  <-- kinda repeated in the first sentence of the paragraph
Finally, our study of coalitional stability also links the paper to related topics in game theory, including noncooperative foundations for Nash bargaining, e.g., the Nash core in \cite{okada2010nash} and stable coalitional bargaining in \cite{compte2010coalitional}.

\subsection{Organization}
The remainder of this work is organized as follows. Section \ref{Sec:setting} presents the general setting and formulates the Nash bargaining problem under price fairness. Section \ref{Sec:NB} studies the Nash bargaining solution, characterizes the optimal P2P insurance contract, and derives tractable benchmark cases. Section \ref{Sec:Stability} develops ex post coalitional stability tests under both price fairness and the no-price-fairness benchmark. Section \ref{Sec:Numeric} presents a numerical analysis investigating the impact of price fairness and the effect of pool size under homogeneity. Section \ref{Sec:Conclusion} concludes and discusses future research directions. All proofs are relegated to 
%collected in
the appendix.
%{to enhance readability}.
%to streamline the reading flow. 

\section{Setting and problem formulation}\label{Sec:setting}

\subsection{Peer-to-peer insurance scheme}
We consider a peer-to-peer (P2P) insurance scheme with a P2P reinsurer and $n\geq 2$ peers. The P2P reinsurer, denoted by $R$, is risk-averse, has utility function $U_R$, and is endowed with initial wealth $w_R$. Their collective loss is denoted by $S:=\sum_{i=1}^nX_i$. Throughout the paper, we assume that $S$ is non-degenerate. Otherwise, the problem reduces to a deterministic transfer problem, and the risk-sharing becomes trivial. In addition, it excludes the knife-edge case in which the individual losses are coupled so that their aggregate is almost surely constant.\footnote{In the dependence literature, this corresponds to a jointly mixable configuration of the marginals; see \cite{wang2016joint}.}

Each peer $i\in\mathcal{N}:=\{1,\dots,n\}$ is endowed with initial wealth $w_i$ and faces a loss $X_i$ {defined on a probability space $(\Omega,\mathcal{G},\mathbb{P})$} with $X_i\geq0$ $\mathbb{P}$-a.s.~and $\mathbb{E}[X_i]=\mu_i\in(0,\infty)$. 
%\KTHNcomment{Comment: We are using $i\in\mathcal{N}$ and $i=1,\dots,n$ interchangeably. Maybe better to stick to 1 most of the time. } \takwacomment{Mostly, I use $i\in\mathcal{N}$ to save some space but using $\sum_{i=1}^n$, do we need to replace it by $\sum_{i\mathcal{N}}$ for all?}
Each peer is also a standard expected-utility maximizer with a utility function $U_i$. 
%{\color{red}We additionally assume that differentiation under the expectation sign is valid for all expected utility functions.}
In what follows, we refer to all decision-makers as agents, including peers in the P2P insurance pool and the P2P reinsurer, and denote their collection by $\mathcal{I}:=\{1,2,\dots,n,R\}$.
%Additionally, we collect all peers' indexes as the set $\mathcal{N}:=\{1,2,\dots,n\}$. 

The P2P insurance scheme has two risk-management components: 
\begin{enumerate}
    \item \textbf{Reinsurance}: The group of peers purchases a proportional reinsurance contract written on their collective loss $S$ at a proportion $a_R\in[0,1]$.\footnote{This stems from the group-covered model; see \cite{feng2024unified} for further exposition.} Hence, the retained loss for the group is given by $(1-a_R)S$.  In exchange, the group must pay a reinsurance premium $P$. 
   
    \item \textbf{Risk mutualization}: The retained loss $(1-a_R)S$ and the premium payment $P$ are distributed among peers via a linear sharing rule \citep[e.g.,][]{schumacher2018linear}. In other words, Peer $i$ pays an amount $b_i$ ex ante and bears a proportion $a_i\in[0,1]$ of the collective loss $S$ ex post. Peer $i$'s total contribution is therefore $a_iS+b_i$,
    where $\sum_{i=1}^nb_i=P$ and $a_R+\sum_{i=1}^na_i=1$ (market-clearing condition). 
\end{enumerate}

We assume that the P2P reinsurer faces a linear deadweight loss with a loading factor $\tau\geq 0$ when providing the proportional reinsurance contract. Hence, using the market-clearing condition, the P2P reinsurer's terminal wealth is given by
{\small\begin{equation}\label{Eq:WR}
    W_R(\bm{a},P):=w_R+P-(1+\tau)a_RS=w_R+P-(1+\tau)\left(1-\sum_{i=1}^na_i\right)S.
\end{equation}}%
The premium $P$ also supports the administration of the P2P platform and may include both a subscription fee and a reinsurance premium. Hence, we do not impose the condition that $a_R=0$ implies $P=0$. {On the other hand, the terminal wealth $W_i$ of Peer $i$ is given by $W_i := w_i - a_iS -b_i$ for all $i\in\mathcal{N}$.}

% Throughout the paper, we adopt the following assumption. 
% \begin{assumption}\label{Ass:non_degenerate_S}
%     $S$ is non-degenerate.
% \end{assumption}

% Otherwise, the problem reduces to a deterministic transfer problem, and the risk-sharing becomes trivial. In addition, it excludes the knife-edge case in which the individual losses are coupled so that their aggregate is almost surely constant.\footnote{In the dependence literature, this corresponds to a jointly mixable configuration of the marginals; see \cite{wang2016joint}.}
%for further exposition.}

\subsection{Price fairness}

We impose a price-fairness condition that avoids excessive cross-subsidization, formulated as follows:\footnote{Such a fairness concept can be termed perfect price fairness; see \cite{cohen2022price} for further exposition.}

\begin{definition}[Price fairness]\label{Def:PriceFairness}
    Upon joining the P2P insurance scheme, each peer's expected contribution is determined according to the expected-value premium principle with a common loading $\rho(P,a_R)$, i.e., 
    {\small\begin{equation}\label{Eq:PricingFairness}
    \mathbb{E}[a_iS+b_i]=(1+\rho(P,a_R))\mu_i,\quad\text{for all } i\in\mathcal{N}.
\end{equation}}%
\end{definition}
\noindent
By summing \eqref{Eq:PricingFairness} over all $i\in\mathcal{N}$, and using $a_R+\sum_{i=1}^na_i=1$ and $\sum_{i=1}^nb_i=P$, we obtain $\rho(P,a_R)=P/\mathbb{E}[S]-a_R$.

\begin{remark}[Non-negativity of common loading]
Under the P2P reinsurer's IR constraint, the common loading $\rho(P,a_R)$ is non-negative. Indeed, Jensen's inequality implies that $P\geq(1+\tau)a_R\mathbb{E}[S]$, and therefore $\rho(P,a_R)=\frac{P}{\mathbb{E}[S]}-a_R\geq\tau a_R\geq 0$. Thus, the price-fairness condition is consistent with the interpretation of $\rho(P,a_R)$ as a loading.
\end{remark}

\begin{remark}[On premium principles that define price fairness]\label{Rem:otherPF}
    It is possible to adopt other premium principles to define price fairness. As long as the principle involves a common loading factor, a similar closed-form loading can be obtained. 
    However, using premium principles with two loading factors, e.g., the mean-variance principle, requires either augmenting the P2P reinsurer's decision variables or pre-specifying one of them.
\end{remark}

By the fact that $\rho(P,a_R)=P/\mathbb{E}[S]-a_R$, each peer's total contribution can be written as
{\small\begin{equation*}
a_iS+b_i
=a_i(S-\mathbb{E}[S])+(1+\rho(P,a_R))\mu_i=a_i(S-\mathbb{E}[S])+\left(1-a_R+\frac{P}{\mathbb{E}[S]}\right)\mu_i,
\end{equation*}}%
which is similar to the linear residual sharing rule in \cite{yang2025optimality} with a different fairness notion.
Hence, using the market-clearing and the price-fairness conditions, each peer's terminal wealth becomes
{\small\begin{equation}\label{Eq:Wi}
    W_i(\bm{a},P):=w_i - a_i(S-\mathbb{E}[S]) - \left(\frac{P}{\mathbb{E}[S]} + \sum_{j=1}^n a_j \right)\mu_i.
\end{equation}}%

Summarizing the above, we collect the notions of risk mutualization
among peers, proportional reinsurance, and premium to define the P2P insurance contract below. Note that we do not include $a_R$ in the definition as it can be inferred from $\bm{a}$ by the market-clearing condition.
\begin{definition}[P2P insurance contract]\label{Def:P2Pcontract}
    A pair $(\bm{a},P)$ is called a P2P insurance contract, where $\bm{a}:=(a_1,\dots,a_n)\in[0,1]^n$ with $\sum_{i=1}^n a_i\le 1$ represents the risk mutualization among members and $P\geq0$ is the P2P insurance premium.
\end{definition}
\noindent
Given a P2P insurance contract $(\bm{a},P)$, each Peer $i$'s payment $b_i(\bm{a},P)=\left(\sum_{j=1}^n a_j+\frac{P}{\mathbb{E}[S]}\right)\mu_i-a_i\mathbb{E}[S]$, $i\in\mathcal{N}$, can also be determined; such payment is allowed to be negative, in which case it represents an ex ante transfer or rebate to Peer $i$.

Throughout the paper, unless stated otherwise, we impose the following regularity conditions on expected utility, which ensure boundedness, continuity, and continuous differentiability of the expected-utility functions. These conditions are imposed only up to the first derivative because the subsequent analysis uses first-order differentiability of expected utilities. The $C^2$ assumption, together with strict concavity, ensures that marginal utilities are well-behaved in the first-order and covariance arguments below.

    \begin{assumption}\label{Ass:EU}
    For any $j\in \mathcal{I}$, $U_j(\cdot)$ is twice continuously differentiable, strictly increasing, and strictly concave. In addition, for any $K \in \mathbb{R}$, and any $l,k\in \{0,1\}$, $\mathbb{E}\left[{S}^k\left|U_R^{(l)}\left(w_R -(1+\tau)S +K\right) \right|\right]<+\infty$ and $\mathbb{E}\left[{S}^k\left|U_i^{(l)}\left(w_i-S + K \right)\right| \right]<+\infty$ for any $i\in\mathcal{N}$,
        % \begin{equation*}
        %  \mathbb{E}\left[{S}^k\left|U_R^{(l)}\left(w_R -(1+\tau)S +K\right) \right|\right]<\infty  , \quad     \mathbb{E}\left[{S}^k\left|U_i^{(l)}\left(w_i-S + K \right)\right| \right]<\infty ,\quad\text{for any } i\in\mathcal{N},
        % \end{equation*}
    where $U^{(l)}_R$ and $U^{(l)}_i$ denote the $l$-th derivative of $U_R$ and $U_i$, respectively. 
    \end{assumption}

% \KTHNcomment{\begin{assumption}\label{Ass:EU}
%     All expected utilities appearing in the
% optimization problems are finite and continuous in the P2P insurance contract variables.
% \end{assumption}}

\subsection{Nash bargaining model}

To formulate the Nash bargaining model, we need to define each agent's disagreement point. For the peers, the natural choice would be their status quo utility levels, i.e., the no-insurance case: $\mathbb{E}[U_i(w_i-X_i)]$ for all $i\in\mathcal{N}$. In this paper, we assume that traditional centralized insurance contracts with full indemnity are available to all peers. We suppose that the centralized contract is priced by the expected value principle, with a loading $\theta\geq 0$, and that all peers would choose it without the P2P insurance scheme, assuming it improves welfare relative to the no-insurance case. As such, the disagreement point of Peer $i$ is given by $d_i:=U_i(w_i-(1+\theta)\mu_i)$, where $\theta\geq 0$ is chosen to satisfy $U_i(w_i-(1+\theta)\mu_i)\geq \mathbb{E}[U_i(w_i-X_i)]$ for all $i\in\mathcal{N}$.\footnote{A uniform $\theta\geq 0$ exists by Jensen's inequality and the continuity of $U_i$.} On the other hand, the P2P reinsurer's disagreement point is the status quo utility level $d_R:=U_R(w_R)$.

Let $\delta_R>0$ and $\delta_i>0$ be the {bargaining} power of the P2P reinsurer and Peer $i$, respectively, such that $\sum_{i\in\mathcal{I}}\delta_i=1$. The multilateral asymmetric Nash bargaining {solution} is defined as the {maximizer} of the following optimization problem:  
% \begin{align}\label{Prob:hetero}
% \begin{split}
% \max_{P\geq0,\ \bm{a}\in\Delta^n}&\left(\mathbb{E}[U_R(W_R(\bm{a},P))]-d_R\right)^{\delta_R}\prod_{i=1}^n\left(\mathbb{E}\left[U_i\left(W_i(\bm{a},P)\right)\right]-d_i\right)^{\delta_i}\\
% s.t.\ & \mathbb{E}[U_R(W_R(\bm{a},P))]\geq d_R,\quad\mathbb{E}\left[U_i\left(W_i(\bm{a},P)\right)\right]
% \geq d_i\quad\text{for all }i\in\mathcal{N},
% \end{split}
% \end{align}
{\small\begin{equation}\label{Prob:hetero}
    \max_{P\geq0,\ \bm{a}\in\Delta^n}\prod_{j\in\mathcal{I}}\left(\mathbb{E}\left[U_j\left(W_j(\bm{a},P)\right)\right]-d_j\right)^{\delta_j}\quad\text{s.t.}\quad \mathbb{E}\left[U_j\left(W_j(\bm{a},P)\right)\right]
\geq d_j\quad\text{for all }j\in\mathcal{I},
\end{equation}}%
where $\Delta^n:= \{ \bm{x}\in [0,1]^n : \sum_{i=1}^n x_i\leq 1\}$ denotes the closed unit simplex with slack coordinate $a_R=1-\sum_{i=1}^{n}a_i$, and the $n+1$ inequality constraints are the individual rationality (IR) constraints for all agents. We shall {refer to a contract} $(\bm{a},P)$ such that
% $\mathbb{E}[U_R(W_R(\bm{a},P))] > d_R$ and
$\mathbb{E}\left[U_j\left(W_j(\bm{a},P)\right)\right]> d_j$ for all $j\in\mathcal{I}$ as a strictly feasible contract.

\section{Nash Peer-to-Peer insurance bargaining}\label{Sec:NB}

In this section, we connect Problem \eqref{Prob:hetero} to the asymmetric Nash bargaining theory axiomatized in \cite{kalai1977nonsymmetric}; in particular, the Nash bargaining solution satisfies Pareto optimality, strict feasibility, invariance to positive affine transformations, and independence of irrelevant alternatives. We then characterize the optimal P2P insurance contract of Problem \eqref{Prob:hetero}, and derive tractable benchmark cases.

\subsection{Preliminaries}
\label{sec:prelim}
First, we present preliminary results that facilitate solving Problem \eqref{Prob:hetero} and characterizing the properties of its solution.
% In the sequel, we shall drop the collective insurance proportion $a_R$, since $a_R = 1-\sum_{i=1}^na_i$ by the market-clearing condition\KTHNcomment{, and we reparametrize $W_R(\bm{a},P)$ and $W_i(\bm{a},P)$ as $W_R(\bm{a},P)$ and $W_i(\bm{a},P)$, respectively.} 
% We also define $\Delta := \{ (a_1,\dots,a_n)\in [0,1]^n : \sum_{i=1}^n a_i\leq 1\}$, which is a convex set in $\mathbb{R}^n$. 
To ease the presentation, we define the functions $V_j(\bm{a},P):=\mathbb{E}\left[U_j(W_j\left(\bm{a},P\right) )\right]$, for $\bm{a}\in\Delta^n$ and $P\in\mathbb{R}$,\footnote{For now, we take $P\in\mathbb{R}$. The condition $P\geq 0$ will be imposed later.} for all $j\in\mathcal{I}$.
% \begin{equation*}
% \begin{aligned}
%     V_R(\bm{a},P)&:=\mathbb{E}\left[U_R(W_R\left(\bm{a},P\right) )\right],\\
% V_i(\bm{a},P)&:=\mathbb{E}\left[U_i\left( w_i - a_i(S-\mathbb{E}[S]) - \left(\frac{P}{\mathbb{E}[S]} + \sum_{j=1}^n a_j \right)\mu_i \right)\right],\ i\in\mathcal{N}.
% \end{aligned}
% \end{equation*}
% \begin{equation*}
%     V_j(\bm{a},P):=\mathbb{E}\left[U_j(W_j\left(\bm{a},P\right) )\right],\quad j\in\mathcal{I}.
% \end{equation*}
Using Assumption \ref{Ass:EU}, we can ensure that $V_j$, $j\in\mathcal{I}$, are well-defined and satisfy certain regularity properties.
% , continuous, concave, and continuously differentiable. 
\begin{proposition}
\label{pp:V:continuous}
   Under Assumption \ref{Ass:EU}, the following statements hold:
   \begin{enumerate}
       \item The function $V_j : \Delta^n\times {\mathbb{R}} \to \mathbb{R}$ {is} well-defined for any $j\in\mathcal{I}$, i.e., $|V_j(\bm{a},P)|<\infty$;
       \item $V_j$, $j\in\mathcal{I}$, is continuous and concave on $\Delta^n\times \mathbb{R}$ and belongs to $C^1(\textup{int}(\Delta^n)\times \mathbb{R})$, where $\textup{int}(\Delta^n)$  denotes the interior of $\Delta^n$;
       \item For every $\bm a\in\Delta^n$ and every $i\in\mathcal N$, $\lim_{P\to-\infty} V_i(\bm a,P)>d_i$ and $\lim_{P\to+\infty} V_i(\bm a,P)<d_i$.
   \end{enumerate}
   
   % for any $j\in\mathcal{I},$ the function $V_j : \Delta^n\times {\mathbb{R}} \to \mathbb{R}$ {is} well-defined, i.e., $|V_j(\bm{a},P)|<\infty$. In addition, $V_j$ is continuous and concave on $\Delta^n\times \mathbb{R}$ and belongs to $C^1(\textup{int}(\Delta^n)\times \mathbb{R})$, where $\textup{int}(\Delta^n)$  denotes the interior of $\Delta^n$. 
\end{proposition}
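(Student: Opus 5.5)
The plan is to exploit that, for fixed realisation of $S$, every terminal wealth $W_j(\bm a,P)$ is affine in $(\bm a,P)$. On any set $\Delta^n\times[-M,M]$, $W_j$ can therefore be sandwiched between two random variables of exactly the form controlled by Assumption~\ref{Ass:EU}. All four claims (finiteness, continuity, concavity, $C^1$) then follow from monotonicity and concavity of $U_j$ together with dominated and monotone convergence. Part~3 follows from the same sandwich bounds combined with monotonicity of $V_i$ in $P$.

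\textbf{Sandwich bounds.} Because $S\ge 0$ and $a_R=1-\sum_i a_i\in[0,1]$, we have
\begin{equation*}
w_R-(1+\tau)S+P\le W_R(\bm a,P)\le w_R+P .
\end{equation*}
For a peer, $a_i\mathbb{E}[S]\ge 0$, $a_i\le 1$ and $\sum_j a_j\le 1$, so
\begin{equation*}
w_i-S-\Bigl(\tfrac{P}{\mathbb{E}[S]}+1\Bigr)\mu_i\le W_i(\bm a,P)\le w_i+\mathbb{E}[S]-\tfrac{P}{\mathbb{E}[S]}\mu_i .
\end{equation*}
If $|P|\le M$, these yield constants $K_-,K_+$, depending only on $M$, such that $w_j-c_jS+K_-\le W_j\le w_j+K_+$, where $c_R=1+\tau$ and $c_i=1$. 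Since $U_j$ is increasing, $|U_j(W_j)|\le |U_j(w_j-c_jS+K_-)|+|U_j(w_j+K_+)|$. This bound is integrable by Assumption~\ref{Ass:EU} with $k=l=0$, which gives part~1.

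\textbf{Continuity and concavity.} Continuity on $\Delta^n\times\mathbb{R}$ follows from dominated convergence with the majorant above. Concavity holds pointwise, because $U_j$ is concave and $(\bm a,P)\mapsto W_j$ is affine, and it is preserved by taking expectations.

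\textbf{The $C^1$ claim.} This is the step needing the most care. The partial derivatives of the integrand are $U_j'(W_j)$ times a coefficient of the form $\alpha+\beta S$. Examples are $\partial_P W_R=1$, $\partial_{a_i}W_R=(1+\tau)S$, $\partial_{a_i}W_i=-(S-\mathbb{E}[S])-\mu_i$, and $\partial_{a_k}W_i=-\mu_i$. Strict concavity makes $U_j'$ positive and decreasing, so on $|P|\le M$ we get $0<U_j'(W_j)\le U_j'(w_j-c_jS+K_-)$. Hence each partial derivative is dominated by $(\alpha+\beta S)\,U_j'(w_j-c_jS+K_-)$, which is integrable by Assumption~\ref{Ass:EU} with $l=1$ and $k\in\{0,1\}$. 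The mean value theorem then allows differentiation under the expectation. Dominated convergence, using continuity of $U_j'$, gives continuity of the partials on $\textup{int}(\Delta^n)\times\mathbb{R}$. This domination argument is the main technical point; the rest is routine.

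\textbf{Part 3.} Since $\partial_P W_i=-\mu_i/\mathbb{E}[S]<0$, the map $P\mapsto V_i(\bm a,P)$ is decreasing, so both limits exist in $[-\infty,+\infty]$.

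For $P\to+\infty$, the upper sandwich gives $V_i\le U_i\bigl(w_i+\mathbb{E}[S]-P\mu_i/\mathbb{E}[S]\bigr)$. The argument tends to $-\infty$, so the limit is at most $\inf U_i$. Because $U_i$ is strictly increasing, $\inf U_i<U_i(w_i-(1+\theta)\mu_i)=d_i$.

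For $P\to-\infty$, the lower sandwich gives $V_i\ge\mathbb{E}[U_i(w_i-S+K)]$ with $K=-(P/\mathbb{E}[S]+1)\mu_i\to+\infty$. The integrands increase pointwise to $\sup U_i$ and are bounded below by an integrable variable (the case $K=0$). By monotone convergence the limit is $\sup U_i$. Since $U_i$ is strictly increasing, $\sup U_i>d_i$, so the limit strictly exceeds $d_i$.
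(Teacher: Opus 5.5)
Your proposal is correct and follows essentially the same route as the paper: monotonicity sandwich bounds give an integrable majorant via Assumption~\ref{Ass:EU}, dominated convergence gives continuity, concavity holds pointwise, and the mean value theorem with the majorant $U_j'(w_j-c_jS+K_-)$ lets you differentiate under the expectation. The differences are minor. You spell out the peers' bounds, which the paper only calls ``similar''. For $P\to+\infty$ in part~3 you use the deterministic bound $V_i\le U_i\bigl(w_i+\mathbb{E}[S]-P\mu_i/\mathbb{E}[S]\bigr)$, where the paper uses monotone convergence; your version is slightly more direct.
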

\begin{proof}
See Appendix \ref{App:pp:V:continuous}.
\end{proof}

% \KTHNcomment{
%     \begin{remark}
%         Following the proof of Proposition \ref{pp:V:continuous}, under Assumption \ref{Ass:EU}, one can easily extend the well-definedness and continuity of $V_i$, $i\in \mathcal{N}$, to the domain $\Delta\times \mathbb{R}$. 
%     \end{remark}
% } \takwacomment{But this is included in the proof (See Appendix \ref{App:pp:V:continuous}.), although we say something like it follows similarly.} \KTHNcomment{Our statement in PP \ref{pp:V:continuous} does not include that range; we can also put it in the statement. This is used only to ensure $P_{\max}$ exists beyond $[0,\infty)$. }
% \takwacomment{I amended the domain and added a footnote below to explain why we need this domain.}

For any $\bm{a}\in \Delta^n$, let $P_{\min}(\bm{a})$ denote the premium that solves
$V_R(\bm{a},P_{\min}(\bm{a}))=d_R$. This quantity represents the minimum premium that the P2P reinsurer is willing to accept in order to provide the service to the pool. By the monotonicity of $U_R$, for every $\bm{a}\in\Delta^n$, we have $V_R(\bm{a},0)\le d_R$, whereas $V_R(\bm{a},P)>d_R$ for all sufficiently large $P>0$. Together with the continuity and strict monotonicity of $V_R(\bm{a},\cdot)$, this implies that there exists a unique $P_{\min}(\bm{a})\ge 0$ for every $\bm{a}\in\Delta^n$. Moreover, $P_{\min}(\bm{a})=0$ if and only if $\sum_{i=1}^n a_i=1$, or equivalently $a_R=0$.

    Similarly, we define the maximum price that peers would be willing to pay for the P2P scheme over purchasing the centralized full indemnity. 
%     We impose the following assumption throughout Sections \ref{sec:prelim}-\ref{sec:general_case}.
%     \begin{assumption}\label{P_max_assumption}
%     For every $\bm a\in\Delta^n$ and every $i\in\mathcal N$, $\lim_{P\to-\infty} V_i(\bm a,P)>d_i$ and $\lim_{P\to+\infty} V_i(\bm a,P)<d_i$.
%     \end{assumption}
% \noindent
For each peer $i\in\mathcal{N}$, we let $P_{\max}^i(\bm{a})$ be the root of the equation $V_i(\bm{a},P_{\max}^i(\bm{a}))= d_i$. 
Indeed, for fixed $\bm a\in\Delta^n$, the mapping
$P\mapsto V_i(\bm a,P)$ is continuous and strictly decreasing. Hence, the intermediate
value theorem and
% Assumption \ref{P_max_assumption}
the third statement in Proposition \ref{pp:V:continuous}
give existence, and strict monotonicity gives uniqueness.\footnote{Note that we consider the domain $\Delta^n\times\mathbb{R}$ instead of $\Delta^n\times[0,+\infty)$ in Proposition \ref{pp:V:continuous} to justify this claim.} Note that $P_{\max}^i(\bm{a})$ need not be non-negative for all $\bm{a}\in\Delta^n$. 
    % \KTHNcomment{(Existence and uniqueness of solution.)} \takwacomment{I overlooked this: $P_min$ should follow unconditionally from the intermediate value theorem; $P_max^i$ needs certain conditions to make each peer better off when paying nothing; but this is not an issue; see added explanation below.}
     %   However, those $\bm{a}$ do not create an issue as they are not admissible, as will be shown in Lemma \ref{Lem:ConcaveH}.
    % \begin{equation}\label{Eq:Pmaxi} 
    %   % \mathbb{E}\left[U_i\left( w_i - a_i(S-\mathbb{E}[S]) - \left(\frac{P_{\max}^i(\bm{a})}{\mathbb{E}[S]} + \sum_{j=1}^n a_j \right)\mu_i \right)\right] 
    %   V_i(\bm{a},P_{\max}^i(\bm{a}))= d_i. 
    % \end{equation}
We define $P_{\max}(\bm{a}):= \min_{i\in\mathcal{N}} P_{\max}^i(\bm{a})$, and it represents the maximum price the pool is willing to pay such that every peer is at least indifferent between joining the P2P scheme and purchasing centralized full insurance, i.e., the disagreement points of every peer. Define $H:\Delta^n \to \mathbb{R}$ by $H(\bm{a}) := P_{\max}(\bm{a})-P_{\min}(\bm{a})$. The following lemma establishes its concavity.
\begin{lemma}\label{Lem:ConcaveH}
 Under Assumption \ref{Ass:EU}, 
 % and \ref{P_max_assumption}, 
 $H$ is concave on $\Delta^n$. 
\end{lemma}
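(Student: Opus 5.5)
The plan is to show separately that $P_{\max}$ is concave and $P_{\min}$ is convex on $\Delta^n$. Then $H=P_{\max}-P_{\min}$ is concave as the sum of two concave functions. Both facts come from the same level-set argument, using that each $V_j$ is jointly concave in $(\bm{a},P)$ by Proposition \ref{pp:V:continuous} and strictly monotone in $P$.

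\emph{Concavity of each $P^i_{\max}$.} Fix $\bm a,\bm a'\in\Delta^n$ and $\lambda\in[0,1]$, and set $\bm a_\lambda=\lambda\bm a+(1-\lambda)\bm a'$ and $P_\lambda=\lambda P^i_{\max}(\bm a)+(1-\lambda)P^i_{\max}(\bm a')$. Since $\Delta^n$ is convex, $\bm a_\lambda\in\Delta^n$. Joint concavity of $V_i$ gives
\[
V_i(\bm a_\lambda,P_\lambda)\ \ge\ \lambda V_i(\bm a,P^i_{\max}(\bm a))+(1-\lambda)V_i(\bm a',P^i_{\max}(\bm a'))=d_i.
\]
The map $P\mapsto V_i(\bm a_\lambda,P)$ is strictly decreasing. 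This follows because $W_i$ is strictly decreasing in $P$ with coefficient $-\mu_i/\mathbb{E}[S]<0$ and $U_i$ is strictly increasing. Since $P^i_{\max}(\bm a_\lambda)$ is the unique root of $V_i(\bm a_\lambda,\cdot)=d_i$, the inequality above forces $P_\lambda\le P^i_{\max}(\bm a_\lambda)$. Hence $P^i_{\max}$ is concave. A pointwise minimum of finitely many concave functions is concave, so $P_{\max}=\min_i P^i_{\max}$ is concave.

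\emph{Convexity of $P_{\min}$.} The argument is symmetric. By joint concavity of $V_R$ we get $V_R(\bm a_\lambda,\lambda P_{\min}(\bm a)+(1-\lambda)P_{\min}(\bm a'))\ge d_R$. The map $P\mapsto V_R(\bm a_\lambda,P)$ is strictly increasing, because $W_R$ has coefficient $+1$ in $P$. Since $P_{\min}(\bm a_\lambda)$ is its unique root at level $d_R$, we conclude $P_{\min}(\bm a_\lambda)\le\lambda P_{\min}(\bm a)+(1-\lambda)P_{\min}(\bm a')$.

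The main point to verify is the joint concavity of $V_j$ in $(\bm a,P)$. I take this directly from Proposition \ref{pp:V:continuous}. It holds because each $W_j$ is affine in $(\bm a,P)$ for every fixed outcome of $S$; this includes the price-fairness term in \eqref{Eq:Wi}, which is linear in $(\bm a,P)$. Composing with a concave $U_j$ and taking expectations preserves concavity. The only remaining care is that the strict monotonicity used to compare with the root needs $P$ to range over all of $\mathbb{R}$. This is why the roots were defined on $\Delta^n\times\mathbb{R}$, where existence is guaranteed by part 3 of Proposition \ref{pp:V:continuous} for the peers and by the discussion preceding the lemma for $P_{\min}$.
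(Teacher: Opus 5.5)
Your proof is correct and is essentially the paper's argument. The paper notes that the superlevel sets $\{(\bm a,P): V_i(\bm a,P)\ge d_i\}$ and $\{(\bm a,P): V_R(\bm a,P)\ge d_R\}$ are convex by joint concavity, and that by strict monotonicity in $P$ they are the hypograph of $P^i_{\max}$ and the epigraph of $P_{\min}$; your two-point verification writes this out explicitly and then uses the same pointwise-minimum step for $P_{\max}$.
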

\begin{proof}
    See Appendix \ref{App:ConcaveH}.
\end{proof}

We define the admissible set of P2P contracts by $\mathcal{F}:= \left\{ (\bm{a},P) \in \Delta^n\times[0,\infty) : P_{\min}(\bm{a})\leq P \leq P_{\max}(\bm{a}) \right\}$. Thus, a contract is admissible if its aggregate premium is acceptable to both the P2P reinsurer and all peers.
% We define the set of admissible P2P contracts $\mathcal{F}$ which consists of contracts $(\bm{a},P)$ such that $P\geq 0$ and $P_{\max}(\bm{a}) \geq P_{\min}(\bm{a})$. That is, $\mathcal{F}$ is defined by 

The following statement asserts the compactness and convexity of the admissible set $\mathcal{F}$.
\begin{lemma}[Compactness and convexity of feasible set]\label{Lem:ccF}
    Under Assumption \ref{Ass:EU},
    % and \ref{P_max_assumption}, 
    % the admissible contract set
    $\mathcal{F}$ is compact and convex.
\end{lemma}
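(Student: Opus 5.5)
The plan is to describe $\mathcal{F}$ through the value functions $V_j$ rather than through $P_{\min}$ and $P_{\max}$ directly. By strict monotonicity of $V_R(\bm{a},\cdot)$ (increasing) and of each $V_i(\bm{a},\cdot)$ (decreasing), together with the definitions of $P_{\min}(\bm{a})$ and $P^i_{\max}(\bm{a})$, we have for every $\bm{a}\in\Delta^n$ and $P\in\mathbb{R}$
\begin{equation*}
P\geq P_{\min}(\bm{a})\iff V_R(\bm{a},P)\geq d_R,\qquad P\leq P_{\max}(\bm{a})\iff V_i(\bm{a},P)\geq d_i\ \text{for all } i\in\mathcal{N}.
\end{equation*}
Hence
\begin{equation*}
\mathcal{F}=\bigl\{(\bm{a},P)\in\Delta^n\times[0,\infty):\ V_j(\bm{a},P)\geq d_j\ \text{for all } j\in\mathcal{I}\bigr\}.
\end{equation*}
This is the key reformulation; everything else follows from Proposition \ref{pp:V:continuous}.

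\textbf{Convexity.} Each $V_j$ is concave on $\Delta^n\times\mathbb{R}$ by Proposition \ref{pp:V:continuous}. Therefore each upper level set $\{V_j\geq d_j\}$ is convex. Intersecting these with the convex set $\Delta^n\times[0,\infty)$ gives a convex set. (Alternatively, one could use Lemma \ref{Lem:ConcaveH}, but the level-set description avoids needing convexity of $P_{\max}$ and concavity of $P_{\min}$ separately.)

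\textbf{Closedness.} Each $V_j$ is continuous on $\Delta^n\times\mathbb{R}$ by Proposition \ref{pp:V:continuous}, so each upper level set is closed in $\Delta^n\times\mathbb{R}$. Since $\Delta^n\times[0,\infty)$ is closed in $\mathbb{R}^{n+1}$, the set $\mathcal{F}$ is closed.

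\textbf{Boundedness.} The coordinate $\bm{a}$ lies in the compact simplex, and $P\geq 0$, so it remains to bound $P$ from above uniformly in $\bm{a}$. This is the main step. I would use the pointwise bound
\begin{equation*}
W_i(\bm{a},P)\leq w_i+\mathbb{E}[S]\,\mu_i/\mu_i-P\mu_i/\mathbb{E}[S],
\end{equation*}
or more carefully note that $-a_i(S-\mathbb{E}[S])\leq a_i\mathbb{E}[S]\leq\mathbb{E}[S]$ because $S\geq0$. Since $\sum_j a_j\geq0$, this gives $W_i\leq w_i+\mathbb{E}[S]-P\mu_i/\mathbb{E}[S]$, and thus
\begin{equation*}
V_i(\bm{a},P)\leq U_i\bigl(w_i+\mathbb{E}[S]-P\mu_i/\mathbb{E}[S]\bigr).
\end{equation*}
The right-hand side is independent of $\bm{a}$ and falls below $d_i=U_i(w_i-(1+\theta)\mu_i)$ once
\begin{equation*}
P>\bar P:=\frac{\mathbb{E}[S]\bigl(\mathbb{E}[S]+(1+\theta)\mu_i\bigr)}{\mu_i},
\end{equation*}
by strict monotonicity of $U_i$. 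Hence any point of $\mathcal{F}$ satisfies $P\leq\bar P$ for any fixed peer $i$, so $\mathcal{F}$ is bounded. Closed and bounded gives compactness by Heine--Borel.

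I expect no serious obstacle. The only delicate points are the equivalence between the inequalities in $P$ and the value-function inequalities, which relies on strict monotonicity and the existence of the roots shown earlier, and obtaining an upper bound on $P$ that is uniform in $\bm{a}$; the latter uses $S\geq0$ rather than continuity of $P_{\max}$.
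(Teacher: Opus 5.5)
Your proof is correct and is essentially the paper's argument. Closedness comes from writing $\mathcal{F}$ as the intersection of $\Delta^n\times[0,\infty)$ with the closed sets $\{V_j\geq d_j\}$, and boundedness from the same pointwise estimate $W_i(\bm{a},P)\leq w_i+\mathbb{E}[S]-P\mu_i/\mathbb{E}[S]$, which yields $P\leq \mathbb{E}[S](\mathbb{E}[S]+(1+\theta)\mu_i)/\mu_i$. The only cosmetic difference is that you get convexity directly from the upper level sets of the concave $V_j$, whereas the paper cites convexity of $P_{\min}$ and concavity of $P_{\max}$ from Lemma \ref{Lem:ConcaveH}, which it proved using exactly those level sets.
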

\begin{proof}
See Appendix \ref{App:ccF}.
\end{proof}

We end this subsection by showing that there exists a strictly feasible P2P insurance contract $(\bm{a},P)\in\mathcal{F}$, i.e., 
%$V_R(\bm{a},P) > d_R$ and
$V_j(\bm{a},P)>d_j$ for all $j\in\mathcal{I}$.
%the $n+1$ inequalities corresponding to all agents' IR constraints in \eqref{Prob:hetero} are strict.
To proceed, we define the set of \textit{strictly feasible contracts} $\mathcal{K}\subset \mathcal{F}$ 
%\KTHNcomment{$\mathcal{F}?$} 
by $\mathcal{K}:= \left\{ (\bm{a},P) \in \Delta^n\times[0,\infty) : P_{\min}(\bm{a})< P < P_{\max}(\bm{a}) \right\}$. By definition of $\mathcal{K}$, all IR constraints are strictly satisfied by any $(\bm{a},P)\in \mathcal{K}$.

\begin{lemma}[Existence of strictly feasible P2P insurance contracts]\label{Lemma:SF}
    Under Assumption \ref{Ass:EU}, 
    % and \ref{P_max_assumption}, 
    there exists a strictly feasible P2P insurance contract,
    %\KTHNcomment{since we are proving $\mathcal{K}\neq\emptyset$ instead of the optimal solution, perhaps we should not relate Problem \ref{Prob:hetero} in the statement?}, \takwacomment{I agree it's not necessary; it's just about strict IRs, I deleted the terms.}
    i.e., $\mathcal{K}\neq \emptyset$, if and only if 
    {\small\begin{equation}\label{Eq:supH}
        \sup_{\bm{a}\in\Delta^n}H(\bm{a})>0. 
    \end{equation}}%
\end{lemma}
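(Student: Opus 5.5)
Both directions are short; the only real care is the constraint $P\ge 0$ in the definition of $\mathcal{K}$.

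\textbf{Forward direction ($\mathcal{K}\neq\emptyset\Rightarrow$ \eqref{Eq:supH}).} Suppose $(\bm{a},P)\in\mathcal{K}$. By definition of $\mathcal{K}$, $P_{\min}(\bm{a})<P<P_{\max}(\bm{a})$. Hence $H(\bm{a})=P_{\max}(\bm{a})-P_{\min}(\bm{a})>0$, and so $\sup_{\bm{a}\in\Delta^n}H(\bm{a})\ge H(\bm{a})>0$.

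\textbf{Reverse direction (\eqref{Eq:supH} $\Rightarrow\mathcal{K}\neq\emptyset$).} Since the supremum is strictly positive, there is some $\bm{a}^*\in\Delta^n$ with $H(\bm{a}^*)>0$. No attainment of the supremum is needed, although continuity of $H$ (via Lemma \ref{Lem:ConcaveH} or the continuity in Proposition \ref{pp:V:continuous}) would also give it. Then $P_{\max}(\bm{a}^*)>P_{\min}(\bm{a}^*)$, and I take the midpoint
\[
P^*:=\tfrac12\bigl(P_{\min}(\bm{a}^*)+P_{\max}(\bm{a}^*)\bigr).
\]
This premium is admissible:
\begin{itemize}
\item $P_{\min}(\bm{a}^*)\ge 0$, as established in the preliminaries, so $P^*> P_{\min}(\bm{a}^*)\ge 0$ and the requirement $P\ge0$ in $\mathcal{K}$ holds.
\item $P_{\min}(\bm{a}^*)<P^*<P_{\max}(\bm{a}^*)$ by construction.
\end{itemize}
Hence $(\bm{a}^*,P^*)\in\mathcal{K}$.

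\textbf{Why this gives strict feasibility.} The definition of $\mathcal{K}$ yields strict IR constraints through monotonicity in $P$.
\begin{itemize}
\item $V_R(\bm{a}^*,\cdot)$ is strictly increasing, since $U_R$ is strictly increasing and $W_R$ increases in $P$. Together with $V_R(\bm{a}^*,P_{\min}(\bm{a}^*))=d_R$, this gives $V_R(\bm{a}^*,P^*)>d_R$.
\item For each $i\in\mathcal{N}$, $V_i(\bm{a}^*,\cdot)$ is strictly decreasing, because $\mu_i>0$ and $W_i$ decreases in $P$. Since $P^*<P_{\max}(\bm{a}^*)\le P_{\max}^i(\bm{a}^*)$ and $V_i(\bm{a}^*,P^i_{\max}(\bm{a}^*))=d_i$, we get $V_i(\bm{a}^*,P^*)>d_i$.
\end{itemize}

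No step is a genuine obstacle. The nonnegativity of $P^*$ is handled by $P_{\min}\ge0$, and the existence of the roots $P_{\min}$ and $P_{\max}^i$ is supplied by Proposition \ref{pp:V:continuous}(3) and the earlier discussion.
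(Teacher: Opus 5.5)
Your proof is correct and follows the paper's argument essentially step for step. The forward direction reads off $H(\bm a)>0$ from any point of $\mathcal{K}$. The reverse direction picks some $\bm a$ with $H(\bm a)>0$ and any premium strictly between $P_{\min}(\bm a)\ge 0$ and $P_{\max}(\bm a)$; you choose the midpoint, and your extra monotonicity check simply spells out why points of $\mathcal{K}$ satisfy all IR constraints strictly, which the paper takes as immediate from the definition of $\mathcal{K}$.
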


\begin{proof}
See Appendix \ref{App:SF}.
\end{proof}

\noindent
Condition \eqref{Eq:supH} is satisfied, for example, if there exists $\bm{\hat{a}}\in\Delta^n$ such that $ V_i(\bm{\hat{a}},P_{\min}(\bm{\hat{a}}))>d_i$ for all $i\in\mathcal{N}$.
% \begin{equation*}
%     V_i(\bm{\hat{a}},P_{\min}(\bm{\hat{a}}))>d_i\quad\text{for all }i\in\mathcal{N}.
% \end{equation*}
Indeed, at $P=P_{\min}(\bm{\hat{a}})$, the P2P reinsurer's utility equals the disagreement point, whereas every peer is strictly better off. Therefore, $P_{\min}(\bm{\hat{a}})<P_{\max}(\bm{\hat{a}})$, and Condition \eqref{Eq:supH} is thus satisfied. 
In Sections \ref{Sec:MANBproperites}-\ref{sec:general_case} and \ref{sec:stability_price_fair}, Condition \eqref{Eq:supH} is imposed so that the set of strictly feasible contracts is non-empty.
% In the remainder of Section \ref{Sec:NB}, unless otherwise stated, 
% we assume Condition \eqref{Eq:supH}, so that the set of strictly feasible contracts is non-empty.

\subsection{Asymmetric Nash bargaining}\label{Sec:MANBproperites}

% {\color{red} What properties can we assert from the solution to the Nash product optimization?}

Define the auxiliary free-disposal utility space $\mathcal{U}$ as
{\small\begin{equation*}
\mathcal{U}:=\{u\in\mathbb{R}^{n+1}: \text{ there exists } (\bm{a},P)\in\mathcal{F}%(\KTHNcomment{\mathcal{F}?} 
\text{ such that } d_j\leq u_j\leq V_j(\bm{a},P) \text{ for all }j\in\mathcal{I}\}.
\end{equation*}}%
Although some elements of $\mathcal{U}$ may not be directly generated by a P2P insurance contract, the maximization of the Nash product over $\mathcal{U}$ is equivalent to the original contract problem \eqref{Prob:hetero}. Indeed, for any $u\in\mathcal{U}$, there exists $(\bm{a},P)\in\mathcal{F}$ such that $u_j\leq V_j(\bm{a},P)$ for all $j\in\mathcal{I}$. Since the Nash product is non-decreasing on $\{\bm{u}:\bm{u}\geq \bm{d}\}$, $\prod_{j\in\mathcal{I}}(u_j-d_j)^{\delta_j}\leq\prod_{j\in\mathcal{I}}(V_j(\bm{a},P)-d_j)^{\delta_j}$. Conversely, $(V_j(\bm{a},P))_{j\in\mathcal{I}}\in\mathcal{U}$ for every $(\bm{a},P)\in\mathcal{F}$. Thus Problem \eqref{Prob:hetero} and the following problem:
{\small\begin{equation}\label{Prob:NB}
\max_{u\in\mathcal{U}}\prod_{j\in\mathcal{I}}(u_j-d_j)^{\delta_j},
\end{equation}}%
have the same optimal value. Under Condition \eqref{Eq:supH}, the optimal value is positive, and any optimizer of \eqref{Prob:NB} must be generated by an actual P2P contract. 
% Note that there might be elements in $\mathcal{U}$ that are not generated by a P2P insurance contract. Nonetheless, the definition guarantees that those elements are dominated by the utility vector induced by some P2P insurance contracts.
% This also implies that finding the optimal utility vector of Problem \eqref{Prob:hetero} is equivalent to solving the following:
% and
Thus, we study the properties of the solution to Problem \eqref{Prob:NB} in the remainder of this subsection, connecting it to the theory of multi-agent bargaining.

Below, we use a pair $(\mathcal{V},\bm{d})$ to represent a bargaining problem, where $\bm{d}=(d_1,d_2,\dots,d_n,d_R)$.\footnote{{Here, we implicitly assumed that the space $\mathcal{V}$ is defined using the disagreement point $\bm{d}$, i.e., $(\mathcal{V},\bm{d}) = (\mathcal{V}(\bm{d}),\bm{d})$. }} 
%(\KTHNcomment{It looks like the definition of $\mathcal{U}$ depends on $\bm{d}$. Are we adopting the notation that $(\mathcal{U}(\bm{d}),\bm{d}) = (\mathcal{U},\bm{d})$, or can the two $d$'s be different? } \takwacomment{There's no differences and I just follows the convention which does not specify $\mathcal{U}$ depending on $\bm{d}$ but making it a pair so that they are linked.}
Let $\mathcal{B}$ denote the class of $(n+1)$-agent bargaining problems
$(\mathcal{V},\bm d)$ such that $\mathcal V\subset\{{\bm u}\in\mathbb{R}^{n+1}:{\bm u}\geq {\bm d}\}$ is
nonempty, compact, convex, and comprehensive from below relative to $\bm d$,
in the sense that if $\bm u\in\mathcal V$ and $\bm d\leq \bm v\leq \bm u$,
then $\bm v\in\mathcal V$, and such that there exists
$\bm u\in\mathcal V$ with $\bm u>\bm d$. 
% A bargaining outcome
% is a map $f:\mathcal B\to\mathbb R^{n+1}$ satisfying
% $f(\mathcal U,\bm d)\in\mathcal U$ for every
% $(\mathcal U,\bm d)\in\mathcal B$.
% a map $f:\mathcal{B}\to\mathbb{R}^{n+1}$, \KTHNcomment{where $f(\mathcal{U},\bm{d})$ is the solution of the optimization problem \eqref{Prob:NB}.} \takwacomment{We can't say this in general, this is only true with those 4 axioms shown in the next theorem.} \KTHNcomment{We need a proper definition or characterization of what $f$ really is. E.g., what a ``bargaining process" or ``optimal utility vector" really mean? What is it optimal with respect to? Or is it just an arbitrary function? I believe in the first statement of the theorem below, we are considering $f$ as the optimal solution to \eqref{Prob:NB}. If we want a ``general definition of $f$, should we say `` Let $f:\mathcal{B}\to \mathbb{R}^{n+1}$ be a bargaining outcome, i.e., $f(\mathcal{U},\bm{d}) \in \mathcal{U}$?  } \takwacomment{The comment that f is the solution to \eqref{Prob:NB} then it satisfies those 4 axioms is true, so I updated the statement in the next theorem. But for the other comments, let me think of them.}
%to characterize the bargaining process. \KTHNcomment{What does it mean and what are some properties of $f$? Just an arbitrary function or does it mean the solution of \eqref{Prob:NB}? } \takwacomment{f is defined to map the bargaining to the out of bargaining, i.e., optimal utility vector. The following theorem is meant to prove its the optimal argument of \eqref{Prob:NB} with those 4 axioms.}
The following lemma asserts the compactness and convexity of the utility space $\mathcal{U}$.

\begin{lemma}[Compactness and convexity of $\mathcal{U}$]\label{Lem:ccU}
    Under Assumption \ref{Ass:EU}, 
    % and \ref{P_max_assumption}, the free-disposal utility set 
    $\mathcal{U}$ is a compact and convex set.
\end{lemma}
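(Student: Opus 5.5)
We prove boundedness, closedness and convexity of $\mathcal{U}$ in turn, relying on Lemma \ref{Lem:ccF} (compactness and convexity of $\mathcal{F}$) and Proposition \ref{pp:V:continuous} (continuity and concavity of each $V_j$ on $\Delta^n\times\mathbb{R}$).

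\emph{Boundedness.} Since $\mathcal{F}$ is compact and each $V_j$ is continuous on $\mathcal{F}$, the maximum $M_j:=\max_{(\bm a,P)\in\mathcal{F}}V_j(\bm a,P)$ exists and is finite. Every $u\in\mathcal{U}$ then satisfies $d_j\le u_j\le M_j$ for all $j\in\mathcal{I}$, so $\mathcal{U}$ lies in a bounded box.

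\emph{Closedness.} Take $u^{(k)}\in\mathcal{U}$ with $u^{(k)}\to u$. For each $k$, choose witnesses $(\bm a^{(k)},P^{(k)})\in\mathcal{F}$ with $d_j\le u^{(k)}_j\le V_j(\bm a^{(k)},P^{(k)})$ for all $j$. By compactness of $\mathcal{F}$, we may pass to a subsequence converging to some $(\bm a,P)\in\mathcal{F}$. Letting $k\to\infty$ and using continuity of $V_j$ gives $d_j\le u_j\le V_j(\bm a,P)$, so $u\in\mathcal{U}$. The only delicate point here is that continuity must hold on all of $\mathcal{F}$, including the boundary of $\Delta^n$. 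Proposition \ref{pp:V:continuous} provides continuity on $\Delta^n\times\mathbb{R}$, not merely on the interior, so the argument goes through.

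\emph{Convexity.} Let $u,v\in\mathcal{U}$ have witnesses $(\bm a,P)$ and $(\bm a',P')$ in $\mathcal{F}$, and fix $\lambda\in[0,1]$. By convexity of $\mathcal{F}$, the point $(\bm a_\lambda,P_\lambda):=\lambda(\bm a,P)+(1-\lambda)(\bm a',P')$ lies in $\mathcal{F}$. Concavity of $V_j$, which holds because $W_j$ is affine in $(\bm a,P)$ and $U_j$ is concave, gives
\begin{equation*}
V_j(\bm a_\lambda,P_\lambda)\ge \lambda V_j(\bm a,P)+(1-\lambda)V_j(\bm a',P')\ge \lambda u_j+(1-\lambda)v_j\ge d_j .
\end{equation*}
Hence $\lambda u+(1-\lambda)v\in\mathcal{U}$ with witness $(\bm a_\lambda,P_\lambda)$.

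Non-emptiness is not part of the statement. It nevertheless holds under Condition \eqref{Eq:supH}, since $\mathcal{F}\supseteq\mathcal{K}\neq\emptyset$ by Lemma \ref{Lemma:SF}, and each $(\bm a,P)\in\mathcal{F}$ satisfies $V_j(\bm a,P)\ge d_j$, so $\bm d\in\mathcal{U}$.

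I expect the only real subtlety to be the closedness step: it depends on the witnesses staying in a compact set and on continuity of the $V_j$ up to the boundary of $\Delta^n$, both of which Lemma \ref{Lem:ccF} and Proposition \ref{pp:V:continuous} supply.
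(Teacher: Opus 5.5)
Your proposal is correct and follows the paper's proof essentially step for step. Boundedness comes from the finite maxima of the continuous $V_j$ over the compact set $\mathcal{F}$, closedness from a convergent subsequence of witnesses in $\mathcal{F}$ plus continuity of $V_j$, and convexity from the convex combination of witnesses, which stays in $\mathcal{F}$, together with concavity of $V_j$. Your remark on non-emptiness under Condition \eqref{Eq:supH} matches how the paper later checks that $(\mathcal{U},\bm{d})\in\mathcal{B}$.
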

\begin{proof}
    See Appendix \ref{App:ccU}.
\end{proof}

%\noindent
%\KTHNcomment{[Why no indent?]}
By construction, $\mathcal{U}$ is nonempty, satisfies $\mathcal U\subset\{{\bm u}\in\mathbb{R}^{n+1}:{\bm u}\geq {\bm d}\}$, and is comprehensive from below relative to ${\bm d}$. Lemma \ref{Lem:ccU} shows that $\mathcal{U}$ is compact and convex. If Condition \eqref{Eq:supH} also holds, then Lemma \ref{Lemma:SF} implies the existence of $\bm u\in\mathcal U$ with $\bm u>\bm d$. Hence, $(\mathcal U,\bm d)\in\mathcal B$.

Armed with the above preparations, we can now prove that the solution to Problem \eqref{Prob:NB} exhibits the properties stated at the beginning of Section \ref{Sec:NB}.

\begin{theorem}[Characterization of Nash bargaining solutions]\label{Thm:properties} 
%{\color{red}TB: I guess we need to assume that $\mathcal K\neq\emptyset$ or equivalently \eqref{Eq:supH}?} \takwacomment{I added it back; otherwise, we need to assume that $\mathcal{U}$ is essential then it's clumsy.}
% Suppose that Condition \eqref{Eq:supH} holds.
For $(\mathcal{V},\bm{d})\in\mathcal{B}$, define
{\small\begin{equation}
f(\mathcal{V},\bm{d}):=\argmax_{u\in\mathcal{V}}\prod_{j\in\mathcal{I}}(u_j-d_j)^{\delta_j}.
\label{eq:general_bargaining_form}
\end{equation}}%
Then, $f(\mathcal{V},\bm{d})$ is well-defined and satisfies the following four properties:
\begin{enumerate}
    \item \textit{Pareto optimality}: If $f(\mathcal{V},\bm{d})=\bm{u}$ and $\Tilde{u}_j\geq u_j$ for all $j\in\mathcal{I}$, then either $\bm{u}=\bm{\Tilde{u}}$ or $\bm{\Tilde{u}}\notin\mathcal{V}$.
    \item \textit{Strict feasibility}: 
    % If $f(\mathcal{U},\bm{d})\in\mathcal{U}$,
 $f(\mathcal{V},\bm{d})>\bm{d}$.\footnote{For any $\bm{a},\bm{b}\in\mathbb{R}^{n+1}$, $\bm{a}>\bm{b}$ means strict inequality in every component.} 
    \item \textit{Invariance to positive affine transformations (IPAT)}: {For any $\bm{\alpha} =(\alpha_i)_{i=1}^{n+1},\bm{\beta} = (\beta_i)_{i=1}^{n+1} \in \mathbb{R}^{n+1}$ with $\alpha_i>0$ for all $i=1,\dots,n+1$,} define 
    {\small\begin{equation}\label{Eq:positiveaffine}
        T:\mathbb{R}^{n+1}\to\mathbb{R}^{n+1},\;T(y_1,\dots,y_{n+1})=(\alpha_1y_1+\beta_1,\dots,\alpha_{n+1}y_{n+1}+\beta_{n+1}). 
    \end{equation}}%
    % $T:\mathbb{R}^{n+1}\mapsto\mathbb{R}^{n+1},(y_1,y_2,\dots,y_n,y_{n+1})\rightarrow(\alpha_1y_1+\beta_1,\alpha_2y_2+\beta_2,\dots,\alpha_ny_n+\beta_n,\alpha_{n+1}y_{n+1}+\beta_{n+1})$, 
%    where $\alpha_i>0$ for all $i=1,\dots,n+1$. 
Then, it holds that $T(f(\mathcal{V},\bm{d}))=f(T(\mathcal{V}),T(\bm{d}))$.
    \item \textit{Independence of irrelevant alternatives (IIA)}: For any $(\mathcal{V}',\bm{d}),(\mathcal{V},\bm{d})\in\mathcal{B}$ with $\mathcal{V}'\subset\mathcal{V}$, if $f(\mathcal{V},\bm{d})\in\mathcal{V}'$, then $f(\mathcal{V}',\bm{d})=f(\mathcal{V},\bm{d})$.
\end{enumerate}

\noindent
Conversely, if {a {bargaining outcome} $f:\mathcal{B}\to \mathbb{R}^{n+1}$ satisfies} these four properties, then there exists a  weight vector
$\bm{\delta}:=(\delta_1,\delta_2,\dots,\delta_n,\delta_R)\in\Delta^{n+1}_+:=\{\bm{x}\in(0,+\infty)^{n+1}:\sum_{i=1}^{n+1}x_i=1\}$ such that $f$ takes the form in \eqref{eq:general_bargaining_form}.
%\Timcomment{$\partial \Delta^{n+1}$ is not really a boundary of the simplex, no?}
%\takwacomment{Indeed, no as $\partial \Delta^{n+1}=\{\bm{x}\in[0,+\infty)^{n+1}:\sum_{i=1}^{n+1}x_i=1\}$ and $x_i$ can be zero. So, I changed it to $\Delta^{n+1}_+$.}
%\Timcomment{Before, $\Delta$ was $n$-dimensional.} 
% \takwacomment{I overlooked the diminsion so I use $\Delta^{n+1}$ and $\Delta^{n+1}_+$ to replace $\Delta$ and $\Delta_+$}
%\takwacomment{I now use $\Delta^n$ and $\partial\Delta^n$ then we don't need to redefine the notation just for different dimension.}
%and a unique bargaining process $f$ such that 
% \begin{equation*}
% f(\mathcal{V},\bm{d})=\argmax_{\bm{u}\in\mathcal{V}}\prod_{j\in\mathcal{I}}(u_j-d_j)^{\delta_j},\quad(\mathcal{V},\bm{d})\in\mathcal{B}.
% \end{equation*}
% where $\Delta^{n+1}_+:=\{(\delta_1,\delta_2,\dots,\delta_n,\delta_R)\in\mathbb{R}^{n+1}_{++}:\sum_{j\in\mathcal{I}}\delta_j=1\}$.
\end{theorem}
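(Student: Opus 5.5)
The forward direction is a direct verification on the class $\mathcal B$; the converse is where the real work lies. I would proceed as follows.

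\textbf{Well-definedness.} Fix $(\mathcal V,\bm d)\in\mathcal B$. Since $\mathcal V$ is compact and the Nash product $N(\bm u):=\prod_{j}(u_j-d_j)^{\delta_j}$ is continuous on $\{\bm u\ge\bm d\}$, a maximizer exists. The maximal value is positive because some $\bm u\in\mathcal V$ satisfies $\bm u>\bm d$. On the region $\{\bm u>\bm d\}$, the function $\log N=\sum_j\delta_j\log(u_j-d_j)$ is strictly concave, since every $\delta_j>0$. Any maximizer has positive value, so it lies in this region. If there were two distinct maximizers, their midpoint would lie in the convex set $\mathcal V$ and give a strictly larger value. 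Hence the maximizer is unique and $f$ is single-valued.

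\textbf{The four properties.}
\begin{itemize}
\item Strict feasibility follows from positivity of the optimal value.
\item Pareto optimality holds because $N$ is strictly increasing in each coordinate on $\{\bm u>\bm d\}$. A weakly larger $\tilde{\bm u}\ne\bm u$ lying in $\mathcal V$ would strictly increase $N$.
\item For IPAT, note that $T(\mathcal V)$ stays in $\mathcal B$ and $T(\bm u)_j-T(\bm d)_j=\alpha_j(u_j-d_j)$. The Nash product for $(T(\mathcal V),T(\bm d))$ therefore equals $\prod_j\alpha_j^{\delta_j}$ times $N(\bm u)$, a positive constant multiple, so the maximizers correspond under $T$.
\item IIA holds because a maximizer over $\mathcal V$ that lies in $\mathcal V'\subset\mathcal V$ also maximizes over $\mathcal V'$. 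Uniqueness then gives equality.
\end{itemize}

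\textbf{Converse.} I would follow Kalai (1977).
\begin{enumerate}
\item Normalize by IPAT so that $\bm d=\bm 0$.
\item Define the weights from the simplex problem. Let $\Sigma=\{\bm u\ge 0:\sum_j u_j\le 1\}$, which lies in $\mathcal B$. Set $\bm\delta:=f(\Sigma,\bm 0)$. Pareto optimality gives $\sum_j\delta_j=1$, and strict feasibility gives $\delta_j>0$, so $\bm\delta\in\Delta^{n+1}_+$.
\item Take an arbitrary $(\mathcal V,\bm 0)\in\mathcal B$ and let $\bm u^*$ be the $\bm\delta$-weighted Nash maximizer, which exists and is positive by the first part. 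The first-order condition for maximizing $\sum_j\delta_j\log u_j$ over the convex set $\mathcal V$ gives a supporting hyperplane: $\sum_j\delta_j u_j/u^*_j\le 1$ for all $\bm u\in\mathcal V$.
\item Apply the linear map $T(\bm u)=(u_j\delta_j/u^*_j)_j$, which is positive affine and fixes $\bm 0$. Then $T(\mathcal V)\subset\Sigma$ and $T(\bm u^*)=\bm\delta$.
\item Since $f(\Sigma,\bm 0)=\bm\delta\in T(\mathcal V)$, IIA gives $f(T(\mathcal V),\bm 0)=\bm\delta$. IPAT then gives $f(\mathcal V,\bm 0)=T^{-1}(\bm\delta)=\bm u^*$.
\end{enumerate}

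\textbf{Main obstacle.} The delicate step is the supporting-hyperplane inequality in step 3. I would derive it by a directional-derivative argument: for $\bm u\in\mathcal V$ and $t\in(0,1]$, the point $\bm u^*+t(\bm u-\bm u^*)$ lies in $\mathcal V$, and differentiating $\log N$ at $t=0^+$ must give a nonpositive value. Two further checks are needed to keep every transformed problem inside $\mathcal B$:
\begin{itemize}
\item $\Sigma$, $T(\mathcal V)$ and $\mathcal V$ must all be comprehensive from below relative to $\bm d$, and linear positive scaling that fixes $\bm 0$ preserves this property.
\item The IIA step needs $T(\mathcal V)\subset\Sigma$, which is exactly what the inequality in step 3 delivers.
\end{itemize}
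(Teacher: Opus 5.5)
Your proposal is correct and follows essentially the same route as the paper. The forward direction is the same direct verification, and the converse is the same Kalai-type argument: weights from the simplex, the supporting inequality $\sum_j\delta_j u_j/u_j^*\le 1$ from a one-sided directional derivative, then IPAT combined with IIA. The only cosmetic difference is the direction of the rescaling: the paper maps $\mathcal{E}(\bm p)$ onto the simplex and applies IIA to $\mathcal{X}\subset\mathcal{E}(\bm p)$, whereas you map $\mathcal{V}$ into $\Sigma$ and apply IIA there, which is equivalent.
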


\begin{proof}
    See Appendix \ref{App:properties}.
\end{proof}

\noindent
Applying Theorem \ref{Thm:properties} to the bargaining problem $(\mathcal U,\bm d)\in\mathcal B$ generated by the P2P insurance yields the solution to Problem \eqref{Prob:NB}.

\cite{kalai1977nonsymmetric} axiomatizes the two-agent asymmetric Nash bargaining solution using the four axioms stated in Theorem \ref{Thm:properties}. In the present P2P insurance setting, we apply the corresponding multi-agent weighted Nash bargaining characterization to the feasible utility set generated by P2P insurance contracts. Other characterizations of the multi-agent case under different sets of axioms are further studied in, e.g.,  \cite{peters1991characterizing,rachmilevitch2015nash}. For further exposition and a more comprehensive literature review, we refer interested readers to \cite{thomson2022axiomatic}.

The axioms in Theorem \ref{Thm:properties} have natural interpretations in P2P insurance contracting. Pareto optimality ensures that the selected agreement lies on the efficient frontier of the feasible utility set, so that no agent can be made better off without making another agent worse off. Strict feasibility reflects voluntary participation by requiring every agent to strictly improve upon its disagreement point. IPAT implies that the bargaining outcome is unaffected by independent positive affine transformations of agents' utility functions. Equivalently, changing utility scales changes only the numerical representation of the utility vector, not the underlying economic agreement. IIA means that, once the bargaining compromise has been selected, removing unchosen alternatives from the feasible set does not alter the agreement. This is natural in P2P insurance bargaining as the selected contract should not depend on the mere presence of unattractive contracts in the feasible set.
%in the background.
% Inspecting the properties of Nash bargaining, Pareto optimality and IIA are desirable in the context of P2P insurance contracting. Pareto optimality guarantees that the resulting contract lies on the Pareto frontier and thus ensures efficient risk sharing among agents. IIA implies that, economically, if the P2P reinsurer and peers have already chosen the best bargaining compromise, then removing contracts that were not chosen should not change the agreement. This suits the context of P2P insurance bargaining as its outcome should not depend on the mere presence of unattractive contracts in the background.

These properties distinguish Nash bargaining from standard fixed-weight weighted-sum optimization (WSO). While WSO is useful for characterizing Pareto-efficient allocations, it does not directly account for the distribution of gains over disagreement utilities. Thus, even if a WSO strictly improves upon the disagreement point, the magnitude and distribution of utility gains relative to it are not incorporated directly into the objective. Moreover, fixed-weight WSO is not invariant to independent positive affine transformations of individual utility functions, and therefore relies on interpersonally comparable utility units. Finally, the supporting-hyperplane characterization of Pareto optima through WSO generally yields non-negative weights; see, e.g., \cite{aase2002perspectives,ng2025pareto}. Hence, some agents may receive zero weight and become welfare-irrelevant. By contrast, the Nash bargaining characterization in Theorem \ref{Thm:properties} entails strictly positive bargaining weights for all agents, ensuring that every agent’s utility gain enters the bargaining objective. This feature is particularly appropriate for voluntary P2P insurance arrangements in small pools.

\subsection{General case}\label{sec:general_case}

In this subsection, we solve Problem \eqref{Prob:hetero} and thereby characterize the optimal P2P insurance contract. 
Under Condition \eqref{Eq:supH}, Lemma \ref{Lemma:SF} ensures that $\mathcal{K}\neq\emptyset$. Hence, there exists a P2P contract with strictly positive Nash product. By contrast, any P2P contract in $\mathcal{F}\backslash\mathcal{K}$ has at least one binding IR constraint and therefore yields a zero Nash product. Consequently, every maximizer of Problem \eqref{Prob:hetero} must belong to $\mathcal{K}$. On $\mathcal{K}$, all expected-utility surpluses are strictly positive, so maximizing the Nash product is equivalent to maximizing the log-transformed objective:
{\small    \begin{equation}\label{Eq:J}
\mathcal{J}(\bm{a},P)
:=\sum_{j\in\mathcal{I}}\delta_j\ln\left(\mathbb{E}\left[U_j\left(W_j(\bm{a},P)\right)\right]-d_j\right).
\end{equation}}%
The function $\mathcal{J}$ is well-defined precisely on $\mathcal{K}$, equivalently for contracts satisfying $P_{\min}(\bm{a})<P<P_{\max}(\bm{a})$. Since $P_{\min}(\bm{a})\geq 0$, the condition $P>P_{\min}(\bm{a})$ already implies $P>0$.
% which is well-defined if $(\bm{a},P)\in\mathcal{K}$, {and only if $P_{\min}(\bm{a})<P<P_{\max}(\bm{a})$. }
% \KTHNcomment{I suggest either adopting $(\bm{a},a_R,P)$ or $(\bm{a},P)$ all the time, but not using them interchangeably. This is because the partial derivatives will look different, especially $\partial J/\partial a_i$. Although by rearranging the system, we still get the same FOC system. If we only use $(\bm{a},P)$ in 3.1 and 3.2, we shall say so in those 2 subsections. I have no preference which one to use. It is easier to use $(\bm{a},P)$ in previous sections; and if we were to keep Corollary \ref{Cor:ZR}, it seems  better to use $(\bm{a},a_R,P)$.     }

The following lemma shows that, under the optimal P2P insurance contract $(\bm{a}^*,P^*)$, every peer bears a strictly positive share of the residual aggregate fluctuation $S-\mathbb{E}[S]$, unless full reinsurance occurs.

\begin{lemma}\label{Lem:POSai}
    Suppose that Assumption 
    % \ref{Ass:non_degenerate_S},
    \ref{Ass:EU} 
    % and \ref{P_max_assumption} 
    and Condition \eqref{Eq:supH} hold, and let $(\bm{a}^*,P^*)$ be the optimal P2P insurance contract of Problem \eqref{Prob:hetero}. Then, either of the following holds:
    \begin{enumerate}
        \item $a_R^*=1$ and $\bm{a}^*=\bm{0}$;
        \item $a_R^*<1$, and $\bm{a}^*>\bm{0}$ componentwise. In particular, if $\tau=0$, then $a_R^*>0$.
    \end{enumerate}
\end{lemma}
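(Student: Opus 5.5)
The plan is a first-order perturbation argument at the optimum $(\bm{a}^*,P^*)\in\mathcal{K}$. Since every maximizer lies in $\mathcal{K}$, the log-objective $\mathcal{J}$ is finite at $(\bm{a}^*,P^*)$ and all surpluses $V_j-d_j$ are strictly positive. Hence any feasible direction along which $\mathcal{J}$ has a strictly positive one-sided directional derivative contradicts optimality. I would compute the one-sided derivatives of $V_j$ in $a_i$ and $P$ by differentiating under the expectation, which is justified by the integrability in Assumption \ref{Ass:EU} (as in the proof of Proposition \ref{pp:V:continuous}), and this also works at boundary points of $\Delta^n$.

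For Peer $i$, write $W_i=w_i-a_i(S-\mathbb{E}[S])-c\mu_i$ with $c=P/\mathbb{E}[S]+\sum_j a_j$. At $a_i^*=0$, the derivative of $V_i$ in $a_i$, holding all else fixed, is $-\mathbb{E}[U_i'(W_i)(S-\mathbb{E}[S])]-\mu_i\mathbb{E}[U_i'(W_i)]$. When $a_i=0$, $W_i$ is deterministic, so the first term vanishes and only the price effect $-\mu_i U_i'(\cdot)$ remains. To isolate the pure risk-bearing effect, I would use a combined perturbation: increase $a_i$ by $\varepsilon$ and simultaneously adjust $P$ by $-\varepsilon\mathbb{E}[S]$, so that $c$ and $a_R$... but this changes $a_R$. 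The better choice is to shift $\varepsilon$ of share from another holder to Peer $i$, keeping $\sum_j a_j$ and $P$ fixed.

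\textbf{Case 2} ($a_R^*<1$). Some peer $k$ has $a_k^*>0$. Suppose $a_i^*=0$ for some $i$, and move $\varepsilon$ from $a_k$ to $a_i$. Then $c$, $P$ and $a_R$ are unchanged, so $V_R$ and all other peers are unaffected. Peer $i$'s derivative is $-U_i'(\text{const})\mathbb{E}[S-\mathbb{E}[S]]=0$. Peer $k$'s derivative is $+\mathbb{E}[U_k'(W_k)(S-\mathbb{E}[S])]$, which is strictly positive: $W_k$ is strictly decreasing in $S$, $U_k'$ is strictly decreasing, and $S$ is non-degenerate, so $U_k'(W_k)$ is a nondegenerate increasing function of $S$ and has positive covariance with $S$ (Chebyshev's association inequality). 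Thus $\mathcal{J}$ strictly increases to first order, a contradiction. Hence $\bm{a}^*>\bm{0}$.

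\textbf{$\tau=0$ claim.} Suppose $a_R^*=0$. Move $\varepsilon$ from Peer $k$ (with $a_k^*>0$) to the reinsurer, raising $P$ by $\varepsilon\mathbb{E}[S]$ so that $c$ is unchanged. The reinsurer's wealth is $w_R+P$, deterministic at $a_R=0$, so her derivative is $U_R'(w_R+P)(\mathbb{E}[S]-\mathbb{E}[S])=0$. Other peers are unaffected because $c$ is fixed. Peer $k$ gains $\mathbb{E}[U_k'(W_k)(S-\mathbb{E}[S])]>0$, again a contradiction.

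The main obstacle is rigor at boundary points, where $\mathcal{J}$ is only one-sided differentiable. I would handle this by showing $\varepsilon\mapsto\mathcal{J}(\cdot+\varepsilon\,\text{dir})$ is right-differentiable at $0$ with the computed derivative, via dominated convergence using Assumption \ref{Ass:EU}. I would also check that small perturbations stay inside $\mathcal{K}$, which follows from the continuity of $V_j$ and the strict IR inequalities. Finally, I would verify the covariance sign carefully; it needs $a_k>0$ only to ensure $W_k$ is nonconstant and decreasing in $S$.
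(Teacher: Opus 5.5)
Your proof is correct. The dichotomy part is the paper's own argument. You shift $\varepsilon$ of share from a peer $k$ with $a_k^*>0$ to a peer $i$ with $a_i^*=0$, holding $P^*$ and $a_R^*$ fixed. To first order only Peer $k$'s surplus moves, and its derivative is proportional to $\operatorname{Cov}(U_k'(W_k^*),S)>0$.

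For the $\tau=0$ claim you take a different route. The paper perturbs all peers proportionally ($a_R^\iota=\iota$, $a_i^\iota=(1-\iota)a_i^*$) with $P^*$ fixed, which changes every peer's price term. It must then invoke the interior first-order condition in $P$, after noting $P^*>P_{\min}(\bm a^*)=0$, to cancel the terms $\mu_i\mathbb{E}[U_i'(W_i^*)]$ against the reinsurer's $U_R'(w_R+P^*)\mathbb{E}[S]$.

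You instead build the compensation into the direction. Moving $\varepsilon$ from one peer to the reinsurer while raising $P$ by $\varepsilon\mathbb{E}[S]$ keeps $P/\mathbb{E}[S]+\sum_j a_j$ fixed, so no other peer is affected. The reinsurer's first-order change is $U_R'(w_R+P^*)\bigl(\mathbb{E}[S]-(1+\tau)\mathbb{E}[S]\bigr)=-\tau\mathbb{E}[S]U_R'(w_R+P^*)$, which vanishes at $\tau=0$. Via the $P$-FOC, the paper's computation amounts to the same compensated transfer spread proportionally over all peers rather than concentrated on one.

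Your version is more self-contained: it needs no stationarity in $P$ and exhibits a first-order Pareto improvement directly. It also makes the role of $\tau$ transparent. For $\tau>0$, optimality along your direction gives precisely $\frac{\delta_k\mathbb{E}[U_k'(W_k^*)(S-\mathbb{E}[S])]}{V_k(\bm a^*,P^*)-d_k}\le\frac{\delta_R\tau U_R'(w_R+P^*)\mathbb{E}[S]}{U_R(w_R+P^*)-d_R}$. This is the $\widehat\lambda$-bound in \eqref{Eq:ZeroReinCond} of Theorem \ref{Thm:EUsol}, whose right-hand side is zero when $\tau=0$.
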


\begin{proof}
    See Appendix \ref{App:POSai}.
\end{proof}

% The following theorem establishes the existence and uniqueness of the optimal P2P insurance contract resulting from the formulated Nash bargaining.

Lemma \ref{Lem:POSai} leaves only three possible reinsurance regimes of full reinsurance, partial reinsurance with all peer shares positive, and zero reinsurance with all peer shares positive. We characterize the optimal contract in each regime in the following theorem. The full- and zero-reinsurance regimes correspond to boundary points of $\Delta^n$; hence, the corresponding first-order conditions are understood as one-sided Karush--Kuhn--Tucker (KKT) conditions, whose directional derivatives exist under Assumption \ref{Ass:EU}.

\begin{theorem}[Existence, uniqueness, and characterization of the solution to Problem \eqref{Prob:hetero}]\label{Thm:EUsol}
    Assume that Assumption 
    % \ref{Ass:non_degenerate_S}, 
    \ref{Ass:EU} 
    % and \ref{P_max_assumption} 
    and Condition \eqref{Eq:supH} hold. Then, Problem \eqref{Prob:hetero} admits a unique solution, and we have the following cases: 
\begin{enumerate}
\item (Partial reinsurance, $a_R^*\in(0,1)$). 
%\takwacomment{This does not exclude zero reinsurance.}\KTHNcomment{What if \eqref{Eq:a0NPD} does not hold, then it means $P^*=0$ cannot be true?} \takwacomment{It's equivalent condiition so it means $a_R^*=1$ cannot be true.} 
If the optimal contract $(\bm{a}^*,P^*)$ satisfies ${0<}\sum_{i=1}^na_i^*<1$, 
% and \KTHNcomment{$a_i^*>0$ (what about now? as we are considering $a_R^*\in(0,1)$ in this case, the latter is implied by the former?)} \takwacomment{$a_i>0$ for all $i$ and $\sum_ia_i<1$ ($0<$ is not needed.) are OK. The logic is like the resulting case is $a_R^*\in(0,1)$.} \KTHNcomment{It seems $\sum a_i \in (0,1)$ is more directly related to the case condition, i.e., $a_R\in(0,1)$, and this would imply $a_i>0$ by the lemma. So I would suggest writing $0<\sum a_i < 1$, instead of $\sum a_i<1$ and $a_i>0$ if they are equivalent.} \takwacomment{I see the point now; invoking the Lemma 3.7 should do the job; we can go for your writing.} for all $i\in\mathcal{N}$, 
then $(\bm{a}^*,P^*)$ is characterized as the unique solution in $\mathcal{K}$ of the system:
{\small\begin{align}\label{Eq:HeteroFOC}
\begin{split}
&\frac{\delta_R\mathbb{E}[S]\mathbb{E}[U_R'(W_R(\bm{a}^*,P^*))]}
{\mathbb{E}[U_R(W_R(\bm{a}^*,P^*))]-d_R}
=\sum_{j=1}^n
\frac{\delta_j\mu_j\mathbb{E}\left[U_j'(W_j(\bm{a}^*,P^*))\right]}
{\mathbb{E}[U_j(W_j(\bm{a}^*,P^*))]-d_j},\\
&
\frac{\delta_i\mathbb{E}\left[U_i'\!\left(W_i(\bm{a}^*,P^*)\right)(S-\mathbb{E}[S])\right]}
{\mathbb{E}[U_i(W_i(\bm{a}^*,P^*))]-d_i} =
\frac{\delta_R\mathbb{E}\left[U_R'(W_R(\bm{a}^*,P^*)) \left((1+\tau)S-\mathbb{E}[S]\right) \right]}
{\mathbb{E}[U_R(W_R(\bm{a}^*,P^*))]-d_R},\quad \text{for all }i\in\mathcal{N}.
\end{split}
\end{align}}%
Conversely, if $(\bm{a}^*,P^*)\in\mathcal{K}$ is the solution of the system \eqref{Eq:HeteroFOC}, which satisfies $\sum_{i=1}^na_i^*<1$ and $a_i^*>0$ for all $i\in\mathcal{N}$,
% \takwacomment{But we cannot invoke Lem 3.7 in the converse, would it be better to stick the original one, $a_i>0$ for all i and $\sum_ia_i<1$?} \KTHNcomment{seems you are right} 
then it is the optimal solution to Problem \eqref{Prob:hetero}.
\item (Full reinsurance, $a_R^*=1$). Assume that $P_{\min}(\bm{0})<P_{\max}(\bm{0})$. Then, {the optimal contract is given by $(\bm{0},P^*(\bm{0}))$ if and only if the following holds:} 
%(\KTHNcomment{I think it should be $\tau$ instead of $1+\tau$)}
%\takwacomment{It should be $1+\tau$ and the main argument is $U_i'(W_i(0,P(0)))$ is deterministic.}
   {\small     \begin{equation}\label{Eq:a0NPD}
            \frac{\delta_R (1+\tau)\mathbb{E}[U_R'(W_R(\bm{0},P^*(\bm{0})))S]}
{\mathbb{E}[U_R(W_R(\bm{0},P^*(\bm{0})))]-d_R}
\leq
\sum_{i=1}^n
\frac{\delta_i\mu_iU_i'(W_i(\bm{0},P^*(\bm{0})))}
{U_i(W_i(\bm{0},P^*(\bm{0})))-d_i},
% +
% \frac{\delta_k\mathbb{E}\!\left[U_k'(W_k(\bm{0},P^*(\bm{0})))(S-\mathbb{E}[S])\right]}
% {\mathbb{E}[U_k(W_k(\bm{0},P^*(\bm 0)))]-d_k}
% , \text{ for all } k\in \mathcal{N}.
    \end{equation}}%
%following holds:
   %\takwacomment{iff} the optimal contract is given by  $(\bm{0},P^*(\bm{0}))$,
   where $P^*(\bm{0})\in(P_{\min}(\bm{0}),P_{\max}(\bm{0}))$ is the unique solution of
  {\small  \begin{equation}
    \label{eq:P*0:equation}
        \frac{\delta_R\mathbb{E}[U_R'(W_R(\bm{0},P^*(\bm{0})))]}
{\mathbb{E}[U_R(W_R(\bm{0},P^*(\bm{0})))]-d_R}
=\sum_{i=1}^n
\frac{\delta_i\mu_i U_i'(W_i(\bm{0},P^*(\bm{0})))}
{\mathbb{E}[S]\left(U_i(W_i(\bm{0},P^*(\bm{0})))-d_i\right)}.
    \end{equation}}%
\item (Zero reinsurance, $a_R^*=0$). Assume that $\tau>0$.\footnote{Zero reinsurance cannot be optimal when $\tau=0$ by Lemma \ref{Lem:POSai}.} The contract $(\bm{a}^*,P^*)\in\mathcal{K}$ with  $\sum_{i=1}^n{a}^*_i=1$ is optimal for Problem \eqref{Prob:hetero} if and only if there exists {$\widehat{\lambda}>0$} such that the following holds: 
% (\KTHNcomment{I replace $\lambda$ with $-\hat{\lambda}$ which seems more natural)}
{\small\begin{align}\label{Eq:ZeroReinCond}
\begin{split}
    \widehat{\lambda} &= \frac{\delta_i\mathbb{E}[U_i'(W_i(\bm{a}^*,P^*))(S-\mathbb{E}[S]) ]}{\mathbb{E}\left[U_i\left(W_i(\bm{a}^*,P^*) \right)\right]-d_i},  \quad\text{for all } i \in \mathcal{N}, \\
    \widehat{\lambda} &\leq    \frac{\delta_R \tau U_R'(w_R+P^*)\mathbb{E}[S]}{U_R(w_R+P^*)-d_R}, \qquad 
          \frac{\delta_RU_R'(w_R+P^*) }{U_R(w_R + P^*)-d_R } - \sum_{j=1}^n\frac{\delta_j\mu_j\mathbb{E}[U_j'(W_j(\bm{a}^*,P^*))] }{\mathbb{E}[S] \left(\mathbb{E}[U_j(W_j(\bm{a}^*,P^*))]-d_j\right)}=0. 
\end{split}  
    \end{align}}%
\end{enumerate}
\end{theorem}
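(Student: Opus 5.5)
The proof has three parts: existence, uniqueness, and the first-order characterization in each reinsurance regime.

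\textbf{Existence.} Under Condition \eqref{Eq:supH}, Lemma \ref{Lemma:SF} gives $\mathcal{K}\neq\emptyset$. Lemma \ref{Lem:ccF} makes $\mathcal{F}$ compact. The Nash product is continuous on $\mathcal{F}$ by Proposition \ref{pp:V:continuous}, so a maximizer exists. It is strictly positive on $\mathcal{K}$ and zero on $\mathcal{F}\setminus\mathcal{K}$, so every maximizer lies in $\mathcal{K}$ and maximizes $\mathcal{J}$ in \eqref{Eq:J}.

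\textbf{Uniqueness.} Each $V_j$ is concave on $\Delta^n\times\mathbb{R}$ and $\ln$ is increasing and concave, so $\mathcal{J}$ is concave on the convex set $\mathcal{K}$. Strict concavity is not immediate, because each $W_j$ is affine in $(\bm{a},P)$, but only through a few directions. I would argue by contradiction. Suppose two distinct maximizers $x^0\neq x^1$ exist. Concavity forces equality along the segment, so for every $j$, $V_j$ is affine on the segment.

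Strict concavity of $U_j$ then forces $W_j(x^0)=W_j(x^1)$ a.s. for every $j$, since otherwise Jensen's inequality is strict. Now compare coefficients, using that $S$ is non-degenerate:
\begin{itemize}
\item the coefficient of $S$ in $W_R$ gives $\sum_i a_i^0=\sum_i a_i^1$;
\item the constant term in $W_R$ then gives $P^0=P^1$;
\item for each peer, the coefficient of $S-\mathbb{E}[S]$ in $W_i$ gives $a_i^0=a_i^1$.
\end{itemize}
This contradicts $x^0\neq x^1$, so the maximizer is unique.

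\textbf{Characterization.} By Lemma \ref{Lem:POSai}, the maximizer is either $\bm{a}^*=\bm{0}$, or satisfies $\bm{a}^*>\bm{0}$ with $a_R^*<1$. The latter splits into partial reinsurance ($\sum_i a_i^*<1$) and zero reinsurance ($\sum_i a_i^*=1$).

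Differentiation under the expectation is justified by Assumption \ref{Ass:EU} via dominated convergence, as in Proposition \ref{pp:V:continuous}. Denote $u_j := \mathbb{E}[U_j(W_j)]-d_j$. The derivatives are:
\begin{itemize}
\item $\partial_P W_R=1$ and $\partial_P W_i=-\mu_i/\mathbb{E}[S]$;
\item $\partial_{a_k} W_R=(1+\tau)S$;
\item $\partial_{a_k} W_i=-\mathbb{1}_{i=k}(S-\mathbb{E}[S])-\mu_i$.
\end{itemize}

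\emph{Partial reinsurance.} The point is interior in $\bm{a}$ and in $P$ relative to $\mathcal{K}$. Setting $\partial_P\mathcal{J}=0$ gives the first equation of \eqref{Eq:HeteroFOC} after multiplying by $\mathbb{E}[S]$. For $\partial_{a_k}\mathcal{J}=0$, the $-\mu_i$ terms equal $\mathbb{E}[S]$ times the $P$-derivative terms. Substituting the $P$-equation turns them into $-\delta_R\mathbb{E}[S]\mathbb{E}[U_R']/u_R$, which produces the second line of \eqref{Eq:HeteroFOC}. The converse holds because $\mathcal{J}$ is concave, so stationarity at an interior point of the convex domain implies a global maximum, unique by the argument above.

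\emph{Full reinsurance.} At $\bm{a}=\bm{0}$, the $P$-stationarity condition \eqref{eq:P*0:equation} must hold, since $P$ remains interior. A unique root exists because the map is strictly decreasing in $P$: the left side is decreasing and the right side increasing as $P$ moves, with $\pm\infty$ limits at $P_{\min}$ and $P_{\max}$. Optimality at the vertex $\bm{0}$ of $\Delta^n$ is equivalent to nonpositive one-sided derivatives in the feasible directions $e_k$.

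The own-peer term $\mathbb{E}[U_k'(W_k)(S-\mathbb{E}[S])]$ vanishes because $W_k(\bm{0},P)$ is deterministic. Hence every $\partial_{a_k}\mathcal{J}\le 0$ reduces to the single inequality \eqref{Eq:a0NPD}. Concavity of $\mathcal{J}$ makes these KKT conditions sufficient as well as necessary.

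\emph{Zero reinsurance.} Impose the constraint $\sum_i a_i\le1$ with a multiplier $\lambda\ge0$, keeping the KKT sign conventions. At $a_R=0$, $W_R=w_R+P$ is deterministic, which gives:
\begin{itemize}
\item $P$-stationarity: the last equation of \eqref{Eq:ZeroReinCond};
\item $a_k$-stationarity: $\partial_{a_k}\mathcal{J}=\lambda$ for all $k$, because all $a_k>0$.
\end{itemize}
Substituting the $P$-equation as before, $\partial_{a_k}\mathcal{J}$ becomes
\[
\frac{\delta_R\tau\mathbb{E}[S]U_R'}{u_R}-\frac{\delta_k\mathbb{E}[U_k'(S-\mathbb{E}[S])]}{u_k}.
\]
Setting $\widehat\lambda$ equal to the common value of the second term, the condition $\lambda\ge0$ becomes the inequality in \eqref{Eq:ZeroReinCond}. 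Positivity of $\widehat\lambda$ follows because $\partial_{a_k}\mathcal{J}\ge0$ must hold, together with $\tau>0$.

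I expect the main obstacle to be this boundary KKT bookkeeping: keeping the sign conventions and one-sided derivatives consistent, in particular the sign of $\widehat\lambda$. I will verify it carefully using concavity of $\mathcal{J}$, so that the KKT conditions are both necessary and sufficient.
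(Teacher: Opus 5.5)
Your overall route matches the paper's: compactness for existence, strict Jensen plus non-degeneracy of $S$ for uniqueness, and concavity-based KKT conditions in the three regimes sorted by Lemma \ref{Lem:POSai}. The partial- and full-reinsurance parts are fine, and you even supply the monotonicity argument for the unique root $P^*(\bm{0})$ that the paper leaves implicit.

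The gap is in the zero-reinsurance case, at the claim $\widehat{\lambda}>0$. You correctly derive
\[
\partial_{a_k}\mathcal{J}=\frac{\delta_R\tau\mathbb{E}[S]U_R'(w_R+P^*)}{U_R(w_R+P^*)-d_R}-\widehat{\lambda}.
\]
Hence the condition $\partial_{a_k}\mathcal{J}=\lambda\ge 0$ is exactly the upper bound $\widehat{\lambda}\le \delta_R\tau\mathbb{E}[S]U_R'(w_R+P^*)/(U_R(w_R+P^*)-d_R)$. The assumption $\tau>0$ only makes that upper bound positive. Neither fact gives a lower bound on $\widehat{\lambda}$, so your stated justification is a non sequitur. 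As written, the ``only if'' direction of case 3, which asserts existence of a \emph{positive} $\widehat{\lambda}$, is not established.

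What you need is an independent sign argument for the peers' term:
\begin{enumerate}
\item Since $\sum_i a_i^*=1$, Lemma \ref{Lem:POSai} puts you in its second alternative, so $a_k^*>0$.
\item Then $W_k(\bm{a}^*,P^*)=w_k-a_k^*(S-\mathbb{E}[S])-\bigl(1+P^*/\mathbb{E}[S]\bigr)\mu_k$ is strictly decreasing in $S$.
\item Because $U_k'$ is strictly decreasing (strict concavity), $U_k'(W_k(\bm{a}^*,P^*))$ is a strictly increasing function of $S$.
\item Non-degeneracy of $S$ then gives $\mathbb{E}[U_k'(W_k(\bm{a}^*,P^*))(S-\mathbb{E}[S])]=\operatorname{Cov}(U_k'(W_k(\bm{a}^*,P^*)),S)>0$.
\item On $\mathcal{K}$ the denominator $\mathbb{E}[U_k(W_k(\bm{a}^*,P^*))]-d_k$ is positive, so $\widehat{\lambda}>0$.
\end{enumerate}
This is the same covariance argument used in the proof of Lemma \ref{Lem:POSai}, and it is what the paper invokes here.

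It also shows the logic runs opposite to what you wrote. Positivity of $\widehat{\lambda}$ holds regardless of $\tau$, and it is the chain $0<\widehat{\lambda}\le \delta_R\tau\mathbb{E}[S]U_R'(w_R+P^*)/(U_R(w_R+P^*)-d_R)$ that forces $\tau>0$. That is why zero reinsurance is impossible when $\tau=0$.
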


\begin{proof}
See Appendix \ref{App:EUsol}.
\end{proof}

Under partial reinsurance, the first equation in \eqref{Eq:HeteroFOC} balances the P2P reinsurer's weighted marginal gain from receiving an additional unit of premium against the aggregate weighted marginal utility loss borne by the peers who finance that premium. The second set of equations in \eqref{Eq:HeteroFOC} characterizes the optimal allocation of residual aggregate risk. It states that each peer's weighted marginal response to the residual $S-\mathbb{E}[S]$ equals the P2P reinsurer's associated weighted marginal response to a change in the reinsured aggregate loss.
% Under the optimal contract, the last $n$ expressions in \eqref{Eq:HeteroFOC} indicate that the weighted marginal utility responding to the aggregate residual fluctuation (i.e., $S-\mathbb{E}[S]$) is the same among peers. In other words, if one peer had a strictly lower weighted marginal utility than the other, one could shift a piece of the residual fluctuation towards the better improved peer, thereby increasing the Nash product.

Economically, if one peer had a lower weighted marginal cost of bearing an additional share of the residual fluctuation than another peer, then transferring a small amount of residual risk toward that peer would increase the Nash objective. At the optimum, such marginal improvements are exhausted. The boundary conditions in \eqref{Eq:a0NPD} and \eqref{Eq:ZeroReinCond} have analogous interpretations: in the full-reinsurance case, a marginal reduction in reinsurance cannot improve the Nash-product objective, whereas in the zero-reinsurance case, a marginal increase in reinsurance cannot improve it.

% From the premium-allocation perspective, the first equation in
% \eqref{Eq:HeteroFOC} balances the P2P reinsurer's weighted marginal utility gain from an additional unit of premium with the aggregate weighted marginal utility loss of the peers from paying that premium. The second set of equations balances each peer's weighted marginal utility response to the aggregate residual fluctuation with the P2P reinsurer's weighted marginal utility response to changes in the reinsured aggregate risk.

\begin{remark}[Tractability under exponential utility and zero deadweight loss]
% Unlike the homogeneous setting, adopting exponential utility does not yield closed-form risk bearing for agents. This is in stark contrast to a similar Nash bargaining work \cite{aase2009nash}, which studies the expected exponential utility setting with a linear sharing rule.
As shown later in Proposition \ref{Prop:NoFairExpSol}, when the price-fairness condition is removed and agents are equipped with exponential utilities, the zero-deadweight-loss case $\tau=0$ admits a closed-form risk-sharing rule; see \eqref{Eq:tau0aRa}. This rule has the familiar inverse-risk-aversion structure of \cite{aase2009nash}. In contrast, under price fairness, the side payments are pinned down by the common-loading condition, which removes the free transfer parameter that would otherwise yield a closed-form characterization of $\bm{a}^*$. Consequently, in the general heterogeneous case of Theorem \ref{Thm:EUsol}, the optimal risk-sharing vector must be characterized through the nonlinear first-order system.
% It can be proven that without the price-fairness condition, adopting exponential utility and zero deadweight loss, $\tau=0$, leads to a closed-form risk-sharing rule $\bm{a}^*$ (see Proposition \ref{Prop:NoFairExpSol}, and particularly \eqref{Eq:tau0aRa}) that is similar to \cite{aase2009nash},  which can be treated as the Nash bargaining solution in the absence of a reinsurer and the price-fairness condition in our context.
%\KTHNcomment{(how many agents they were considering? we may want to mention the difference with our Proposition \ref{Prop:NoFairExpSol} briefly.)} \takwacomment{They consider $n$-agent case can be treated no-reinsurance and no price fairness case in our context.}
%, which is not the case in Theorem \ref{Thm:EUsol}. 
% The reason is that the price-fairness condition renders all side payments tractable up to the total premium $P$ at the expense of the tractability of $\bm{a}$. 
% {The reason for the lack of a closed-form expression in Theorem \ref{Thm:EUsol} is that the price-fairness condition reduces the flexibility in choosing the side payments, eliminating the free parameter that would allow a closed-form characterization of $\bm{a}$.   }
\end{remark}

\subsection{Tractable cases and benchmarks}

The general characterization in Theorem \ref{Thm:EUsol} is expressed through nonlinear first-order conditions. To obtain more explicit insights and to prepare for the numerical analysis, we now study tractable cases and benchmarks. Section \ref{sec:exp_utility_wo_price_fairness} removes the price-fairness condition under exponential utilities, yielding a benchmark that isolates the role of price fairness. Sections \ref{sec:homo_general_utility} and \ref{sec:homo_exp_utility} return to the price-fairness model and impose homogeneity among peers, first under general utility and then under exponential utility.

\subsubsection{Exponential-utility benchmark without price fairness}\label{sec:exp_utility_wo_price_fairness}

In this subsection, we consider a benchmark Nash-bargaining problem in which the price-fairness condition in Definition \ref{Def:PriceFairness} is removed. Under exponential utilities, this benchmark admits a tractable characterization and will be used in the numerical comparison in Section \ref{sec:numerical_price_fairness}. We set $U_R(x)=-e^{-\gamma_Rx}$ and $U_i(x)=-e^{-\gamma_ix}$ for all $i=1,\ldots,n$, {where $\gamma_i,\gamma_R>0$ represent the risk aversion of Peer $i$ and the P2P reinsurer, respectively.} 
{Also, 
% in the remainder of this subsection, 
we impose the following assumption, which will be used in Sections
\ref{sec:exp_utility_wo_price_fairness},
\ref{sec:stability_wo_price_fairnes}, and
\ref{sec:numerical_price_fairness}.
%\takwacomment{Then, we don't to state it in the following mathematical propositions?}
 \begin{assumption}
    \label{ass:mgf}
      {  Let $\bar{t}:=\sup\{t\geq 0: M_S(t) < \infty\}$, where $M_S(t)$ is the moment generating function (MGF) of $S$. We assume that $\bar{t}>\max\{ \gamma_R(1+\tau) , \overline{\gamma}\}$, where $\overline{\gamma}:=(\sum_{i=1}^n\gamma_i^{-1})^{-1}$.  }
    \end{assumption}
    
    % \begin{assumption}
    % \label{ass:mgf}
    %    There exists $\overline{t}>\gamma_R(1+\tau)\vee \max_{i\in\mathcal{N}}\gamma_i$ such that the moment generating function (MGF) $M_S(t)$ of $S$ is finite for all $t \leq \overline{t}$. \KTHNcomment{In particular, we set $\bar{t}:=\sup\{t\geq 0: M_S(t) < \infty\}$ in the sequel. }
    % \end{assumption}
    
\begin{remark}
\label{rmk:mgf}
Since $S\geq 0$, the moment-generating function $M_S(t)$ is well-defined
for all $t\leq 0$. In addition, for any $k\geq 0$ and $\varepsilon>0$,
there exists $C_{k,\varepsilon}>0$ such that $    S^k e^{tS} \leq C_{k,\varepsilon} e^{(t+\varepsilon)S}.$
% \[
%     S^k e^{tS} \leq C_{k,\varepsilon} e^{(t+\varepsilon)S}.
% \]
Hence, if $M_S$ is well-defined on $(-\infty,\bar{t})$, then $M_S$
is smooth on this interval.

With exponential utility, the differentiability, monotonicity,
and strict concavity required in Assumption \ref{Ass:EU} are automatically
satisfied. Moreover, 
under Assumption \ref{ass:mgf}, the general
integrability requirements in Assumption \ref{Ass:EU} are replaced by the
MGF condition, which is not stronger than that in Assumption \ref{Ass:EU} over the entire contract domain, since it does not require $M_S(\gamma_i)<+\infty$ for all $i\in\mathcal{N}$. Instead, it ensures that $M_S$, $m_S$ and the required derivatives are well-defined at all arguments arising in Proposition \ref{Prop:NoFairExpSol}.
% which is stronger and ensures that the MGF and its
% derivatives are well defined at all arguments used below.
\end{remark}

Denote the vector of peers' payments by $\bm{b}:=(b_1,\dots,b_n)$. After dividing each utility surplus by the positive factor 
$e^{-\gamma_j w_j}$, $j\in\mathcal{I}$, which does not affect the maximizer of the Nash product,
the optimization problem can be expressed as
{\small\begin{align}\label{Prob:NoFairness}
\begin{split}
\max_{P\geq0,\, a_R\in[0,1],\bm{a}\in[0,1]^n,\, \bm{b}\in\mathbb{R}^n}&\left(1-\mathbb{E}\left[e^{\gamma_R((1+\tau)a_RS-P)}\right]\right)^{\delta_R}\prod_{i=1}^n\left(e^{\gamma_i(1+\theta)\mu_i}-\mathbb{E}\left[e^{\gamma_i(a_iS+b_i)}\right]\right)^{\delta_i}\\
s.t.\ & \mathbb{E}\left[e^{\gamma_R((1+\tau)a_RS-P)}\right]\leq 1,\quad e^{\gamma_i(1+\theta)\mu_i}
\geq \mathbb{E}\left[e^{\gamma_i(a_iS+b_i)}\right]\quad\text{for all }i\in\mathcal{N},\\ &a_R+\sum_{i=1}^na_i=1,\quad\sum_{i=1}^nb_i=P.
\end{split}
\end{align}}%
%\takwacomment{$a_R,a_i\geq 0$ are imposed under $\max$.}
Note that under exponential utility, the optimization problem is independent of the agents' initial wealth levels. 
% and its well-posedness is guaranteed by Assumption \ref{Ass:EU}. 
{In the remainder of this subsection, we shall denote by $\widetilde{\mathcal{F}}$ the set of all feasible contracts $(\bm{a},a_R,\bm{b},P)$ that satisfy the constraints in Problem \eqref{Prob:NoFairness}, and by $\widetilde{\mathcal{K}}\subset\widetilde{\mathcal{F}}$ the set of strictly feasible contracts with the additional requirement that $\mathbb{E}[e^{\gamma_R((1+\tau)a_RS-P)}]< 1$ and  $e^{\gamma_i(1+\theta)\mu_i}
> \mathbb{E}\left[e^{\gamma_i(a_iS+b_i)}\right]$ for all $i\in \mathcal{N}$.   }
%\KTHNcomment{(Indeed, we can drop $P,a_R$ as before, so a contract is reduced to the pair $(\bm{a},\bm{b})$)} \takwacomment{Let me check if I can rewrite it, it seems replacing $a_R$ by $1-\sum_ia_i$ and $P$ by $\sum_ib_i$ should work.} \KTHNcomment{Yes it should work but we have to write $C_R(1-\sum a_i)$ which is quite lengthy; so either way is fine. } \takwacomment{We can use $C_R(\bm{a})$ although $m_S(...)$ is still lengthy. }

To prepare for the main result of this section, define
$m_S:(-\infty,\bar t)\to[0,+\infty)$, $C_R:[0,1]\to[0,+\infty)$, and
$C_i:[0,1]\to[0,+\infty]$, $i\in\mathcal{N}$, 
%(\KTHNcomment{or we can write $C_i:[0, \bar{t}/\gamma_i) \to[0,\infty)$ if we don't need to extend the domain)} \takwacomment{(16) considers $\min_{\bm{a}\in\Delta}$ so maybe it's better to extend it; otherwise, if it's finite in between $1$ and $\bar{t}/\gamma_i$, then the subsequent results are not solid.} by
{\small\begin{equation}\label{eq:m:C}
    m_S(t):=\frac{M_S'(t)}{M_S(t)},\qquad
    C_R(a_R):=\frac{\ln M_S(\gamma_R(1+\tau)a_R)}{\gamma_R},\qquad
    C_i(a_i):=\frac{\ln M_S(\gamma_i a_i)}{\gamma_i},\quad i\in\mathcal{N},
\end{equation}}%
with the convention that $C_i(a_i)=+\infty$ when $M_S(\gamma_ia_i)=+\infty$.\footnote{The functions $C_i$, $i\in\mathcal{N}$, might not be finite throughout their stated domain under Assumption \ref{ass:mgf}, which, however, will not affect the subsequent mathematical results.}
By Assumption \ref{ass:mgf} and Remark \ref{rmk:mgf}, $m_S$ and $C_R$ are
well-defined on their stated domains. 

{The following statement asserts the non-emptiness of $\widetilde{\mathcal{K}}$.
\begin{lemma}
    \label{lem:NoFair:strict:feasible}
    % Under the exponential-utility specification, 
    Suppose that Assumption \ref{ass:mgf} holds. Then, $\widetilde{\mathcal{K}}\neq \emptyset$ if 
    the following condition is satisfied: 
   {\small \begin{equation}\label{Eq:IRCond}
        % (1+\theta)\mathbb{E}[S]>\min_{(\bm{a},a_R)\in \Delta^{n+1} }\left\{ C_R(a_R)+\sum_{i=1}^nC_i(a_i)\right\},
        (1+\theta)\mathbb{E}[S]>\min_{\bm{a}\in \Delta^{n} }\left\{ C_R\left(1-\sum_{i=1}^na_i\right)+\sum_{i=1}^nC_i(a_i)\right\}.
    \end{equation}}% %\Timcomment{Don't we want to take a min of a closed set, while $\Delta_+^{n+1}$ is open.} \takwacomment{I changed it back to the closed set $\Delta^{n+1}$}
%    \Timcomment{But dont we have $\sum a_i \leq 1$ instead of $\sum a_i=1$.} \takwacomment{I am not sure what you mean? We have $\sum_{i=1}^na_i\leq1$ and $\sum_{i\in\mathcal{I}}a_i=1$.} \Timcomment{I meant, that $\Delta^{n+1}$ includes the condition that $a_R+\sum_{i=1}^na_i\leq1$, but we actually need $a_R+\sum_{i=1}^na_i=1$?} \takwacomment{I developed another version to avoid the notational problem.}
% where $\partial \Delta^{n+1}:=\{ (a_i)_{i=1}^{n+1}\in [0,1]^{n+1}:\sum_{i=1}^{n+1}a_i = 1\}$. 
% Then, $\widetilde{\mathcal{K}}\neq \emptyset$.  
\end{lemma}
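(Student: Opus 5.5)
Because the premium vector $\bm{b}$ is unconstrained apart from $\sum_i b_i=P$, the strict individual-rationality inequalities decouple once the risk shares are fixed. The plan is to pick a minimizer of the right-hand side of \eqref{Eq:IRCond} and then choose $P$ and $\bm{b}$ to split the slack among all agents.

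First, for fixed shares, rewrite each IR constraint in certainty-equivalent form. The reinsurer's constraint $\mathbb{E}[e^{\gamma_R((1+\tau)a_RS-P)}]<1$ is equivalent to $P>C_R(a_R)$, since $\mathbb{E}[e^{\gamma_R(1+\tau)a_RS}]=M_S(\gamma_R(1+\tau)a_R)$. Peer $i$'s constraint $\mathbb{E}[e^{\gamma_i(a_iS+b_i)}]<e^{\gamma_i(1+\theta)\mu_i}$ is equivalent to $b_i<(1+\theta)\mu_i-C_i(a_i)$. This requires $C_i(a_i)<\infty$, and then $\ln$ is applied to both sides.

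Second, we need a minimizer $\bm{a}^\circ\in\Delta^n$ with finite objective value. Since \eqref{Eq:IRCond} is stated as a minimum, I take it as attained. If one worries about this, it suffices to note that the strict inequality with the infimum yields some $\bm{a}^\circ$ at which the strict inequality holds, and only that is needed. Because $(1+\theta)\mathbb{E}[S]$ is finite, the value at $\bm{a}^\circ$ is finite, so every $C_i(a_i^\circ)<\infty$.

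Third, set $a_R^\circ=1-\sum_i a_i^\circ$ and let
\[
\varepsilon:=(1+\theta)\mathbb{E}[S]-C_R(a_R^\circ)-\sum_{i=1}^nC_i(a_i^\circ)>0 .
\]
Define $b_i:=(1+\theta)\mu_i-C_i(a_i^\circ)-\varepsilon/(n+2)$ and $P:=\sum_i b_i$. Then
\[
P=C_R(a_R^\circ)+\varepsilon-\frac{n\varepsilon}{n+2}=C_R(a_R^\circ)+\frac{2\varepsilon}{n+2}>C_R(a_R^\circ).
\]
Each peer's constraint holds strictly by construction.

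It remains to check the constraint $P\ge0$. By Jensen's inequality, $C_R(a_R)\ge(1+\tau)a_R\mathbb{E}[S]\ge0$, so $P>0$. Thus $(\bm{a}^\circ,a_R^\circ,\bm{b},P)$ satisfies the market-clearing condition and every strict inequality, and hence lies in $\widetilde{\mathcal{K}}$.

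The main obstacle is the attainment or finiteness issue in the second step, since $C_i$ may take the value $+\infty$. It is resolved by working with a point where the strict inequality holds rather than with an exact minimizer. Everything else reduces to choosing the transfers $\bm{b}$.
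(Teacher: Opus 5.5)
Your proof is correct and follows essentially the same route as the paper: rewrite the strict IR constraints as $P>C_R(a_R)$ and $b_i<(1+\theta)\mu_i-C_i(a_i)$, take a point where \eqref{Eq:IRCond} holds with positive slack, and split that slack through the side payments. The paper gives each peer $D/(2n)$ instead of your $\varepsilon/(n+2)$, and it leaves implicit the two checks you make explicit, namely finiteness of the $C_i$ at the chosen point and $P\ge 0$ via Jensen.
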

    \begin{proof}
        See   Appendix \ref{App:NoFair:strict:feasible}.
    \end{proof}
}

The strict feasibility enables Problem \eqref{Prob:NoFairness} to be transformed into
{\small\begin{align}\label{Prob:auxNF}
\begin{split}
    &\max_{{(\bm{a},a_R,\bm{b},P)\in \widetilde{\mathcal{K}}} } \delta_R\ln\left(1-e^{\gamma_R(C_R(a_R)-P)}\right)+\sum_{i=1}^n\delta_i\ln\left(1-e^{\gamma_i(C_i(a_i)+b_i-(1+\theta)\mu_i)}\right).
 %   s.t.\ &\sum_{i=1}^nb_i=P,\quad a_R+\sum_{i=1}^na_i=1, \ \KTHNcomment{a_i,a_R\geq 0, \ i\in\mathcal{N}.}
\end{split}
\end{align}}%
%\takwacomment{It seems $\widetilde{\mathcal{K}}$ incoporate all constraints so we don't need to repeat constraints in the second line?}
The following proposition characterizes the unique maximizer of Problem \eqref{Prob:auxNF}
in the regime where the risk-sharing proportions are governed by the system \eqref{Eq:OptNFaRa}. Condition \eqref{eq:Cond:exp:no:fairness} is precisely the condition ensuring that this system admits a solution with $\bm{a}^*\in(0,1)^n$ and $a^*_R\in[0,1)$. If
\eqref{eq:Cond:exp:no:fairness} fails, the optimum lies on the zero-reinsurance boundary and must be characterized by the corresponding boundary KKT conditions.

\begin{proposition}\label{Prop:NoFairExpSol}
  Suppose that Assumption
  % \ref{Ass:non_degenerate_S}, 
  \ref{ass:mgf}, Condition \eqref{Eq:IRCond}, and the following hold: 
  %\takwacomment{They are same.}
   {\small  \begin{equation}
    \label{eq:Cond:exp:no:fairness}
    (1+\tau)\mathbb{E}[S]  \leq m_S(\overline{\gamma}).
    % \left( \frac{1}{\sum_{i=1}^n\gamma_i^{-1}} \right).
    \end{equation}}%
  Then, the optimal contract with strictly satisfied IR constraints in
Problem \eqref{Prob:auxNF} is given by $(\bm{a}^*,a_R^*,\bm{b}^*,P^*) \in \widetilde{\mathcal{K}}$, where 
$\bm{a}^*\in(0,1)^n$ and $a_R^*\in[0,1)$ uniquely solve
  % $(\bm{a}^*,a_R^*)\in\partial \Delta^{n+1}$
  % is the unique solution to 
  the system of equations 
   {\small\begin{equation}\label{Eq:OptNFaRa}
    \left\{\begin{array}{lr}
        (1+\tau)m_S(\gamma_R(1+\tau)a_R^*)=m_S(\gamma_ia_i^*) & \text{for } i=1,\dots,n\\
        a_R^*+\sum_{i=1}^na_i^*=1, & 
        \end{array}\right.
\end{equation}}%
    % which admits a unique solution in $\Delta^{n+1}_+$ if and only if the following condition holds: 
    % \begin{equation}
    % \label{eq:Cond:exp:no:fairness}
    %     (1+\tau)\mathbb{E}[S] \leq m_S\left( \frac{1}{\sum_{i=1}^n\gamma_i^{-1}} \right).
    % \end{equation}}
and $\bm{b}^*$ is given by 
{\small\begin{equation}\label{Eq:Optbi}
    b_i^*=(1+\theta)\mu_i-C_i(a_i^*)-\frac{1}{\gamma_i}\ln\left(1+\frac{q^*-1}{\chi_i}\right),\quad i=1,\dots,n,
\end{equation}}%
where $\chi_i:=\delta_R\gamma_R/\delta_i\gamma_i$ and $q^*>1$ is the unique solution of
{\small\begin{equation}\label{Eq:qstar}
    (1+\theta)\mathbb{E}[S]-\sum_{i=1}^n\left(C_i(a_i^*)+\frac{1}{\gamma_i}\ln\left(1+\frac{q^*-1}{\chi_i}\right)\right)-C_R(a_R^*)-\frac{\ln q^*}{\gamma_R}=0.
\end{equation}}%

In particular, when $\tau=0$, we have
{\small\begin{equation}\label{Eq:tau0aRa}
    a_R^*=\frac{\gamma_R^{-1}}{\gamma_R^{-1}+\overline{\gamma}^{-1}},
    % \frac{\gamma_R^{-1}}{\gamma_R^{-1}+\sum_{i=1}^n\gamma_i^{-1}},
    \qquad a_i^*=\frac{\gamma_i^{-1}}{\gamma_R^{-1}+\overline{\gamma}^{-1}},
    % \frac{\gamma_i^{-1}}{\gamma_R^{-1}+\sum_{i=1}^n\gamma_i^{-1}},
    \quad i=1,\dots,n.
\end{equation}}%

\end{proposition}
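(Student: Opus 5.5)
The plan is to decouple the problem into a risk-sharing part and a transfer part, exploiting the fact that the transfers $\bm b$ enter additively and the constraint $\sum_i b_i=P$ is linear.

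\textbf{Reparametrization.} Introduce the certainty-equivalent surpluses
$s_R:=P-C_R(a_R)$ and $s_i:=(1+\theta)\mu_i-C_i(a_i)-b_i$ for $i\in\mathcal N$. Strict feasibility is exactly $s_j>0$ for all $j\in\mathcal{I}$. Summing these and using $\sum_i b_i=P$ gives the identity
\[
\sum_{j\in\mathcal I}s_j=(1+\theta)\mathbb E[S]-C_R(a_R)-\sum_{i=1}^n C_i(a_i)=:G(\bm a,a_R).
\]
The objective in Problem \eqref{Prob:auxNF} becomes $\sum_{j\in\mathcal I}\delta_j\ln(1-e^{-\gamma_j s_j})$, which depends on $(\bm a,a_R)$ only through the budget $G$. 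Given $(\bm a,a_R)$, the vector $(s_j)$ can be chosen freely on the open simplex $\{s>0:\sum_j s_j=G\}$, since $\bm b$ is free and $P=\sum_i b_i$; the requirement $P\ge 0$ will be checked a posteriori. Because each term $\delta_j\ln(1-e^{-\gamma_j s_j})$ is strictly increasing and strictly concave in $s_j$, the optimal value $\Phi(G)$ of the inner problem is strictly increasing in $G$.

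\textbf{Step 1: risk sharing.} The outer problem therefore reduces to maximizing $G$, equivalently minimizing $\Psi(\bm a):=C_R(1-\sum_i a_i)+\sum_i C_i(a_i)$ over $\Delta^n$. Each $C_i$ and $C_R$ is a scaled cumulant generating function, so it is convex; it is strictly convex because $S$ is non-degenerate, with derivatives $C_i'(a_i)=m_S(\gamma_i a_i)$ and $C_R'(a_R)=(1+\tau)m_S(\gamma_R(1+\tau)a_R)$. Strict convexity gives a unique minimizer, and its KKT conditions are precisely \eqref{Eq:OptNFaRa}, provided the minimizer is interior in $\bm a$ and has $a_R<1$.

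The main obstacle is showing that Condition \eqref{eq:Cond:exp:no:fairness} guarantees $a_R^*\in[0,1)$ and $a_i^*>0$. I would argue through the common multiplier $\lambda$. Since $m_S$ is strictly increasing (as the variance under the tilted measure is positive), define $a_i(\lambda)=m_S^{-1}(\lambda)/\gamma_i$ and $a_R(\lambda)=m_S^{-1}(\lambda/(1+\tau))/(\gamma_R(1+\tau))$, with $a_R(\lambda)=0$ when $\lambda\le(1+\tau)\mathbb E[S]$. The map $\lambda\mapsto a_R(\lambda)+\sum_i a_i(\lambda)$ is continuous and increasing, so I solve $a_R(\lambda)+\sum_i a_i(\lambda)=1$. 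The ``corner'' configuration $a_R=0$, $\sum_i a_i=1$ equalizes the peers' marginals at $a_i=\overline\gamma/\gamma_i$, where the common marginal is $m_S(\overline\gamma)$. Condition \eqref{eq:Cond:exp:no:fairness} says the reinsurer's marginal at $a_R=0$, namely $(1+\tau)m_S(0)=(1+\tau)\mathbb E[S]$, does not exceed $m_S(\overline\gamma)$. Hence the root satisfies $\lambda^*\ge(1+\tau)\mathbb E[S]$, which gives $a_R^*\ge0$ and $a_i^*\le\overline\gamma/\gamma_i<1$. Positivity of each $a_i^*$ follows because $\lambda^*\ge(1+\tau)\mathbb E[S]>m_S(0)$ whenever $\tau>0$; when $\tau=0$, one obtains $\lambda^*>\mathbb{E}[S]$ from the budget equation. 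Finally, $a_R^*<1$ follows since all $a_i^*>0$.

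Assumption \ref{ass:mgf} ensures that all these arguments lie in $(-\infty,\bar t)$, since $\gamma_i a_i\le\overline\gamma$ and $\gamma_R(1+\tau)a_R\le\gamma_R(1+\tau)$. In the case $\tau=0$, the equations $m_S(\gamma_R a_R)=m_S(\gamma_i a_i)$ together with injectivity of $m_S$ give $\gamma_R a_R=\gamma_i a_i$, and normalizing by the market-clearing condition yields \eqref{Eq:tau0aRa}.

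\textbf{Step 2: transfers.} With $G^*:=G(\bm a^*,a_R^*)$, Condition \eqref{Eq:IRCond} gives $G^*>0$, since $(\bm a^*,a_R^*)$ attains the minimum in \eqref{Eq:IRCond}. Maximize $\sum_j\delta_j\ln(1-e^{-\gamma_js_j})$ subject to $\sum_j s_j=G^*$. The Lagrange conditions read $\delta_j\gamma_j/(e^{\gamma_js_j}-1)=\nu$ for all $j\in\mathcal I$. Setting $q^*:=e^{\gamma_Rs_R}>1$, the conditions give $e^{\gamma_i s_i}-1=(q^*-1)/\chi_i$, i.e. $s_i=\gamma_i^{-1}\ln\bigl(1+(q^*-1)/\chi_i\bigr)$. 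Substituting into the budget constraint yields \eqref{Eq:qstar}. Its left-hand side is strictly decreasing in $q$, equals $G^*>0$ at $q=1$, and tends to $-\infty$ as $q\to\infty$, so $q^*>1$ exists and is unique. Unwinding $s_i$ gives \eqref{Eq:Optbi}.

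Uniqueness of the full maximizer follows from strict concavity in $s$ combined with uniqueness of the minimizer of $\Psi$. Lastly, $P^*=C_R(a_R^*)+s_R>0$, so the constraint $P\ge0$ holds and the optimum lies in $\widetilde{\mathcal K}$.
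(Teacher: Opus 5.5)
Your proof is correct, but it is organized differently from the paper's. The paper uses a single Lagrangian in all variables $(\bm a,a_R,\bm b,P)$. The first-order conditions in $P$ and $b_i$ give $\Lambda_R=\Lambda_i$. Those in $a_R,a_i$, after dividing by the common factor, reduce to \eqref{Eq:OptNFaRa}. The paper solves this system through the increasing auxiliary function $\bm\varphi(y)=m_S(y)-(1+\tau)m_S\bigl(\gamma_R(1+\tau)(1-y/\overline\gamma)\bigr)$ on $[0,\overline\gamma]$. The transfers then come from the root of $\widetilde G$, and global optimality from concavity of the log-Nash objective on $\widetilde{\mathcal K}$.

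Your passage to certainty-equivalent surpluses $s_j$ instead exposes the transferable-utility structure of the exponential benchmark. The risk-sharing variables enter only through the total surplus $G$, so $(\bm a^*,a_R^*)$ is simply the unique minimizer of the aggregate certainty-equivalent cost $\Psi$. The bargaining weights and $\theta$ affect only how $G^*$ is split. This buys three things:
\begin{enumerate}
\item a structural explanation of why $\bm a^*$ does not depend on $\delta_j$ or $\theta$;
\item a cleaner uniqueness argument;
\item an explicit version of a step the paper uses tacitly: $\widetilde G(1)=G^*>0$ follows from \eqref{Eq:IRCond} precisely because $(\bm a^*,a_R^*)$ attains the minimum there.
\end{enumerate}

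Your water-filling in $\lambda$ is the paper's $\bm\varphi$ under the change of variable $\lambda=m_S(y)$. Two small points need tightening.
\begin{itemize}
\item To complete the intermediate-value step, state the upper endpoint. At $\lambda=m_S(\overline\gamma)$ the peers' shares alone sum to one, so the root lies in $[(1+\tau)\mathbb E[S],m_S(\overline\gamma)]$. On that interval all the inverses you use are defined.
\item The delicate boundary in Step 1 is $a_R=0$, not $a_R=1$. There KKT only yields $m_S(\gamma_i a_i)\le(1+\tau)\mathbb E[S]$, not equality. Since you construct a point of $\Delta^n$ satisfying \eqref{Eq:OptNFaRa} with equality, convexity of $\Psi$ still makes it the minimizer, so nothing is lost.
\end{itemize}
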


\begin{proof}
    See Appendix \ref{App:NoFairExpSol}.
\end{proof}

Under the exponential utility setting, each agent's risk bearing is independent of their bargaining power (i.e., $\delta_R$ and $\delta_i$, $i\in\mathcal{N}$) and the centralized insurer's loading $\theta$ (see \eqref{Eq:OptNFaRa}). In particular, when $\tau=0$, the optimal allocation $\bm{a}^*$ retains the familiar inverse-risk-aversion structure of \cite{aase2009nash}. Unlike the peer-only setting in their study,\footnote{Specifically, \cite{aase2009nash} studies a reinsurance syndicate, which can be interpreted as P2P risk sharing in our setting.} the P2P reinsurer herein participates in risk sharing, thereby scaling down peers' allocations. In addition, the side payments $\bm{b}^*$ are no longer zero-sum, since their aggregate finances the premium paid to the P2P reinsurer.

\subsubsection{Homogeneous peers under price fairness: general utility}\label{sec:homo_general_utility}

In this subsection, we study Problem \eqref{Prob:hetero} under the assumption that all peers are homogeneous in terms of utility functions, bargaining powers, initial wealth, and expected losses.  Specifically, we impose the following assumption throughout Sections \ref{sec:homo_general_utility}--\ref{sec:homo_exp_utility}
and in the numerical analysis of Section \ref{sec:pool_size}.

\begin{assumption}\label{Ass:homo}
    There exist a utility function $U$, constants $\delta \in (0,1/n)$, $w \in \mathbb{R}$, and $\mu >0$ such that, for all $i=1,\dots,n$,  $U_i\equiv U$,   $\delta_i = \delta$, $w_i = w$, and  $\mu_i=\mu$.
\end{assumption}

Under homogeneity,\footnote{Assumption \ref{Ass:homo} does not require the individual losses $X_i$ to be identically distributed. Equality of expected losses is sufficient for the equal-splitting result because, under price fairness, each peer’s post-contract wealth depends on the aggregate loss $S$ and the common mean loss $\mu$.} it is natural to consider an equal-splitting sharing rule among peers, that is,
{\small\begin{equation}\label{Eq:esSharing}
    a_i=\frac{1-a_R}{n},\qquad b_i=\frac{P}{n},\qquad i=1,\dots,n,
\end{equation}}%
for any admissible reinsurance-premium pair $(a_R,P)$ in Problem \eqref{Prob:hetero}. The following proposition justifies restricting attention to the sharing rule \eqref{Eq:esSharing} under Assumption \ref{Ass:homo}. 

\begin{proposition}\label{Prop:OptesSR}
    Under Assumptions
    % \ref{Ass:non_degenerate_S}, 
    \ref{Ass:EU},
    % \ref{P_max_assumption}, 
    \ref{Ass:homo} and Condition \eqref{Eq:supH}, the unique optimal solution $(\bm{a}^*,P^*)$ to Problem \eqref{Prob:hetero}  satisfies 
    $a_i^* = a_j^*$ and $b_i^* = b_j^*=P^*/n$ for all $i,j\in\mathcal{N}$.
    %where $b_i^*=P^*/n$.
\end{proposition}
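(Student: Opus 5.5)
The argument is a symmetry argument built on the uniqueness part of Theorem \ref{Thm:EUsol}. Under Assumption \ref{Ass:homo}, the peers enter Problem \eqref{Prob:hetero} symmetrically. Take any permutation $\pi$ of $\mathcal{N}$ and let $\bm{a}^\pi:=(a_{\pi(1)},\dots,a_{\pi(n)})$. The reinsurer's wealth \eqref{Eq:WR} depends on $\bm{a}$ only through $\sum_{i}a_i$, so $W_R(\bm{a}^\pi,P)=W_R(\bm{a},P)$. Peer $i$'s wealth under $(\bm{a}^\pi,P)$ is, by \eqref{Eq:Wi} with $w_i=w$ and $\mu_i=\mu$,
\[
w-a_{\pi(i)}(S-\mathbb{E}[S])-\left(\tfrac{P}{\mathbb{E}[S]}+\textstyle\sum_j a_j\right)\mu ,
\]
which is exactly peer $\pi(i)$'s wealth under $(\bm{a},P)$. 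The key point is that this identity holds pathwise as random variables, not merely in distribution, since it involves only $S$ and never the individual $X_i$. Because $U_i\equiv U$ and $d_i=U(w-(1+\theta)\mu)$ is common to all peers, we get $V_i(\bm{a}^\pi,P)=V_{\pi(i)}(\bm{a},P)$. With $\delta_i=\delta$ for all peers, the Nash product, and also every IR constraint, is therefore invariant under $(\bm{a},P)\mapsto(\bm{a}^\pi,P)$. This map also sends $\Delta^n$ to itself and preserves $P\ge 0$.

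Next I will use uniqueness. Theorem \ref{Thm:EUsol}, under Assumption \ref{Ass:EU} and Condition \eqref{Eq:supH}, guarantees a unique maximizer $(\bm{a}^*,P^*)$. By the invariance above, $((\bm{a}^*)^\pi,P^*)$ is feasible and attains the same optimal value, so it is also a maximizer. Uniqueness then forces $(\bm{a}^*)^\pi=\bm{a}^*$ for every permutation $\pi$, and applying this to transpositions gives $a_i^*=a_j^*$ for all $i,j$. As a fallback, if one wanted to avoid citing uniqueness, I would instead average $\bm{a}^*$ over all permutations. By convexity of $\mathcal{F}$ (Lemma \ref{Lem:ccF}), concavity of the $V_j$ (Proposition \ref{pp:V:continuous}), and concavity of $\sum_j\delta_j\ln(\cdot)$, the averaged contract is feasible and weakly better. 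Strictness would then come from strict concavity of $U$ together with non-degeneracy of $S$, but citing uniqueness is simpler.

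For the payments, the identity before Definition \ref{Def:P2Pcontract} gives
\[
b_i^*=\left(\textstyle\sum_j a_j^*+\tfrac{P^*}{\mathbb{E}[S]}\right)\mu-a_i^*\mathbb{E}[S].
\]
The right-hand side is the same for all $i$ once the $a_i^*$ coincide, so $b_i^*=b_j^*$. Since $\sum_i b_i^*=P^*$, each $b_i^*=P^*/n$. The same conclusion can be checked directly: with $\mathbb{E}[S]=n\mu$ and $a_i^*=(1-a_R^*)/n$, the formula reduces to $P^*/n$.

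I expect no real obstacle. The only step needing care is the pathwise identity for $W_i$ under permutation, and in particular the observation that it does not require the $X_i$ to be identically distributed. This is exactly the point made in the paper's footnote on Assumption \ref{Ass:homo}.
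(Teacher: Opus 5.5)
Your proof is correct, but your main argument differs from the paper's. The paper does not use permutation invariance. After rewriting the problem with $\mathbb{E}[S]=n\mu$, it fixes $(a_R,P)$ and notes that the per-peer log-surplus $g(a)=\ln\bigl(\mathbb{E}[U(w-a(S-n\mu)-P/n-(1-a_R)\mu)]-d\bigr)$ is concave in $a$. Jensen's inequality then gives $\sum_{i}g(a_i)\le n\,g\bigl((1-a_R)/n\bigr)$. So equal splitting maximizes the peers' part of the objective for every fixed $(a_R,P)$, while the reinsurer's term, which depends only on $(a_R,P)$, is unchanged. Uniqueness from Theorem \ref{Thm:EUsol} then pins down $a_i^*=(1-a_R^*)/n$, and the payments follow from price fairness exactly as in your last step.

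The paper's route is essentially your fallback. Averaging $\bm a^*$ over all permutations gives the constant vector with entries $\sum_i a_i^*/n$, and the paper's Jensen step says precisely that this averaged contract is weakly better.

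Your primary route replaces the concavity computation with a pure symmetry observation plus the uniqueness already proved in Theorem \ref{Thm:EUsol}. It is shorter and would work for any symmetric problem with a unique maximizer, but it only describes the optimum. The paper's argument yields the stronger pointwise fact that equal splitting is best for the peers at every reinsurance--premium pair, which directly underwrites the reduction to the two-variable Problem \eqref{Prob:Homo1}. Your conclusion also justifies that reduction, since the unique maximizer lies in the equal-splitting subset.
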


\begin{proof}
    See Appendix \ref{App:OptesSR}.
\end{proof}

In the remainder of this subsection, we drop the subscripts $i=1,\dots,n$ since all peers are identical under Assumption \ref{Ass:homo} and the sharing rule \eqref{Eq:esSharing}. 

\begin{remark}[Alternative price-fairness principles]
    Assumption \ref{Ass:homo} imposes homogeneity, including equal expected losses. This is because price fairness (see Definition \ref{Def:PriceFairness}) is defined by the expected-value premium principle. If the variance or standard deviation principles are adopted, then equal variances would be required to deduce the equal-splitting sharing rule \eqref{Eq:esSharing}.
\end{remark}

By Assumption \ref{Ass:homo} and Proposition \ref{Prop:OptesSR}, Problem \eqref{Prob:hetero} can be transformed into 
{\small\begin{align}\label{Prob:Homo1}
\begin{split}
    \max_{a_R\in[0,1],\,P\geq0}&(\mathbb{E}[U_R(w_R+P-(1+\tau)a_RS)]-d_R)^{1-n\delta}\left(\mathbb{E}\left[U\left(w-\frac{P+(1-a_R)S}{n}\right)\right]-d\right)^{n\delta}\\
s.t.\ 
& \mathbb{E}[U_R(w_R+P-(1+\tau)a_RS)]\geq d_R,\quad
\mathbb{E}\left[U\left(w-\frac{P+(1-a_R)S}{n}\right)\right]\geq d.
\end{split}
\end{align}}%

\subsubsection{Homogeneous peers under price fairness: exponential utility}\label{sec:homo_exp_utility}

We now further specialize the homogeneous price-fairness model to exponential utility and derive an explicit characterization of the optimal contract $(a_R^*,P^*)$. Specifically, we take
  {\small  \begin{equation}
    \label{eq:exp:homo}
        U_R(x) = -e^{-\gamma_R x} \quad \text{and} \quad U(x) = -e^{-\gamma x}, \ x \in \mathbb{R},
    \end{equation}}%
where $\gamma_R,\gamma>0$. As in Section \ref{sec:prelim}, for $a_R\in[0,1]$, we define the indifferent premiums $P_{\min}(a_R)$ and $P_{\max}(a_R)$, respectively by the solutions of 
   {\small \begin{equation*}
        \mathbb{E}[U_R(w_R+P_{\min}(a_R)-(1+\tau)a_RS)] = d_R  \quad \text{and} \quad \mathbb{E}\left[U\left(w - \frac{P_{\max}(a_R)+(1-a_R)S}{n} \right) \right] = d,
    \end{equation*}}%
Finally, we define $\overline{H}:[0,1]\to \mathbb{R}$ by $\overline{H}(a_R):= P_{\max}(a_R) - P_{\min}(a_R)$. Under exponential utility \eqref{eq:exp:homo}, we have 
  {\small  \begin{equation*}
        \overline{H}(a_R)= (1+\theta)n\mu
        - \frac{n}{\gamma}\ln M_S\left(\frac{\gamma(1-a_R)}{n}\right) - \frac{1}{\gamma_R}\ln M_S\left(\gamma_R(1+\tau)a_R \right). 
    \end{equation*} }%
    % where $M_S$ is the moment generating function of $S$, i.e., $M_S(t):=\mathbb{E}[e^{tS}]$.
    
The following statement derives the closed-form characterization of the contract $(a_R^*,P^*)$ under exponential utility with homogeneous peers, where Assumption \ref{Ass:EU} is replaced by regularity conditions specific to exponential utilities.
%\KTHNcomment{[The statement of the PP still keeps Ass 2.6, to be removed?].} \takwacomment{It should be Ass. \ref{Ass:non_degenerate_S}, a typo.}

\begin{proposition}\label{Prop:ExpHomo}
    Suppose that Assumption
    % \ref{Ass:non_degenerate_S} and 
    \ref{Ass:homo} holds, and that\footnote{The first and third conditions are simply Assumption \ref{ass:mgf} and Condition \eqref{eq:Cond:exp:no:fairness} under Assumption \ref{Ass:homo}, respectively.} 
        \begin{enumerate}
            \item $M_S(t)<\infty$ for $t<\bar{t}$, where $\bar{t}> \max\{\gamma_R(1+\tau), \gamma/n\}$,
            \item $\sup_{a_R\in[0,1]}\overline{H}(a_R)>0$, and \item $(1+\tau)n\mu\leq m_S(\gamma/n)$.
        \end{enumerate} 
    Then, under the exponential utility setting \eqref{eq:exp:homo}, Problem \eqref{Prob:Homo1} admits a unique solution $(a_R^*,P^*)$, 
    where $a_R^* = \argmax_{a_R\in[0,1]} \overline{H}(a_R)\in{[0,1)}$ 
    is the unique solution of 
   {\small \begin{equation}\label{Eq:ExpaREquation}
        m_S\left(\frac{\gamma(1-a_R^*)}{n}\right)=(1+\tau)m_S(\gamma_R(1+\tau)a_R^*),
    \end{equation}}%
   where 
   % $m_S(t):= \frac{M_S'(t)}{M_S(t)}$, and 
   $P^* {\in (P_{\min}(a_R^*),P_{\max}(a_R^*))}$ {uniquely} solves
  {\small  \begin{equation}\label{Eq:HomoPFOC}
\frac{(1-n\delta)\gamma_RM_S(\gamma_R(1+\tau)a_R^*)}{e^{\gamma_RP^*}-M_S(\gamma_R(1+\tau)a_R^*)}=\frac{\delta\gamma M_S\left(\frac{\gamma(1-a_R^*)}{n}\right)}{e^{\gamma\left((1+\theta)\mu-\frac{P^*}{n}\right)}-M_S\left(\frac{\gamma(1-a_R^*)}{n}\right)}.
\end{equation}}%
\end{proposition}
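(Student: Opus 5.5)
The approach is to exploit the separable structure created by exponential utility. Write $A:=M_S(\gamma_R(1+\tau)a_R)$ and $B:=M_S(\gamma(1-a_R)/n)$, and normalise as in Problem \eqref{Prob:NoFairness}. Since $d_R=-e^{-\gamma_R w_R}$ and $d=-e^{-\gamma(w-(1+\theta)\mu)}$, the two surpluses in Problem \eqref{Prob:Homo1} become, up to positive constants,
$e^{-\gamma_R w_R}\,(1-e^{-\gamma_R P}A)$ and $e^{-\gamma w}\,(e^{\gamma(1+\theta)\mu}-e^{\gamma P/n}B)$.
In terms of the indifference premiums these read $1-e^{-\gamma_R(P-P_{\min}(a_R))}$ and $e^{\gamma(1+\theta)\mu}\bigl(1-e^{-\gamma(P_{\max}(a_R)-P)/n}\bigr)$, with
\[
P_{\min}(a_R)=\gamma_R^{-1}\ln A, \qquad P_{\max}(a_R)=n(1+\theta)\mu-n\gamma^{-1}\ln B .
\]
Substitute $x:=P-P_{\min}(a_R)$. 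The log-Nash objective on the strictly feasible region is then
\[
g(x,h):=(1-n\delta)\ln(1-e^{-\gamma_R x})+n\delta\ln(1-e^{-\gamma(h-x)/n}), \qquad 0<x<h:=\overline H(a_R),
\]
plus a constant. So the objective depends on $a_R$ only through $\overline H(a_R)$.

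\textbf{Step 1 (reduction to maximising $\overline H$).} The function $g(x,h)$ is strictly increasing in $h$ for fixed $x$. Hence for each $h>0$ the inner maximum $G(h):=\max_{0<x<h}g(x,h)$ is strictly increasing in $h$. Existence and uniqueness of the inner maximiser follow because $x\mapsto g(x,h)$ is strictly concave (a sum of strictly concave logs of $1-e^{-\cdot}$) and tends to $-\infty$ at both endpoints. Therefore the optimal $a_R^*$ must maximise $\overline H$ over $[0,1]$, and Condition 2 guarantees the maximum is positive, so the strictly feasible set is nonempty. I will also check that $P^*=P_{\min}(a_R^*)+x^*\ge 0$, which holds because $P_{\min}\ge0$.

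\textbf{Step 2 (uniqueness and characterisation of $a_R^*$).} By Hölder's inequality $\ln M_S$ is convex, and strictly so since $S$ is non-degenerate. So $\overline H$ is strictly concave on $[0,1]$, and Condition 1 ensures finiteness and smoothness on a neighbourhood of $[0,1]$ (Remark \ref{rmk:mgf}). This gives a unique maximiser, and its derivative is
\[
\overline H'(a_R)=m_S\!\left(\tfrac{\gamma(1-a_R)}{n}\right)-(1+\tau)m_S(\gamma_R(1+\tau)a_R).
\]
Since $m_S$ is strictly increasing, $\overline H'$ is strictly decreasing. At $a_R=1$ we get $\overline H'(1)=\mathbb E[S]-(1+\tau)m_S(\gamma_R(1+\tau))<0$, because $m_S(t)>m_S(0)=\mathbb E[S]$ for $t>0$. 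At $a_R=0$ we get $\overline H'(0)=m_S(\gamma/n)-(1+\tau)n\mu\ge0$ by Condition 3, using $\mathbb E[S]=n\mu$ and $m_S(0)=\mathbb E[S]$. Hence there is a unique root in $[0,1)$, which is \eqref{Eq:ExpaREquation}, and it is the argmax.

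\textbf{Step 3 ($P^*$).} With $a_R^*$ fixed, set $\partial g/\partial x=0$ and translate back to $P$ via $e^{-\gamma_R x}=A/e^{\gamma_R P}$ and $e^{-\gamma(h-x)/n}=Be^{\gamma P/n}/e^{\gamma(1+\theta)\mu}$. Multiplying through by $e^{\gamma(1+\theta)\mu}$ gives \eqref{Eq:HomoPFOC}. The left side of \eqref{Eq:HomoPFOC} decreases from $+\infty$ and the right side increases to $+\infty$ on $(P_{\min},P_{\max})$, so the root is unique.

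\textbf{Main obstacle.} The delicate point is Step 1's decoupling: showing that joint maximisation reduces to maximising $\overline H$ first, including the interchange of maxima and the strict monotonicity of $G$. The boundary case $a_R^*=0$ also needs care, since Condition 3 may hold with equality; this is handled directly by the concavity argument on the closed interval.
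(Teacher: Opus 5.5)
Your proof is correct, but it takes a different route from the paper's.

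\textbf{The paper's route.} The paper writes down the first-order conditions of the log-Nash objective $\mathcal{J}(a_R,P)$ and reads off \eqref{Eq:ExpaREquation} and \eqref{Eq:HomoPFOC}; the $a_R$-condition reduces to $\overline{H}'(a_R)=0$ once the $P$-condition is substituted. It then shows the system has a unique solution. First, $\overline{H}'$ is strictly decreasing (via $m_S'>0$) with $\overline{H}'(0)\ge 0>\overline{H}'(1)$. Second, the difference $Q(P)$ of the two sides of \eqref{Eq:HomoPFOC} decreases strictly from $+\infty$ to $-\infty$ on $(P_{\min}(a_R^*),P_{\max}(a_R^*))$. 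This is short, but it leaves implicit why the first-order conditions are necessary and sufficient. That requires concavity of $\mathcal{J}$ on the strictly feasible region, plus a zero-multiplier argument when $a_R^*=0$.

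\textbf{Your route and what it buys.} Your substitution $x=P-P_{\min}(a_R)$ shows that, under CARA, the objective depends on $a_R$ only through the width $\overline{H}(a_R)$ of the premium bargaining interval. Strict monotonicity of the inner value $G$ then gives existence, uniqueness and $a_R^*=\argmax\overline{H}$ directly. You need only one-dimensional strict concavity in $x$, and the boundary case $a_R^*=0$ is covered automatically. Your route also explains why $a_R^*$ is independent of $\delta$ and $\theta$: $\theta$ only shifts $\overline{H}$ by a constant, and the weights only govern how the interval is split. The paper merely reads this independence off \eqref{Eq:ExpaREquation}.

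\textbf{On the flagged obstacle, and one addition.} The ``main obstacle'' you flag is not really one. The joint supremum $\sup_{(a_R,x)}g=\sup_{a_R}G(\overline{H}(a_R))$ is just an iterated supremum. Your comparison $G(h_2)\ge g(x_1^*,h_2)>g(x_1^*,h_1)=G(h_1)$ already gives strict monotonicity. The one line you should add is that any maximizer of Problem \eqref{Prob:Homo1} is strictly feasible. The Nash product vanishes when an IR constraint binds and is positive somewhere by Condition 2, so passing to the logarithm loses nothing.
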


\begin{proof}
    See Appendix \ref{App:ExpHomo}.
\end{proof}

Proposition \ref{Prop:ExpHomo} shows that, under homogeneous peers and exponential utility, the reinsurance share $a^*_R$ is independent of the bargaining weight $\delta$ and the outside loading $\theta$; only the premium $P^*$ depends on these parameters. This contrasts with the general heterogeneous case in Theorem \ref{Thm:EUsol}, where bargaining weights and the outside option can affect the risk-sharing vector itself.

\begin{remark}[Condition 3 of Proposition \ref{Prop:ExpHomo}]
If Condition 3 fails, then $\overline H'(0)<0$. Since $\overline H$ is concave, the maximizer of $\overline H$ over $[0,1]$ is $a^*_R=0$, and \eqref{Eq:ExpaREquation} is replaced by the corresponding boundary condition. Thus, Condition 3 identifies the regime in which $a_R$ is characterized by the first-order equation \eqref{Eq:ExpaREquation}.
\end{remark}

\section{Ex post coalitional stability}\label{Sec:Stability}

%After solving the main problem \eqref{Prob:hetero}, which includes all peers and the P2P reinsurer,
In this section, we investigate subgroup formation, that is, whether a subset of peers, together with the P2P reinsurer, can achieve welfare outcomes that weakly improve upon those obtained in the grand coalition. Because the P2P reinsurer provides the platform service, we assume that it must be included in any subgroup. Henceforth, we only consider subgroup formation among peers, where a subgroup is denoted by
%denoted by 
$\emptyset\neq \mathcal{M}\subsetneq \mathcal{N}$. 
%, where we omit mentioning the P2P reinsurer's index $R$ in the subgroup label, and hereafter we adopt a slight abuse of notation that $\mathcal{N}$ also indicates the grand coalition that includes all peers. \KTHNcomment{why don't we just use $\mathcal{I}$? For ease of definition $\Delta^{|\mathcal{M}|}$, perhaps we can use $\overline{M} := \mathcal{M}\cup \{R\}$, or sth like $\mathcal{M}_R$?}

\subsection{Stability under price fairness}\label{sec:stability_price_fair}
For any subgroup $\emptyset\neq \mathcal{M}\subsetneq\mathcal{N}$, define $S_{\mathcal{M}}:=\sum_{i\in \mathcal{M}}X_i$ 
% \begin{equation*}
% S_{\mathcal{M}}:=\sum_{i\in \mathcal{M}}X_i,\quad \Delta^{|\mathcal{M}|}:=\left\{\bm{a}^{\mathcal{M}}\in[0,1]^{|\mathcal{M}|}:\sum_{i\in \mathcal{M}}{a_i^\mathcal{M}} \leq 1\right\},
% \end{equation*}
and the optimal expected utility levels of all agents in the grand coalition $\mathcal{N}$ associated with the optimal contract $(\bm{a}^*,P^*)$ in Problem \eqref{Prob:hetero}: $\hat{u}_R^{\mathcal{N}}:=\mathbb{E}[U_R(W_R(\bm{a}^*,P^*))]$ and $\hat{u}_i^{\mathcal{N}}:=\mathbb{E}[U_i(W_i(\bm{a}^*,P^*))]$, $i\in\mathcal{N}$. Throughout this subsection, assume that Assumptions
% \ref{Ass:non_degenerate_S}, 
\ref{Ass:EU} and 
% \ref{P_max_assumption},
% together with 
Condition \eqref{Eq:supH} hold. Hence, by Theorem
\ref{Thm:EUsol}, the grand-coalition problem admits a unique optimal
solution $(\bm{a}^*,P^*)$.
In addition, we define the certainty-equivalent (CE) loading $\vartheta_i$, $i\in\mathcal{N}$, as the hypothetical loading that makes each peer indifferent to joining the P2P insurance contract in the grand coalition. It is defined as the solution of the following equation:
{\small\begin{equation}\label{Eq:vartheta}
    \hat{u}_i^{\mathcal{N}}=U_i(w_i-(1+\vartheta_i)\mu_i),\quad i\in \mathcal{N}.
\end{equation}}%
Note that due to the strict welfare improvement, we have 
{\small\begin{equation}\label{Eq:smallvartheta}
    \frac{w_i-U_i^{-1}\left(\hat{u}_i^{\mathcal{N}}\right)}{\mu_i}-1=\vartheta_i<\theta\quad\text{for all }i\in \mathcal{N}.
\end{equation}}%
The loadings $\vartheta_i$ can be interpreted as the {indifference} price of the P2P insurance contract, thereby facilitating comparisons of P2P insurance contracts across subgroups. 
Similarly, we define the P2P reinsurer CE net profit $\Pi$ by
{\small\begin{equation}\label{Eq:Pi}
    \Pi:=U_R^{-1}\left(\hat{u}_R^{\mathcal{N}}\right)-w_R.
\end{equation}}%

Fix a subgroup $\mathcal{M}\subsetneq \mathcal{N}$. 
%and a P2P insurance contract $(\bm{a}^{\mathcal{M}},P^{\mathcal{M}})$ \KTHNcomment{Do we need $\bm{b}^\mathcal{M}$ here first and remove it under the new subgroup fairness condition?} \takwacomment{I added some lines for that.} for $\mathcal{M}$, where $\bm{a}^{\mathcal{M}} = (a^{\mathcal{M}}_j)_{j\in \mathcal{M}}\in\Delta^{|\mathcal{M}|}$ and $P^{\mathcal{M}}>0$. 
Each peer in the subgroup $\mathcal{M}$ contributes $a_i^\mathcal{M}S_{\mathcal{M}}{+b^\mathcal{M}_i}$, where $\bm{a}^\mathcal{M}=(a^\mathcal{M}_i)_{i\in\mathcal{M}}\in\Delta^{|\mathcal{M}|}$ denotes the risk-bearing proportions and $b_i^\mathcal{M}$ is a side payment made by Peer $i\in\mathcal{M}$. {The aggregate premium received by the P2P reinsurer is thus $P^\mathcal{M}=\sum_{i\in\mathcal{M}}b^\mathcal{M}_i$}. 
Similar to contracts for the grand coalition $\mathcal{N}$, we adopt the price-fairness condition in the subgroup $\mathcal{M}$, i.e., $\mathbb{E}[a_i^\mathcal{M}S_{\mathcal{M}} + b_i^\mathcal{M}] = (1+\rho(P^\mathcal{M},a_R^\mathcal{M}))\mu_i$ for all $i\in \mathcal{M}$ (cf. Definition \ref{Def:PriceFairness}). We shall refer to $(\bm{a}^{\mathcal{M}},P^{\mathcal{M}})$ as the P2P insurance contract for the subgroup $\mathcal{M}$, under which the wealth levels of the P2P reinsurer and all peers in the subgroup $\mathcal{M}$ are defined as
{\small\begin{align*}
    &W^{\mathcal{M}}_i\left(\bm{a}^{\mathcal{M}},P^{\mathcal{M}}\right):=w_i-a_i^{\mathcal{M}}(S_{\mathcal{M}}-\mathbb{E}[S_{\mathcal{M}}])-\frac{\mu_i}{\mathbb{E}[S_{\mathcal{M}}]}P^{\mathcal{M}}-\mu_i\sum_{j\in \mathcal{M}}a_j^{\mathcal{M}},\quad i\in \mathcal{M},\\
    &W^{\mathcal{M}}_R\left(\bm{a}^{\mathcal{M}},P^{\mathcal{M}}\right):=w_R+P^{\mathcal{M}}-(1+\tau)\left(1-\sum_{i\in \mathcal{M}}a_i^{\mathcal{M}}\right)S_{\mathcal{M}}.
\end{align*}}%
%\KTHNcomment{I think we need to define a price-fairness condition for the subgroup $\mathcal{M}$ to arrive at $W_i^{\mathcal{M}}$? That is, $\mathbb{E}[a_i^\mathcal{M}S_{\mathcal{M}} + b_i^\mathcal{M}] = (1+\rho^\mathcal{M}(P^\mathcal{M},a_R^\mathcal{M}))\mu_i$, $i\in \mathcal{M}$? Indeed we need to use this to reduce $(a,b,P)$ to $(a,P)$ as before?} \takwacomment{I forgot this point, I added some lines for that.}
Below, we define the notion of subgroup formation in the context of the current work.
\begin{definition}[Subgroup formation]\label{Def:subgroupformation}
  A subgroup formation is {viable} if there exists a subgroup $\emptyset\neq\mathcal{M}\subsetneq\mathcal{N}$ and a corresponding contract $(\bm{a}^{\mathcal{M}},P^{\mathcal{M}})$ such that $ \mathbb{E}[U_R(W^{\mathcal{M}}_R(\bm{a}^{\mathcal{M}},P^{\mathcal{M}}))]\geq\hat{u}_R^{\mathcal{N}}$, and $\mathbb{E}[U_i(W^{\mathcal{M}}_i(\bm{a}^{\mathcal{M}},P^{\mathcal{M}}))]\geq\hat{u}_i^{\mathcal{N}}$ for all $i\in\mathcal{M}$.
    % \begin{align}\label{Eq:subgroupwelfare}
    %     \begin{split}
    %         &\mathbb{E}[U_i(W^{\mathcal{M}}_i(\bm{a}^{\mathcal{M}},P^{\mathcal{M}}))]\geq\hat{u}_i^{\mathcal{N}}
    %     \quad\text{for all }i\in \mathcal{M},\quad\text{and}\quad
    %     \mathbb{E}[U_R(W^{\mathcal{M}}_R(\bm{a}^{\mathcal{M}},P^{\mathcal{M}}))]\geq\hat{u}_R^{\mathcal{N}}.
    %     \end{split}
    % \end{align}
\end{definition}
Since the grand-coalition contract is strictly feasible, $\hat{u}_R^{\mathcal{N}}>d_R$ and $\hat{u}_i^{\mathcal{N}}>d_i$ for all $i\in\mathcal{N}$. Hence, any viable subgroup satisfying the condition in Definition \ref{Def:subgroupformation} also strictly satisfies the original disagreement-point IR constraints for all its members. Note also that in Definition \ref{Def:subgroupformation}, we {do not require strict welfare improvements for the P2P reinsurer and peers in the subgroup}.
%only impose weak welfare improvements at the P2P reinsurer and peers' welfare levels when considering the subgroup formation, 
In other words, it does not exclude the case in which all agents in the subgroup only match the welfare under the grand coalition $\mathcal{N}$. In practice, a smaller group may reduce administrative costs and is thus preferred, provided that welfare does not deteriorate. 

For any subgroup $\mathcal{M}\subsetneq \mathcal{N}$, fix $\bm{a}^{\mathcal{M}}\in\Delta^{|\mathcal{M}|}$ and let $\hat{P}_{\min}^\mathcal{M}(\bm{a}^{\mathcal{M}})$ be the lowest acceptable premium that the P2P reinsurer would charge so that she is willing to deviate from the grand coalition {$\mathcal{N}$}. Mathematically, $\hat{P}_{\min}^\mathcal{M}(\bm{a}^{\mathcal{M}})$ is defined by the solution of the equation 
{\small\begin{equation}\label{Eq:PminM}
    \mathbb{E}\left[U_R(W_R^{\mathcal{M}}(\bm{a}^{\mathcal{M}},\hat{P}_{\min}^{\mathcal{M}}(\bm{a}^{\mathcal{M}})))\right]=\hat{u}_R^{\mathcal{N}}.
\end{equation}}%
Since $\hat u_R^{\mathcal N}>U_R(w_R)$ and the expression tends to
$\sup_x U_R(x)$ as the premium tends to infinity, the required solution
exists under the maintained expected-utility assumptions.
Hence, for any admissible $\bm{a}^{\mathcal{M}}\in\Delta^{|\mathcal{M}|}$, the corresponding expected utility level for peer $i\in \mathcal{M}$ is $\hat{\Psi}_i^{\mathcal{M}}(\bm{a}^{\mathcal{M}}):=\mathbb{E}[U_i(W^{\mathcal{M}}_i(\bm{a}^{\mathcal{M}},\hat{P}_{\min}^{\mathcal{M}}(\bm{a}^{\mathcal{M}})))]$.
% \begin{equation*}
% \hat{\Psi}_i^{\mathcal{M}}(\bm{a}^{\mathcal{M}}):=\mathbb{E}[U_i(W^{\mathcal{M}}_i(\bm{a}^{\mathcal{M}},\hat{P}_{\min}^{\mathcal{M}}(\bm{a}^{\mathcal{M}})))].
% \end{equation*}
As peers' expected utility levels are strictly decreasing in the premium payment,
%evaluating at $\hat P_{\min}^{\mathcal{M}}(\bm a^{\mathcal{M}})$ yields 
$\hat{\Psi}_i^{\mathcal{M}}(\bm{a}^{\mathcal{M}})$ is the highest welfare for Peer $i$ {under $\bm{a}^\mathcal{M}$ given that the P2P reinsurer is indifferent between $\mathcal{N}$ and $\mathcal{M}$}, and 
%compatible with the P2P reinsurer weakly preferring the subgroup. 
%Thus, 
$\sup_{\bm{a}^{\mathcal{M}}\in\Delta^{|\mathcal{M}|}}\hat{\Psi}_i^{\mathcal{M}}(\bm{a}^{\mathcal{M}})$ can be interpreted as the highest welfare level attainable by Peer $i$ when the subgroup $\mathcal{M}$ is formed and the P2P reinsurer is kept indifferent between $\mathcal{M}$ and the grand coalition $\mathcal{N}$.
% To proceed with the main results in this subsection, we need the following assumption:
% \begin{assumption}\label{Ass:SM}
%     For any $\emptyset\neq \mathcal{M}\subseteq \mathcal{N}$, $S_{\mathcal{M}}:=\sum_{i\in \mathcal{M}}X_i$ is non-degenerate.
% \end{assumption}
%\Timcomment{Do we need this assumption, still?}\takwacomment{ I removed the assumption as we don't need that anymore for the strict side of the bounds.}

Below, we develop a bound for $\sup_{\bm{a}^{\mathcal{M}}\in\Delta^{|\mathcal{M}|}}\hat{\Psi}_i^{\mathcal{M}}(\bm{a}^{\mathcal{M}})$.
\begin{lemma}\label{Lem:BoundPsiM}
% Suppose that Assumptions \ref{Ass:non_degenerate_S}, \ref{Ass:EU}, \ref{pp:V:continuous} \ref{Ass:homo} and Condition \eqref{Eq:supH} hold, and let $(\bm{a}^*,P^*)$ be the unique optimal solution to Problem \eqref{Prob:hetero} from Theorem \ref{Thm:EUsol}, Then,

    % Under the maintained assumptions in Section \ref{sec:stability_price_fair}, 
    Assume that Assumption 
    % \ref{Ass:non_degenerate_S}, 
    \ref{Ass:EU} 
    % and \ref{P_max_assumption}, 
    and Condition \eqref{Eq:supH} hold. Then, 
    for any $\emptyset\neq\mathcal{M}\subsetneq \mathcal{N}$,
    % and any $i\in \mathcal{M}$, 
    % \begin{equation}\label{Eq:BoundPsiM}
    %     \sup_{\bm{a}^{\mathcal{M}}\in\Delta^{|\mathcal{M}|}}\hat{\Psi}_i^{\mathcal{M}}(\bm{a}^{\mathcal{M}})\leq U_i\left(w_i-\left(1+\frac{\Pi}{\mathbb{E}[S]-\min_{j\in \mathcal{N}\backslash\{i\}}\mu_j}\right)\mu_i\right).
    % \end{equation}
     {\small{\begin{equation}\label{Eq:BoundPsiM}
        \sup_{\bm{a}^{\mathcal{M}}\in\Delta^{|\mathcal{M}|}}\hat{\Psi}_i^{\mathcal{M}}\left(\bm{a}^{\mathcal{M}}\right)  \leq U_i\left(w_i-\left(1+\frac{\Pi}{\mathbb{E}[S_\mathcal{M}]}\right)\mu_i\right)\quad\text{for all }i\in\mathcal{M}.
    \end{equation}}}%
\end{lemma}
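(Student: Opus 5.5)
Fix $\emptyset\neq\mathcal M\subsetneq\mathcal N$, $i\in\mathcal M$ and $\bm a^{\mathcal M}\in\Delta^{|\mathcal M|}$, and write $a_R^{\mathcal M}:=1-\sum_{j\in\mathcal M}a_j^{\mathcal M}$ and $\hat P:=\hat P^{\mathcal M}_{\min}(\bm a^{\mathcal M})$. We first bound $\hat P$ from below, then apply Jensen's inequality to Peer $i$'s wealth, and finally take the supremum over $\bm a^{\mathcal M}$.

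\textbf{Step 1 (lower bound on $\hat P$).} By \eqref{Eq:PminM}, the definition \eqref{Eq:Pi} of $\Pi$, and Jensen's inequality for the concave $U_R$,
\[
U_R(w_R+\Pi)=\hat u_R^{\mathcal N}=\mathbb E\bigl[U_R(W_R^{\mathcal M}(\bm a^{\mathcal M},\hat P))\bigr]\le U_R\bigl(w_R+\hat P-(1+\tau)a_R^{\mathcal M}\mathbb E[S_{\mathcal M}]\bigr).
\]
Since $U_R$ is strictly increasing, this gives
\[
\hat P\ \ge\ \Pi+(1+\tau)a_R^{\mathcal M}\mathbb E[S_{\mathcal M}]\ \ge\ \Pi+a_R^{\mathcal M}\mathbb E[S_{\mathcal M}],
\]
using $\tau\ge0$ and $S_{\mathcal M}\ge0$.

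\textbf{Step 2 (Jensen for Peer $i$).} The term $a_i^{\mathcal M}(S_{\mathcal M}-\mathbb E[S_{\mathcal M}])$ has mean zero, so Jensen's inequality for the concave $U_i$ gives
\[
\hat\Psi_i^{\mathcal M}(\bm a^{\mathcal M})\le U_i\Bigl(w_i-\frac{\mu_i}{\mathbb E[S_{\mathcal M}]}\hat P-\mu_i\sum_{j\in\mathcal M}a_j^{\mathcal M}\Bigr)=U_i\Bigl(w_i-\mu_i\Bigl(\frac{\hat P}{\mathbb E[S_{\mathcal M}]}+1-a_R^{\mathcal M}\Bigr)\Bigr).
\]
Substituting the bound from Step 1, the $a_R^{\mathcal M}$ terms cancel:
\[
\frac{\hat P}{\mathbb E[S_{\mathcal M}]}+1-a_R^{\mathcal M}\ \ge\ 1+\frac{\Pi}{\mathbb E[S_{\mathcal M}]}.
\]
Since $U_i$ is increasing, $\hat\Psi_i^{\mathcal M}(\bm a^{\mathcal M})\le U_i\bigl(w_i-(1+\Pi/\mathbb E[S_{\mathcal M}])\mu_i\bigr)$.

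\textbf{Step 3 (supremum).} The bound in Step 2 does not depend on $\bm a^{\mathcal M}$, so taking the supremum over $\Delta^{|\mathcal M|}$ gives \eqref{Eq:BoundPsiM}.

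\textbf{Remaining checks.} Well-posedness needs three facts.
\begin{itemize}
\item $\mathbb E[S_{\mathcal M}]>0$, because each $\mu_j>0$.
\item $\hat P$ exists, which follows from the text preceding the lemma.
\item The expectations involved are finite. For this, Assumption \ref{Ass:EU} should extend to $S_{\mathcal M}$ by the argument of Proposition \ref{pp:V:continuous}, since $0\le S_{\mathcal M}\le S$. If that extension were to fail, Jensen's inequality still applies to integrable-from-above concave utilities, because $U_i$ and $U_R$ are bounded above by affine functions.
\end{itemize}
The main subtlety is Step 1: the deadweight loading enters with the sign $(1+\tau)\ge1$, so dropping $\tau$ only weakens the bound, and the cancellation of $a_R^{\mathcal M}$ is what makes the bound uniform in the contract.
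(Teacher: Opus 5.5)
Your proof is correct and follows the paper's argument. Jensen's inequality on the reinsurer's indifference condition gives $\hat P^{\mathcal M}_{\min}(\bm a^{\mathcal M})\ge \Pi+(1+\tau)a_R^{\mathcal M}\mathbb E[S_{\mathcal M}]$, and Jensen's inequality on Peer $i$'s wealth together with $1+\tau\ge 1$ then makes the $a_R^{\mathcal M}$ terms cancel; the only cosmetic difference is that you apply Jensen to Peer $i$ before substituting the premium bound (and drop $\tau$ at the premium stage), whereas the paper substitutes into the random wealth first and applies Jensen last.
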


\begin{proof}
    See Appendix \ref{App:BoundPsiM}.
\end{proof}

The term {$\Pi/\mathbb{E}[S_\mathcal{M}]$} 
% \takwacomment{Here the main goal is indeed the subgroup-independent bound, your bound is correct, but it is subgroup-dependent and thus suffers from the curse of dimensionality.} \KTHNcomment{This bound is slightly more relaxed, and we don't need the subgroup-independent bound until the next statement. Indeed, we can slightly ``generalize" Theorem 3.13 to saying that $\mathcal{M}$ is not feasible if $\vartheta_i < \Pi/\mathbb{E}[S_{\mathcal{M}}]$; in particular, no subgroup is feasible under \eqref{Eq:thetaCond2}. But is that useful to describe? Maybe not.. } \takwacomment{It's possible to use the curse of dimension in $\vartheta_i < \Pi/\mathbb{E}[S_{\mathcal{M}}]$ to motivate \eqref{Eq:varthetaUpperBound}, maybe your way is better with this, as the subgroup-independent bound sounds like a jump and readers might not know why we need that.}
%\Pi/(\mathbb{E}[S]-\min_{j\in \mathcal{N}\backslash\{i\}}\mu_j)$ 
appearing on the right-hand side of \eqref{Eq:BoundPsiM} can be interpreted as a lower bound on the indifference loading for peers in any subgroup $\mathcal{M}\subseteq \mathcal{N}$. If this lower bound is so high that $\vartheta_i < \Pi/\mathbb{E}[S_{\mathcal{M}}]$, meaning a potential overpricing of the contract for peers in $\mathcal{M}$, then peers have no incentive to depart from the grand coalition $\mathcal{N}$. 
To rule out viable subgroup formations, one must check the condition $\vartheta_i < \Pi/\mathbb{E}[S_{\mathcal{M}}]$ for all $2^n-2$ nonempty proper subgroups, which is computationally expensive for large $n$.
The following theorem provides a subgroup-independent sufficient condition for ruling out viable subgroup formation.
% reinforces this idea by comparing the lower bound with the indifference loadings $\vartheta_i$ defined in \eqref{Eq:vartheta},
% and thus provides a sufficient condition to prevent subgroup formation.

\begin{theorem}\label{Thm:core}
    Suppose that Assumption 
    % \ref{Ass:non_degenerate_S}, 
    \ref{Ass:EU} 
    % and \ref{P_max_assumption}, 
    and Condition \eqref{Eq:supH} hold. 
    %In addition, 
    Also, assume that for each peer $i\in \mathcal{N}$, 
    {\small\begin{equation}\label{Eq:varthetaUpperBound}
     \vartheta_i < \frac{\Pi}{\mathbb{E}[S]-\min_{j\in \mathcal{N}\backslash\{i\}}\mu_j}.
     %\KTHNcomment{ \vartheta_i <  \min_{ i\in \mathcal{M},\,  \mathcal{M}\subsetneq \mathcal{N}  }  \frac{\Pi}{\mathbb{E}[S_\mathcal{M}]} = \frac{\Pi}{\mathbb{E}[S] - \mu_i},}
    \end{equation}}%
    Then, there is no viable subgroup formation. In particular, the conclusion holds if the centralized insurer's loading $\theta$ satisfies 
     {\small \begin{equation}\label{Eq:thetaCond2}
        \theta\leq \frac{\Pi}{\mathbb{E}[S]-\min_{j\in \mathcal{N}}\mu_j}. 
    \end{equation}}%
\end{theorem}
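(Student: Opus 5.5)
Our plan is to argue by contradiction, using Lemma \ref{Lem:BoundPsiM} to reduce the problem to a comparison of loadings. Suppose a viable subgroup formation exists. Then there are $\emptyset\neq\mathcal{M}\subsetneq\mathcal{N}$ and a contract $(\bm{a}^{\mathcal{M}},P^{\mathcal{M}})$ satisfying Definition \ref{Def:subgroupformation}.

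\emph{Step 1: reduce to the minimal premium.} Since the reinsurer's expected utility is strictly increasing in the premium, $\mathbb{E}[U_R(W_R^{\mathcal{M}})]\geq\hat u_R^{\mathcal{N}}$ forces $P^{\mathcal{M}}\geq\hat P_{\min}^{\mathcal{M}}(\bm{a}^{\mathcal{M}})$. Each peer's utility is strictly decreasing in the premium, so for every $i\in\mathcal{M}$,
\[
\hat u_i^{\mathcal{N}}\leq \mathbb{E}[U_i(W_i^{\mathcal{M}}(\bm{a}^{\mathcal{M}},P^{\mathcal{M}}))]\leq \hat\Psi_i^{\mathcal{M}}(\bm{a}^{\mathcal{M}}).
\]

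\emph{Step 2: invoke Lemma \ref{Lem:BoundPsiM}.} The lemma gives $\hat u_i^{\mathcal{N}}\leq U_i\bigl(w_i-(1+\Pi/\mathbb{E}[S_{\mathcal{M}}])\mu_i\bigr)$. By the definition \eqref{Eq:vartheta}, the left side equals $U_i(w_i-(1+\vartheta_i)\mu_i)$. Because $U_i$ is strictly increasing and $\mu_i>0$, this yields $\vartheta_i\geq \Pi/\mathbb{E}[S_{\mathcal{M}}]$ for every $i\in\mathcal{M}$.

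\emph{Step 3: make the bound subgroup-independent.} This is the only step needing care. Fix $i\in\mathcal{M}$. Since $\mathcal{M}$ is a proper subset, some $j\neq i$ lies outside $\mathcal{M}$. Hence $\mathbb{E}[S_{\mathcal{M}}]\leq\mathbb{E}[S]-\mu_j\leq \mathbb{E}[S]-\min_{k\in\mathcal{N}\setminus\{i\}}\mu_k$.

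We also need $\Pi\geq 0$ to preserve the direction of the inequality when the denominator decreases. This holds because $\hat u_R^{\mathcal{N}}>d_R=U_R(w_R)$, which in fact gives $\Pi>0$.

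Therefore $\vartheta_i\geq\Pi/\mathbb{E}[S_{\mathcal{M}}]\geq \Pi/(\mathbb{E}[S]-\min_{k\neq i}\mu_k)$, which contradicts \eqref{Eq:varthetaUpperBound}. The denominators are positive, since $n\geq2$ and all $\mu_k>0$.

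\emph{Step 4: the sufficient condition on $\theta$.} By \eqref{Eq:smallvartheta}, $\vartheta_i<\theta$. Moreover, $\min_{j\in\mathcal{N}}\mu_j\leq\min_{j\neq i}\mu_j$, so $\Pi/(\mathbb{E}[S]-\min_{j\in\mathcal{N}}\mu_j)\leq \Pi/(\mathbb{E}[S]-\min_{j\neq i}\mu_j)$, using $\Pi>0$ and positive denominators. Combining these, \eqref{Eq:thetaCond2} implies \eqref{Eq:varthetaUpperBound} for every $i$, and the first part of the theorem applies.
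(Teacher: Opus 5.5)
Your proof is correct and follows essentially the same route as the paper. You reduce to the reinsurer's indifference premium $\hat P^{\mathcal M}_{\min}$, apply Lemma \ref{Lem:BoundPsiM}, bound $\mathbb{E}[S_{\mathcal M}]\le \mathbb{E}[S]-\min_{j\neq i}\mu_j$, and derive the $\theta$-condition from \eqref{Eq:smallvartheta}. The only differences are that the paper argues directly rather than by contradiction, and it leaves implicit the facts $\Pi>0$ and positive denominators, which you correctly make explicit.
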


\begin{proof}
    See Appendix \ref{App:core}.
\end{proof}

By interpreting the term on the right-hand side of \eqref{Eq:varthetaUpperBound} as a lower bound to the indifference loading for the subgroup $\mathcal{M}$, Theorem \ref{Thm:core} implies that there is no viable subgroup when this bound is larger than the indifference loadings for each peer in the group $\mathcal{N}$. In other words, peers cannot obtain a better price offer (i.e., a lower CE loading) by forming a subgroup, and therefore they will remain in the grand coalition $\mathcal{N}$. In particular, Condition \eqref{Eq:thetaCond2} implies that any subgroup can do no better than the centralized insurer’s offer and, therefore, no better than the grand coalition $\mathcal{N}$.

Conditions \eqref{Eq:varthetaUpperBound} and \eqref{Eq:thetaCond2} can also be viewed as requiring the P2P reinsurer's CE net profit $\Pi$ in the grand coalition to be sufficiently large. 
%value of the P2P reinsurer's welfare in the grand coalition,
This CE net profit is the P2P reinsurer's opportunity cost of leaving the grand coalition to serve a smaller subgroup. As such, if the optimal utility of the P2P reinsurer under the grand coalition $\mathcal{N}$ is sufficiently large, no subgroups can be formed to pay such a high price to incentivize the P2P reinsurer to provide service to a smaller group.

% \subsection{No price fairness case under exponential utility setting}

\subsection{Stability without price fairness}\label{sec:stability_wo_price_fairnes}
We study the stability of the solution to Problem \eqref{Prob:NoFairness} in this subsection. Throughout Section \ref{sec:stability_wo_price_fairnes}, we impose Assumption
% \ref{Ass:non_degenerate_S} and 
\ref{ass:mgf} and Conditions \eqref{Eq:IRCond} and \eqref{eq:Cond:exp:no:fairness} so that Proposition \ref{Prop:NoFairExpSol} supplies the unique optimal contract $(\bm{a}^*,a_R^*,\bm{b}^*,P^*)$.
As in Section \ref{sec:stability_price_fair}, we assume that the P2P reinsurer must be included in every viable subgroup. Let $\Tilde{u}_R^{\mathcal{N}}$ and $\Tilde{u}_i^{\mathcal{N}}$ be optimal expected utility levels of the P2P reinsurer and Peer $i$ under the optimal contract $(\bm{a}^*,a_R^*,\bm{b}^*,P^*)$,
% depicted in Proposition \ref{Prop:NoFairExpSol}, 
respectively. Following the same spirit of \eqref{Eq:vartheta} and \eqref{Eq:Pi}, we associate them with the CE artificial loading $\Tilde{\vartheta}_i$ and the CE net profit $\Tilde{\Pi}$, defined as $\Tilde{\Pi}:=U_R^{-1}\left(\Tilde{u}_R^{\mathcal{N}}\right)-w_R$, and $\Tilde{\vartheta}_i=\frac{w_i-U_i^{-1}\left(\Tilde{u}_i^{\mathcal{N}}\right)}{\mu_i}-1$, for any $i\in\mathcal{N}$. The wealth levels of each agent in the subgroup $\mathcal{M}$ {under the risk-bearing-payment strategy $(\bm{a}^{\mathcal{M}},\bm{b}^{\mathcal{M}})\in \Delta^{|\mathcal{M}|}\times\mathbb{R}^{|\mathcal{M}|}$}  
are defined as
{\small\begin{align*}
\Tilde{W}_i^{\mathcal{M}}(\bm{a}^{\mathcal{M}},\bm{b}^{\mathcal{M}})&:=w_i-a_i^{\mathcal{M}}S_{\mathcal{M}}-b_i^{\mathcal{M}}, i\in \mathcal{M},\quad  
\Tilde{W}_R^{\mathcal{M}}(\bm{a}^{\mathcal{M}},\bm{b}^{\mathcal{M}}):=w_R+\sum_{i\in \mathcal{M}}b_i^{\mathcal{M}}-(1+\tau)\left(1-\sum_{i\in \mathcal{M}}a_i^{\mathcal{M}}\right)S_{\mathcal{M}},
\end{align*}}%
where we have omitted the components $a_R^\mathcal{M}$ and $P^\mathcal{M}$, since they are completely determined by $\bm{a}^\mathcal{M}$ and $\bm{b}^\mathcal{M}$ via the market-clearing and aggregate premium conditions, respectively.

Similar to Definition \ref{Def:subgroupformation}, we define the viable subgroup in the context of this section, which also guarantees IR constraints are strictly satisfied for all viable subgroup members. 

\begin{definition}[Subgroup formation without price fairness]\label{Def:subgroupformationNG}
    A subgroup $\emptyset\neq\mathcal{M}\subsetneq \mathcal{N}$ is called viable if there exists $(\bm{a}^{\mathcal{M}},\bm{b}^{\mathcal{M}})\in\Delta^{|\mathcal{M}|}\times\mathbb{R}^{|\mathcal{M}|}$ such that for all $j\in \mathcal{M}\cup\{R\}$, $\mathbb{E}[U_j(\Tilde{W}_j^{\mathcal{M}}(\bm{a}^{\mathcal{M}},\bm{b}^{\mathcal{M}}))]\geq \Tilde{u}_j^{\mathcal{N}}$. 
    % and $\mathbb{E}[U_R(\Tilde{W}_R^{\mathcal{M}}(\bm{a}^{\mathcal{M}},\bm{b}^{\mathcal{M}}))]\geq \Tilde{u}_R^{\mathcal{N}}$. 
%     \begin{equation}\label{Eq:FeaSubgroupNF}
% \mathbb{E}[U_i(\Tilde{W}_i^{\mathcal{M}}(\bm{a}^{\mathcal{M}},\bm{b}^{\mathcal{M}}))]\geq \Tilde{u}_i^{\mathcal{N}}\quad\text{and}\quad\mathbb{E}[U_R(\Tilde{W}_R^{\mathcal{M}}(\bm{a}^{\mathcal{M}},\bm{b}^{\mathcal{M}}))]\geq \Tilde{u}_R^{\mathcal{N}}.
% \end{equation}
\end{definition}

In Theorem \ref{Thm:core}, the price-fairness condition dictates a side-payment level for each peer. 
Here, the unboundedness of the side payments $b_i$ in Problem \eqref{Prob:NoFairness} allows us only to develop a subgroup-independent aggregate condition ensuring that peers in the subgroup $\mathcal{M}$ cannot induce the P2P reinsurer to deviate from the grand coalition. This condition is presented in the following proposition.

\begin{proposition}\label{Prop:core2}
    % Under the stated preconditions 
 %   \KTHNcomment{perhaps just write them explicitly if not too troublesome?}  
    % at the beginning of Section \ref{sec:stability_wo_price_fairnes}, 
Suppose that Assumption
% \ref{Ass:non_degenerate_S} and 
\ref{ass:mgf}, Conditions \eqref{Eq:IRCond} and \eqref{eq:Cond:exp:no:fairness}
hold. If, in addition, $\Tilde{\Pi} > \max_{\emptyset \neq \mathcal{M}\subsetneq \mathcal{N}} \sum_{i\in \mathcal{M}}\Tilde{\vartheta}_i \mu_i$,  then there is no viable subgroup formation. 
In particular, 
%Particularly,
the condition holds if $\Tilde{\Pi} > \theta\mathbb{E}[S]$. 
\end{proposition}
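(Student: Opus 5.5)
Suppose, for contradiction, that some subgroup $\emptyset\neq\mathcal{M}\subsetneq\mathcal{N}$ and some strategy $(\bm{a}^{\mathcal{M}},\bm{b}^{\mathcal{M}})$ satisfy the viability inequalities of Definition \ref{Def:subgroupformationNG}. The plan is to turn each expected-utility inequality into a certainty-equivalent (CE) inequality and then add them up. With exponential utility, which is the maintained setting of this subsection, CEs are translation-equivariant and the side payments enter additively. Summing the CEs therefore cancels the unbounded side payments $b_i^{\mathcal{M}}$, and only the risk-bearing terms survive.

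\textbf{Step 1 (translate the reinsurer's constraint).} Write $P^{\mathcal{M}}=\sum_{i\in\mathcal{M}}b_i^{\mathcal{M}}$. By Jensen's inequality, $\mathbb{E}[U_R(\Tilde W_R^{\mathcal{M}})]\le U_R(\mathbb{E}[\Tilde W_R^{\mathcal{M}}])$. Since $U_R$ is increasing and $\Tilde{\Pi}=U_R^{-1}(\Tilde u_R^{\mathcal{N}})-w_R$, the reinsurer's viability inequality gives
\[
P^{\mathcal{M}}-(1+\tau)a_R^{\mathcal{M}}\mathbb{E}[S_{\mathcal{M}}]\ \ge\ \Tilde{\Pi}.
\]
Because $\tau\ge0$ and $a_R^{\mathcal{M}}\ge 0$, it follows that $P^{\mathcal{M}}-a_R^{\mathcal{M}}\mathbb{E}[S_{\mathcal{M}}]\ge\Tilde{\Pi}$.

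\textbf{Step 2 (translate each peer's constraint).} For $i\in\mathcal{M}$, Jensen's inequality gives $\mathbb{E}[U_i(\Tilde W_i^{\mathcal{M}})]\le U_i(w_i-a_i^{\mathcal{M}}\mathbb{E}[S_{\mathcal{M}}]-b_i^{\mathcal{M}})$. By the definition of $\Tilde{\vartheta}_i$, we have $\Tilde u_i^{\mathcal{N}}=U_i(w_i-(1+\Tilde\vartheta_i)\mu_i)$. Combining these with the viability inequality and strict monotonicity of $U_i$ yields
\[
a_i^{\mathcal{M}}\mathbb{E}[S_{\mathcal{M}}]+b_i^{\mathcal{M}}\ \le\ (1+\Tilde\vartheta_i)\mu_i .
\]

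\textbf{Step 3 (sum and reach a contradiction).} Summing the Step 2 inequalities over $i\in\mathcal{M}$ and using the market-clearing condition $\sum_{i\in\mathcal{M}}a_i^{\mathcal{M}}=1-a_R^{\mathcal{M}}$ gives
\[
(1-a_R^{\mathcal{M}})\mathbb{E}[S_{\mathcal{M}}]+P^{\mathcal{M}}\ \le\ \mathbb{E}[S_{\mathcal{M}}]+\sum_{i\in\mathcal{M}}\Tilde\vartheta_i\mu_i,
\]
that is, $P^{\mathcal{M}}-a_R^{\mathcal{M}}\mathbb{E}[S_{\mathcal{M}}]\le \sum_{i\in\mathcal{M}}\Tilde\vartheta_i\mu_i$. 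Combining this with Step 1 gives $\Tilde{\Pi}\le\sum_{i\in\mathcal{M}}\Tilde\vartheta_i\mu_i\le\max_{\mathcal{M}'}\sum_{i\in\mathcal{M}'}\Tilde\vartheta_i\mu_i$, which contradicts the hypothesis. Hence no viable subgroup exists.

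\textbf{Step 4 (the sufficient condition).} Strict feasibility of the grand-coalition optimum from Proposition \ref{Prop:NoFairExpSol} gives $\Tilde u_i^{\mathcal{N}}>d_i=U_i(w_i-(1+\theta)\mu_i)$, so $\Tilde\vartheta_i<\theta$ for each $i$. Therefore, for every proper $\mathcal{M}$,
\[
\sum_{i\in\mathcal{M}}\Tilde\vartheta_i\mu_i<\theta\,\mathbb{E}[S_{\mathcal{M}}]\le\theta\,\mathbb{E}[S],
\]
using $\mu_i>0$ and $\theta\ge0$. Consequently $\Tilde\Pi>\theta\mathbb{E}[S]$ implies the main condition.

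The main obstacle I expect is the sign bookkeeping in Steps 1–3. The Jensen bounds must point the right way, namely an upper bound on each agent's expected utility. I must also check that the reinsurer's deadweight cost $\tau a_R^{\mathcal{M}}\mathbb{E}[S_{\mathcal{M}}]$ only helps the contradiction, which holds because $\tau\ge0$ and $a_R^{\mathcal{M}}\ge0$. If a sharper argument is desired, one can replace the Jensen bounds with the exact exponential CEs $C_i(\cdot)$ and $C_R(\cdot)$ from \eqref{eq:m:C}, computed with $S_{\mathcal{M}}$ in place of $S$, but the Jensen route suffices for the stated claim.
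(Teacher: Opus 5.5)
Your proof is correct and takes essentially the paper's route. Both arguments add the peers' viability constraints to the reinsurer's so that the side payments $b_i^{\mathcal{M}}$ cancel, and both use Jensen's inequality to bound each risk term below by its mean, which yields $\Tilde{\Pi}\le\sum_{i\in\mathcal{M}}\Tilde{\vartheta}_i\mu_i$. The only difference is the ordering: the paper first writes the exact exponential certainty-equivalent (log-MGF) constraints and applies Jensen after summing, whereas you apply Jensen directly at the utility level before summing, which shows the argument does not actually need the exponential form of the utilities.
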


\begin{proof}
    See Appendix \ref{App:core2}.
\end{proof}

\section{Numerical analysis}\label{Sec:Numeric}

%In this section, we conduct a numerical analysis 
%to study the impact of price fairness under the exponential utility setting. Furthermore, we investigate the impact of pool size under the assumption of homogeneity.

In this section, we conduct a numerical analysis under exponential utilities to investigate the impact of price fairness and the effect of pool size under homogeneous peers.

\subsection{Impact of price fairness}\label{sec:numerical_price_fairness}

In this subsection, we consider a three-peer example to investigate the impact of price fairness on the P2P insurance contract and all agents' welfare. Peers' losses are modeled by independent Gamma distributions, i.e., $X_i\sim \mathrm{Gamma}(\kappa_i,s_i)$, $i=1,2,3$, where $\kappa_i$ and $s_i$ represent the shape and scale parameters, respectively. In the sequel, we take $(\kappa_1,\kappa_2,\kappa_3)=(0.69,0.73,1.46)$ and $(s_1,s_2,s_3)=(1.7,1.85,1.09)$. Therefore, we have $(\mu_1,\mu_2,\mu_3)=(1.173,1.3505,1.5914)$ and $\mathbb{E}[S]=4.1149$.
% and \KTHNcomment{(seems the strategy only depends on $S$ and $\mu_i$ but not higher individual moments, so we may skip writing the variance.)} \takwacomment{I mentioned at the end that variance is irrelevant (so I need some proof for that); you can decide when you finish the last part.}
% \begin{equation*}
% (Var(X_1),Var(X_2),Var(X_3))=(2.2870,3.9439,1.7346).
% \end{equation*}

We equip all agents with exponential utilities, where agents' initial wealth does not affect the numerical analysis. The price-fairness case is solved using the characterization in Theorem \ref{Thm:EUsol}, whereas the no-price-fairness benchmark is solved using Proposition \ref{Prop:NoFairExpSol}. 
%In this case, their initial wealth levels would not influence the numerical analysis.
%, thereby making the results more compact.
%For bargaining power, 
We set $\delta_i= {\delta :=} (1-\delta_R)/3$ so that all peers have equal bargaining power. Other baseline parameters are reported in Table \ref{Tab:baseline2}. We note that the parameter choices, including the varying values of $\delta_R$ and $\theta$ considered below, satisfy the conditions that prevent subgroup formation with and without price fairness (see Theorem \ref{Thm:core} and Proposition \ref{Prop:core2}). 

    \begin{table}[H]
\centering
\begin{tabular}{cccccccc}
%\centering
\toprule
$\gamma_1$ & $\gamma_2$ & $\gamma_3$& $\gamma_R$ & $\tau$ & $\theta$ & $\delta_R$ & {$\delta$}  \\ \hline
0.33        & 0.29 & 0.51     & 0.14        & 0.025       & 0.4   & 0.6      & 0.1333          \\ \bottomrule
\end{tabular}
\caption{Baseline setting of agents' preferences.}
\label{Tab:baseline2}
\end{table}

\subsubsection{Comparison of P2P insurance contracts}

\begin{table}[htbp]
\centering
\begin{tabular}{ccccccccc}
%{|c|c|c|c|c|c|c|c|c|}
\toprule
$\delta_R$ & $a_R^*$ & $a_1^*$ & $a_2^*$ & $a_3^*$ & $b_1^*$ & $b_2^*$ & $b_3^*$ & $P^*$ \\ \midrule
0.6 & 0.40373 & 0.19088 & 0.23637 & 0.16901 & 0.65587 & 0.68678 & 1.25999 & 2.60264 \\ 
0.65 & 0.40378 & 0.19116 & 0.23589 & 0.16917 & 0.67367 & 0.71057 & 1.28503 & 2.66926 \\ 
0.7 & 0.40383 & 0.19145 & 0.23528 & 0.16944 & 0.69156 & 0.73508 & 1.30984 & 2.73648 \\ 
0.75 & 0.4039 & 0.19177 & 0.23448 & 0.16984 & 0.70955 & 0.76059 & 1.33435 & 2.8045 \\ 
0.8 & 0.40399 & 0.19212 & 0.23343 & 0.17046 & 0.72769 & 0.78746 & 1.35839 & 2.87354 \\ 
0.85 & 0.40411 & 0.19252 & 0.232 & 0.17137 & 0.74597 & 0.81625 & 1.38162 & 2.94384 \\ 
\bottomrule
\end{tabular}

\caption{Contract results for different values of $\delta_R$ under price fairness. 
%\KTHNcomment{What about plot it in 2 figures, both with $\delta_R$ as the $x$-axis. First figure consists of the curves $a_R,a_i$; and the second consists of $\bm{b}$ and $P$. Which one would be better? But to compare with no price fairness, one may need to combine two figures}\takwacomment{That will become visually indifferent, unless we do some trick on the Y-axis and skip the nonplotting range.} \KTHNcomment{Your call. We can start the y-axis at 0.1x if we want to plot.}
}
\label{tab:delta_r_-_price_fairness}
\end{table}
\begin{table}[htbp]
\centering
\begin{tabular}{ccccccccc}
\toprule
$\theta$ & $a_R^*$ & $a_1^*$ & $a_2^*$ & $a_3^*$ & $b_1^*$ & $b_2^*$ & $b_3^*$ & $P^*$ \\ \midrule
0.38 & 0.40375 & 0.19102 & 0.23614 & 0.16908 & 0.64146 & 0.67179 & 1.24092 & 2.55417 \\ 
0.39 & 0.40374 & 0.19095 & 0.23626 & 0.16904 & 0.64867 & 0.67927 & 1.25047 & 2.5784 \\ 
0.4 & 0.40373 & 0.19088 & 0.23637 & 0.16901 & 0.65587 & 0.68678 & 1.25999 & 2.60264 \\ 
0.41 & 0.40372 & 0.19082 & 0.23648 & 0.16898 & 0.66306 & 0.69431 & 1.26948 & 2.62686 \\ 
0.42 & 0.40372 & 0.19075 & 0.23658 & 0.16896 & 0.67025 & 0.70186 & 1.27895 & 2.65107 \\ 
0.43 & 0.40371 & 0.19068 & 0.23667 & 0.16894 & 0.67744 & 0.70944 & 1.2884 & 2.67528 \\ 
\bottomrule
\end{tabular}

\caption{Contract results for different values of $\theta$ under price fairness.}
\label{tab:theta_-_price_fairness}
\end{table}

\begin{table}[!ht]
    \centering
    
    \begin{subtable}[t]{0.5\textwidth}
    \centering
     \begin{tabular}{ccccc}
\toprule
$\delta_R$ & $b_1^*$ & $b_2^*$ & $b_3^*$ & $P^*$ \\ \midrule
0.6 & 0.51091 & 0.62855 & 1.43273 & 2.57219 \\ 
0.65 & 0.53331 & 0.6511 & 1.45447 & 2.63888 \\ 
0.7 & 0.55588 & 0.6738 & 1.47644 & 2.70612 \\ 
0.75 & 0.57867 & 0.69671 & 1.49871 & 2.77409 \\ 
0.8 & 0.60175 & 0.7199 & 1.52136 & 2.84301 \\ 
0.85 & 0.6252 & 0.74343 & 1.54447 & 2.9131 \\ \bottomrule
\end{tabular}
\caption{Impact of $\delta_R$.}
    \end{subtable}\hfill
    \begin{subtable}[t]{0.495\textwidth}
    \centering
        \begin{tabular}{ccccc}
\toprule
$\theta$ & $b_1^*$ & $b_2^*$ & $b_3^*$ & $P^*$ \\ \midrule
0.38 & 0.4988 & 0.61297 & 1.41191 & 2.52368 \\ 
0.39 & 0.50486 & 0.62076 & 1.42232 & 2.54794 \\ 
0.4 & 0.51091 & 0.62855 & 1.43273 & 2.57219 \\ 
0.41 & 0.51695 & 0.63632 & 1.44315 & 2.59642 \\ 
0.42 & 0.52299 & 0.6441 & 1.45356 & 2.62065 \\ 
0.43 & 0.52903 & 0.65186 & 1.46397 & 2.64486 \\ \bottomrule
\end{tabular}
\caption{Impact of $\theta$.}
    \end{subtable}
    \caption{Contract results without price fairness, where $a_R^*=0.39671$, $a_1^*=0.21662$, $a_2^*=0.2465$, and $a_3^*=0.14017$; these risk-sharing proportions are invariant to changes in $\delta_R$ and $\theta$.}
    \label{tab:NPFcontract}
\end{table}

% \begin{itemize}
%     \item Table \ref{tab:delta_r_-_price_fairness}-\ref{tab:theta_-_price_fairness} show that price fairness does not change agents' risk bearing significantly, even if $\delta_R$ and $\theta$ alternate.
%     \item With PF, $a_R^*$ increases slightly when $\delta_R$ increases and decreases slightly when $\theta$ increases.
%     \item Comparing Table \ref{tab:delta_r_-_price_fairness}-\ref{tab:theta_-_price_fairness} with Table \ref{tab:NPFcontract}, $a_R^*$ is higher than the NPF counterpart.
%     \item Note that according to Proposition \ref{Prop:NoFairExpSol}, without price fairness, agents' risk bearing does not respond to the change of $\delta_R$ and $\theta$, shown in Table \ref{tab:NPFcontract}.
%     \item With and without PF, all peers' side payments increase with $\delta_R$ and $\theta$, so does the P2P insurance premium. 
% \end{itemize}

%%%% contrast btwn price fairness or not 
%We first examine the impact of price fairness on the P2P insurance contracts.

%% general observations 
Tables \ref{tab:delta_r_-_price_fairness} and \ref{tab:theta_-_price_fairness} report the optimal P2P insurance contracts for different values of $\delta_R$ and $\theta$, respectively, under the price-fairness condition \eqref{Eq:PricingFairness}. The results in the absence of the fairness condition are depicted in Table \ref{tab:NPFcontract}.
%The results are shown in Tables \ref{tab:delta_r_-_price_fairness}-\ref{tab:NPFcontract} with different $\delta_R$ and $\theta$. 
% At first glance, agents with lower risk aversion tend to bear larger shares of the aggregate risk. \KTHNcomment{(would this be affected by the differences in their mean as well? which one is more prominent or should we disclose both?)} \takwacomment{The most prominent effect lies in $\gamma_i$ and $\gamma_R$, $\mu_i$ will impact on it but its effect is rather minor and enters via price-fairness condition, when comparing to the no price fairness case.} \KTHNcomment{Can we observe this from the results?} \takwacomment{Or I put it in this way: without price fairness, the risk sharing $a_i$ only depends on the risk aversion parameters and $\tau$ (see (26)), and then the changes will be due to additional involved parameters, so even the effect of $\mu_i$ can be obscure. So, the rather safe choice is not to assert it as we don't have enough evidence for that.} Therefore, price fairness does not significantly alter the risk-sharing pattern. 
From the tables, we see that the risk-sharing pattern is primarily driven by the agents' risk aversion levels, where agents with higher risk aversion parameters (Peers 1 and 3) tend to bear smaller shares of the aggregate risk. {Indeed, in the absence of price fairness, $\bm{a}^*$ depends solely on $\tau$ and the risk aversion parameters; see \eqref{Eq:OptNFaRa}.
In addition, peers that bear a larger share of the aggregate risk do not
necessarily contribute smaller side payments. In particular, Peer 2 bears the largest risk exposure but contributes more than Peer 1. According to \eqref{Eq:PricingFairness}, each peer's side payment under price fairness depends on both the peer's ex ante expected loss and risk-bearing proportion.
Although $a_2^*>a_1^*$ reduces Peer 2's required side payment relative to Peer 1, this effect is outweighed by $\mu_2>\mu_1$, resulting in $b_2^*>b_1^*$.
% Peers that bear a larger share of the aggregate risk contribute smaller side payments, and vice versa; in particular, Peer 2, who bears the largest risk exposure, contributes the least.
} 
%\takwacomment{Larger $a_i$ leads to smaller $b_i$ seems corresponding to price fairness case?}

%% impact of price fariness
{Examining the impact of the price-fairness condition, we see that the dependence of $\bm{a}^*$ on $\delta_R$ and $\theta$, albeit modest, stands in contrast to the case without price fairness, where the risk-sharing allocation is completely independent of these two parameters.}
%Nonetheless, it is observed that the P2P reinsurer and Peer 2, who are less risk-averse, bear more risk under the price fairness contracts; on the other hand, Peer 1 and Peer 3, who are more risk-averse, bear less risk accordingly. \KTHNcomment{(same as above)} 
In addition, the price-fairness condition limits the flexibility to use side payments to adjust premiums and risk allocations. Consequently, the variation in risk allocations among peers is reduced relative to the case without price fairness. Regarding the side payments, we observe that the payments made by Peer 1 and Peer 2
drop when price fairness is removed, whereas Peer 3's side
payment increases significantly.
% and Peer 1's payment changes only slightly. 
The aggregate
premium is lower without price fairness in the baseline case. Equivalently, imposing price fairness raises the total premium and shifts part of the payment burden from the most risk-averse peer, Peer 3, toward the less risk-averse Peers 1 and  2. Thus, price fairness limits cross-subsidization based on expected losses, but heterogeneous risk aversion can still generate welfare redistribution in certainty-equivalent terms.

%%%% effect of \delta_R on a and b
With the presence of price fairness, as $\delta_R$ increases, greater power is placed on the P2P reinsurer’s utility in the bargaining solution, leading to a more favorable contract for the P2P reinsurer. In the present setting, where the P2P reinsurer is less risk-averse than all peers,
% such a contract need not correspond to lower risk exposure. 
% Instead, 
%\takwacomment{The commented sentence is rather weird that less risk aversion usually lead to higher risk exposure.}
she is willing to assume a larger share of the aggregate risk in exchange for higher compensation. This explains the observed increase in $a_R^*$ and in the side payments for all peers as $\delta_R$ increases, with the latter effect occurring regardless of whether the price-fairness condition is imposed. Consequently, the aggregate risk retained by the peer group decreases. Within the peer group, the adjustment is heterogeneous: the risk-bearing share of Peer 2 decreases, whereas those of the other peers (i.e., $a_1^*$ and $a_3^*$) increase.
%: since Peer 2 initially bears the largest share of risk among peers, its allocation decreases the most. \takwacomment{I am not sure the causality implied by this sentence.} \KTHNcomment{any alternative explanations on why $a_2$ mostly bear the drop?} \KTHNcomment{run a quick test to see if thats the case by having a larger $\mu_2$?} 
% \takwacomment{One possible explanation: increasing $\delta_R$ has a similar effect like fairness because each peer has lees power and thus becoming unified in terms of risk bearing and side-payment, so $a_2^*$ decreases to match this unification.}
Although Peer 2 is the least risk-averse peer and bears the largest share of the retained aggregate risk, it also faces the highest CE loading under price fairness; see Fig.~\ref{fig:PF_vartheta_vs_deltaR}. As the P2P reinsurer's bargaining power and the premium increase, the optimal contract modestly reduces Peer 2's exposure and reallocates part of the retained risk to Peers 1 and 3, consistent with the weighted marginal risk-sharing conditions in \eqref{Eq:HeteroFOC}.

% and is therefore the most sensitive to the resulting premium increase. A further reduction in Peer 2’s relatively small surplus carries a comparatively large marginal cost. Consequently, the pairwise marginal risk-sharing conditions (see \eqref{Eq:HeteroFOC}) are restored by reducing $a_2^*$ and reallocating most of this risk to Peers 1 and 3.
}
%, these increases are dominated by the reduction in Peer 2’s allocation, resulting in a lower total risk retained by the peer group.  

%%%% effect of \theta 
{As the loading $\theta$ offered by the alternative centralized insurance increases, we observe that under the price-fairness condition, the more risk-averse agents (Peers 1 and 3) experience a decrease in risk exposure, while Peer 2 exhibits an increase in $a_2^*$. In general, the reinsurance share $a_R^*$ decreases slightly and the side payments for all peers increase regardless of the price-fairness condition. 
%This pattern reflects a heterogeneous reallocation of marginal risk-bearing capacity within the peer group. 
As the outside option becomes less attractive, the bargaining solution adjusts asymmetrically across agents: risk-averse peers (Peers 1 and 3) reduce their exposure due to their high marginal disutility of risk, while the least risk-averse peer (Peer 2) serves as the marginal risk absorber and takes on slightly higher risk. Although Peer 2 absorbs more risk, this increase more than offsets the reductions among the more risk-averse agents, resulting in a slight increase in total peer risk retention. This also explains the increase in premiums, as a higher loading $\theta$ makes the outside option less attractive, allowing the P2P arrangement to charge higher premiums while remaining preferable to the substitute centralized-insurance contract. }

%Regarding peers' side payments and the total premium payment,  Tables \ref{tab:delta_r_-_price_fairness}-\ref{tab:NPFcontract} show that all payments increase with $\delta_R$, showcasing that when the P2P reinsurer possesses a larger market power, it can earn a higher premium income. Also, all peers pay more when the centralized insurance becomes more expensive. This indicates that these two types of insurance contracts are substitutes. 

% Moreover, Peer 2, the least risk-averse peer, pays a significantly larger amount under the P2P insurance contracts with price fairness, while the most risk-averse Peer 3's side payment declines substantially, and Peer 1's side payment also drops mildly.  
% In other words, the adoption of price fairness results in cross-subsidization from less risk-averse peers to more risk-averse peers, given the increase in total premium payments with price fairness.

\subsubsection{Welfare analysis}

\begin{figure}[!ht]
    \centering
    \begin{subfigure}[t]{0.48\linewidth}
        \centering\includegraphics[scale=0.435]{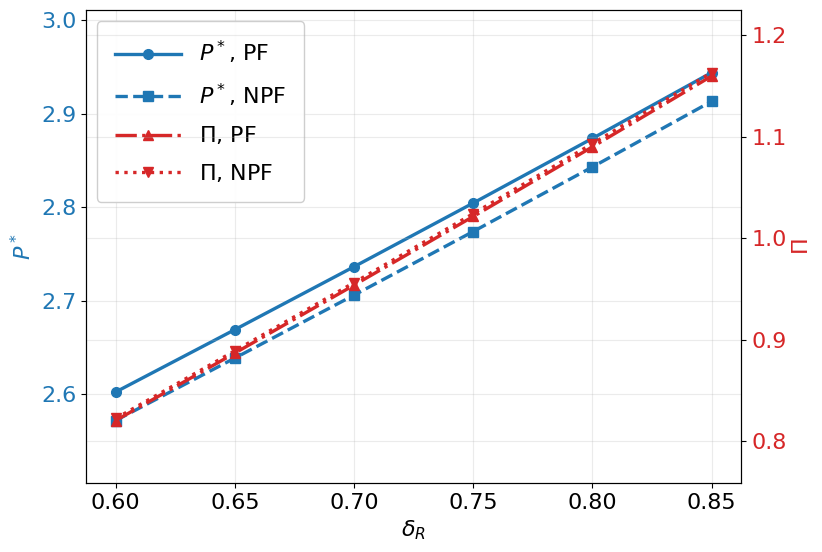}
    \subcaption{Impact of $\delta_R$.}
    \label{fig:P_Pi_vs_deltaR}
    \end{subfigure}
    \begin{subfigure}[t]{0.48\linewidth}
        \centering\includegraphics[scale=0.435]{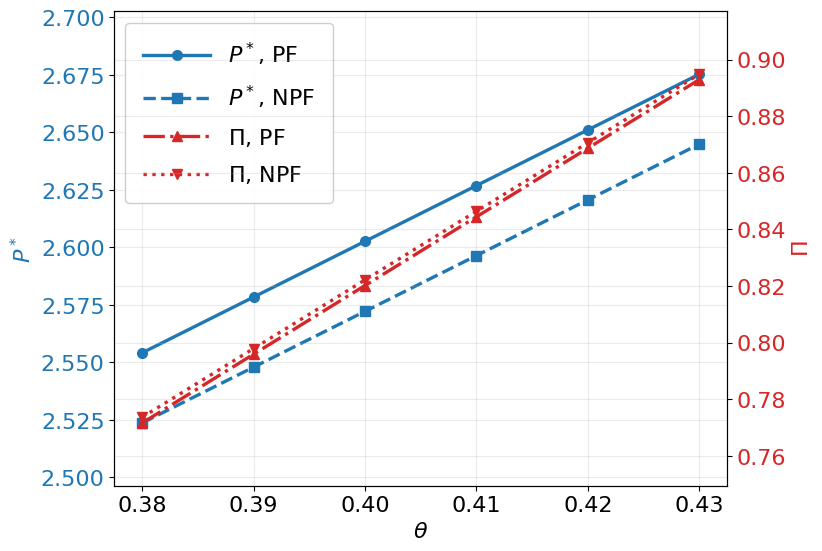}
    \subcaption{Impact of $\theta$.}
    \label{fig:P_Pi_vs_theta}
    \end{subfigure}
    \caption{Optimal premium and P2P reinsurer’s certainty-equivalent net profit with price fairness (PF) and without price fairness (NPF).}
    \label{fig:P_Pi}
\end{figure}

% \begin{itemize}
%     \item As shown in Fig.~\ref{fig:P_Pi}, with and without price fairness, $P^*$ and $\Pi$ increase with $\delta_R$ and $\theta$.
%     \item Notably, price fairness leads to higher premium income for the P2P reinsurer. 
%     \item Nonetheless, the P2P reinsurer's CE net profit is slightly higher in the absence of price fairness. Recall that $a_R^*$ is higher than the NPF counterpart; this means that a higher premium income cannot compensate for the extra risk bearing.
% \end{itemize}

We conduct a welfare analysis by comparing the {CE loadings} $\vartheta_i$ {for peers and the CE net profit $\Pi$ for the P2P reinsurer} (see \eqref{Eq:vartheta} and \eqref{Eq:Pi}) in different P2P insurance contracts with and without the price-fairness condition. Fig.~\ref{fig:P_Pi} plots the optimal premium $P^*$ and $\Pi$ against $\delta_R$ and $\theta$. We observe that, regardless of whether price fairness is imposed, $P^*$ and $\Pi$ increase with $\delta_R$ and $\theta$, where the former observation is consistent with the results shown in Tables \ref{tab:delta_r_-_price_fairness}-\ref{tab:NPFcontract}. This is intuitive, as both higher bargaining power for the P2P reinsurer and a less attractive centralized-insurance alternative strengthen the P2P reinsurer’s bargaining position, resulting in a higher CE profit. Notably, both panels show that imposing price fairness increases the P2P reinsurer’s premium income, despite lowering its CE profit.
%Nonetheless, the P2P reinsurer's CE net profit is slightly higher in the absence of price fairness, 
%informing a welfare loss.
Recall that $a_R^*$ is higher under price fairness (cf. Tables \ref{tab:delta_r_-_price_fairness}-\ref{tab:NPFcontract}), {which indicates that the extra risk taken by the P2P reinsurer outweighs the additional premium income, leading to an overall welfare loss.}

%this means that a higher premium income cannot compensate for the extra risk taking for the P2P reinsurer.

\begin{figure}[!ht]
    \centering
    \begin{subfigure}[t]{0.495\linewidth}
        \centering\includegraphics[scale=0.43]{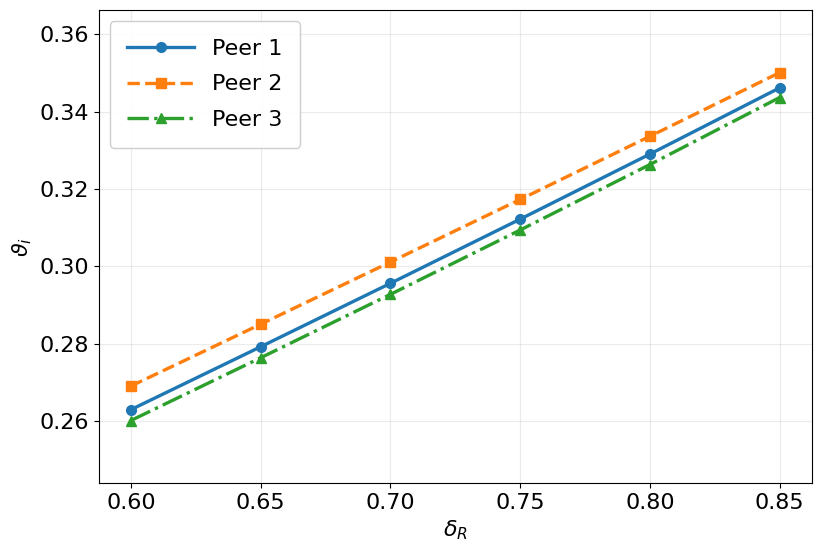}
    \subcaption{Impact of $\delta_R$ with price fairness.}
    \label{fig:PF_vartheta_vs_deltaR}
    \end{subfigure}
    \begin{subfigure}[t]{0.495\linewidth}
        \centering\includegraphics[scale=0.43]{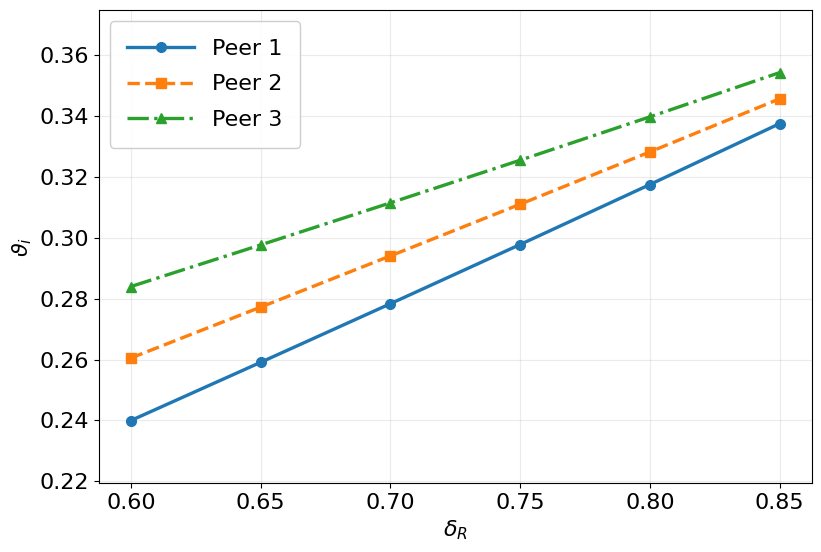}
    \subcaption{Impact of $\delta_R$ without price fairness.}
    \label{fig:NPF_vartheta_vs_deltaR}
    \end{subfigure}
    \begin{subfigure}[t]{0.495\linewidth}
        \centering\includegraphics[scale=0.43]{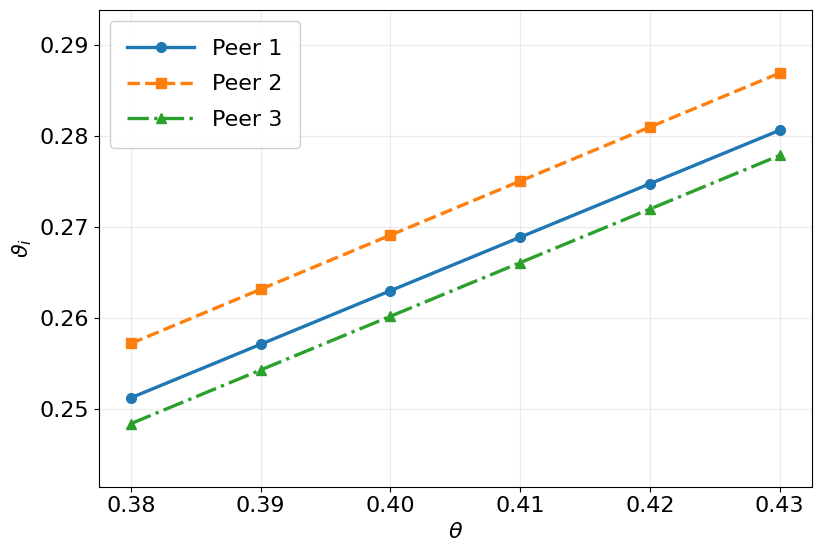}
    \subcaption{Impact of $\theta$ with price fairness.}
    \label{fig:PF_vartheta_vs_theta}
    \end{subfigure}
    \begin{subfigure}[t]{0.495\linewidth}
        \centering\includegraphics[scale=0.43]{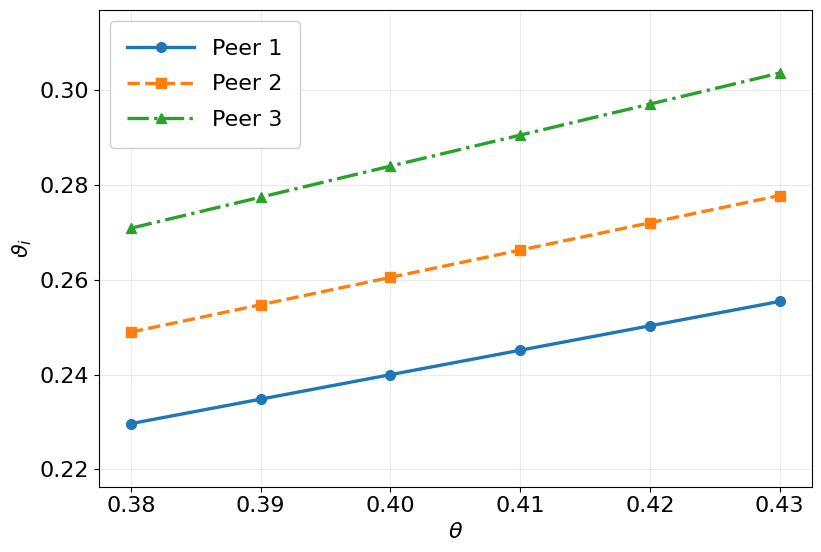}
    \subcaption{Impact of $\theta$ without price fairness.}
    \label{fig:NPF_vartheta_vs_theta}
    \end{subfigure}
    \caption{Certainty-equivalent loadings. 
    %\Timcomment{Use same $y$-axis, as they are similar}
    } 
    %\takwacomment{Done.}
    \label{fig:varthetai}
\end{figure}

% \begin{itemize}
%     \item In Fig.~\ref{fig:varthetai}, all peers' CE loadings increase with $\delta_R$ and $\theta$.
%     \item Juxtaposing Fig.~\ref{fig:PF_vartheta_vs_deltaR} and Fig.~\ref{fig:NPF_vartheta_vs_deltaR}, price fairness narrows the differences of CE loadings, which can also be observed when comparing Fig.~\ref{fig:PF_vartheta_vs_theta} and Fig.~\ref{fig:NPF_vartheta_vs_theta}.
%     \item The same narrowing effect can be seen in Fig.~\ref{fig:NPF_vartheta_vs_deltaR} when $\delta_R$ is large.
%     \item PF changes the order of paid CE loadings from $3>1>2$ to $2>1>3$.
%     \begin{itemize}
%         \item Without PF, the order $3>1>2$ correspond to the ranking of risk aversion. 
%         \item \takwacomment{The main reason of shifting is that it follows the order of $1/\mu_i$ as }
%         \begin{equation*}
%             \vartheta_i^{PF}=\rho+\frac{\ln\mathbb{E}[e^{\gamma_ia_i^*(S-\mathbb{E}[S])}]}{\mu_i\gamma_i}\approx\rho+\frac{\gamma_i(a_i^*)^2}{2\mu_i}Var(S),
%         \end{equation*}
%         \takwacomment{where the approximation follows by the Taylor expansion,
%         and the effect of $1/\mu_i$ dominates. Hence, the peer with more loss severity in terms of mean loss gets the cross subsidization due to the price fairness.}
%     \item When PF is imposed, the loss is taken into account: Peer 2's loss is more severe in terms of mean and variance, compared to Peer 1. {\color{red}Not the mean}
%     \end{itemize}
% \end{itemize}

Fig.~\ref{fig:varthetai} plots all peers' CE loadings against $\delta_R$ and $\theta$. In general, all peers' CE loadings increase with $\delta_R$ and $\theta$, implying welfare deterioration due to the P2P reinsurer's increasing bargaining power or more expensive alternatives. Recall from the observations in Tables \ref{tab:delta_r_-_price_fairness}-\ref{tab:NPFcontract} that the impact of $\delta_R$ and $\theta$ on risk sharing is modest. The drop in peers' welfare is thus mainly due to the higher side payments. 
%Along with the implications in Fig.~\ref{fig:P_Pi}, the P2P reinsurer benefits more from the contracts with larger $\delta_R$ and $\theta$ at the expense of peers' welfare. <-- a bit repititve so i hide it   

Comparing Fig.~\ref{fig:PF_vartheta_vs_deltaR} with Fig.~\ref{fig:NPF_vartheta_vs_deltaR}, and Fig.~\ref{fig:PF_vartheta_vs_theta} with Fig.~\ref{fig:NPF_vartheta_vs_theta}, respectively, we observe that price fairness narrows the differences in CE loadings among peers and lowers the variation of peers' welfare. 
In particular, this effect 
becomes more pronounced as $\delta_R$ increases. 
Moreover, peers' CE loadings converge as $\delta_R$ increases.
As the P2P reinsurer captures a larger share of the contractual surplus, the utility surplus
%distribution of welfare 
among peers becomes more homogeneous, leading to more similar CE loadings.

The last noteworthy observation concerns the ordering of CE loadings. The CE loading $\vartheta_3$ is the smallest when price fairness is imposed, but becomes the largest when price fairness is removed.
% $\vartheta_2$ becomes the largest while
% being the smallest without price fairness, and the reverse happens to
% Peer 3. 
This echoes the cross-subsidization effect observed earlier under the fairness condition, whereby the less risk-averse peers transfer welfare to the more risk-averse peers. To see this mathematically, first define $\rho_i$ as the solution to $\mathbb{E}[a_i^{*,NPF}S+b_i^{*,NPF}]=(1+\rho_i)\mu_i$. Then, by Taylor expansion, we obtain the following CE loadings with and without the price-fairness condition:
% \KTHNcomment{what is $\rho$ here? The $\rho$ in \eqref{Eq:PricingFairness}? } \takwacomment{Yes}
{\small\begin{align}\label{Eq:CEdecomposition}
%\begin{split}
    \vartheta_i^{PF}&= \rho\left(P^{*,PF},a_R^{*,PF}\right)  +\frac{\ln\mathbb{E}\left[e^{\gamma_ia_i^{*,PF}(S-\mathbb{E}[S])}\right]}{\mu_i\gamma_i}\approx \rho(P^{*,PF},a_R^{*,PF})  +\frac{\gamma_i}{2\mu_i}\operatorname{Var}\left(a_i^{*,PF}S\right),\nonumber\\
            \vartheta_i^{NPF}
            %&=\frac{\mathbb{E}[a_i^{*,NPF}X+b_i^{*,NPF}]}{\mu_i}-1+\frac{\ln\mathbb{E}[e^{\gamma_ia_i^{*,NPF}(S-\mathbb{E}[S])}]}{\mu_i\gamma_i}\\
            &= \rho_i+\frac{\ln\mathbb{E}\left[e^{\gamma_ia_i^{*,NPF}(S-\mathbb{E}[S])}\right]}{\mu_i\gamma_i}\approx\rho_i+\frac{\gamma_i}{2\mu_i}\operatorname{Var}\left(a_i^{*,NPF}S\right),        
%\end{split}
        \end{align}}%
where $\rho(P^{*,PF},a_R^{*,PF})$ is the common loading in the price-fairness condition \eqref{Eq:PricingFairness}. %\takwacomment{Table \ref{tab:baseline_ce_decomposition} reports the decomposition of CE loadings based on \eqref{Eq:CEdecomposition}. 
{As reported in Tables \ref{tab:delta_r_-_price_fairness}--\ref{tab:NPFcontract}, the discrepancies in the risk-sharing and reinsurance strategies between the two cases are relatively modest, with the primary distinction arising from the size of the side payments.} This leads to a significant difference between $\rho(P^{*,PF},a_R^{*,PF})$ and $\rho_i$ for $i=1,2,3$; see Table \ref{tab:baseline_ce_decomposition} for the decomposition of the CE loading {under the baseline setting}, which also explains the resulting ordering of the CE loadings across peers.

%Obviously, such a reversal of CE loadings order is caused by $\rho_i$ as there is no significance differences in the figures of $a_i^*$. 

\begin{table}[htbp] \centering 
\begin{tabular}{ccccc} \toprule Peer $i$ & {$\rho\left(P^{*,PF},a_R^{*,PF}\right)$} & $\rho_i$ & $\frac{\gamma_i}{2\mu_i}\operatorname{Var}\left(a_i^{*,PF}S\right)$ & $\frac{\gamma_i}{2\mu_i}\operatorname{Var}\left(a_i^{*,NPF}S\right)$ \\ \midrule 1 & 0.228758 & 0.195458	 & 0.031916 & 0.041102 \\ 2 & 0.228758 & 0.216492	 & 0.037356	 & 0.040626	 \\ 3 & 0.228758 & 0.262727	 & 0.028502	 & 0.019604	 \\ \bottomrule \end{tabular}
\caption{Decomposition of the certainty-equivalent loadings under the baseline setting. 
% \KTHNcomment{The table format is different from the above. I kinda prefer this than the previous ones} } \takwacomment{Take a look (I changed one), I don't know why this format looks nice here but not that nice in the above. I think I will keep the baseline setting table unchanged.
}\label{tab:baseline_ce_decomposition} 
\end{table}

\subsection{Impact of pool size and critical pool sizes}\label{sec:pool_size}

In this subsection, we conduct a numerical analysis to study the impact of the pool size $n$ under the homogeneity setting. To this end, we adopt the exponential-utility setting in Proposition \ref{Prop:ExpHomo} and model each peer's loss using Gamma distributions with a common shock, i.e., $    X_i=Y_0+Y_i\sim \mathrm{Gamma}(\kappa,s), \quad i=1,\dots,n,$
% \begin{equation*}
%     X_i=Y_0+Y_i\sim \mathrm{Gamma}(\kappa,s), \quad i=1,\dots,n,
% \end{equation*}
where $Y_0\sim \mathrm{Gamma}(\varrho\kappa,s)$ {represents the common shock}, and $Y_i\sim \mathrm{Gamma}((1-\varrho)\kappa,s)$ are assumed to be mutually independent. The parameter $\varrho\in[0,1]$ indicates the correlation between any distinct pair of loss variables, i.e., $\operatorname{Corr}(X_i,X_j)=\varrho$ if $i\neq j$ and $i,j\in\{1,\dots,n\}$. {By Proposition \ref{Prop:ExpHomo},} the bargaining power influences only the premium decision, and the optimal contract is independent of the initial wealth.  

%We specify the bargaining power of agents as follows. 
For $n\in\mathbb{N}$, where $\mathbb{N}:=\{1,2,\dots\}$, the bargaining power of each peer is specified by $\delta(n)=\alpha e^{-\beta/n}/n$, where $\alpha\in[0,1]$ indicates the limiting collective power of all peers {as $n\to\infty$} (and thus $1-\alpha$ signifies the P2P reinsurer's asymptotic bargaining power), and $\beta$ controls the speed with which the bargaining power changes due to changes in the pool size. {Fig.~\ref{fig:deltaRdelta_vs_n} plots the bargaining power of the P2P reinsurer and the representative peer against $n$ for $\alpha=0.5$ and $\alpha=0.999$. In both cases, the P2P reinsurer's bargaining power decreases at a decreasing rate and converges to $1-\alpha$, while the peers' collective bargaining power converges to $\alpha$.
} % asymptotic is skipped since it is clear from definition. 
The baseline parameters are reported in Table \ref{Tab:baseline}.
\begin{table}[H]
\centering
\begin{tabular}{ccccccccccc}
\toprule 
$\kappa$ & $s$ & $\mu$ & $\gamma$ & $\gamma_R$ & $\tau$ & $\theta$ & $\varrho$ & $\alpha$ & $\beta$ & $n$ \\ \hline
2        & 0.5 & 1     & 1        & 0.03       & 0.05   & 0.2      & 0.25      & 0.5      & 4 & 5,6,\dots,55      \\ \bottomrule
\end{tabular}
\caption{Baseline setting in the numerical analysis of the homogeneous case.}
\label{Tab:baseline}
\end{table}

\begin{figure}[!ht]	
	\begin{subfigure}[t]{.49\linewidth}
		\centering\includegraphics[scale=0.415]{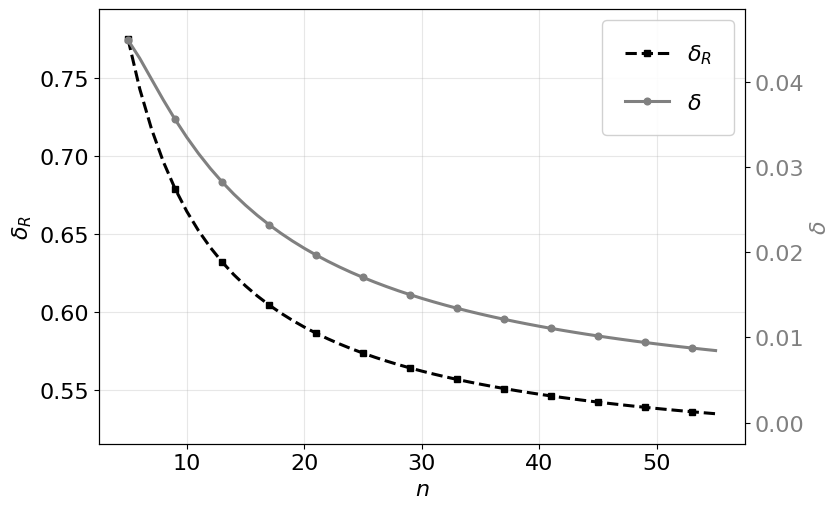}
		\caption{$\alpha=0.5$.}
		\label{fig:deltaRdelta_vs_n_a05}
	\end{subfigure}%
        \begin{subfigure}[t]{.49\linewidth}
		\centering\includegraphics[scale=0.415]{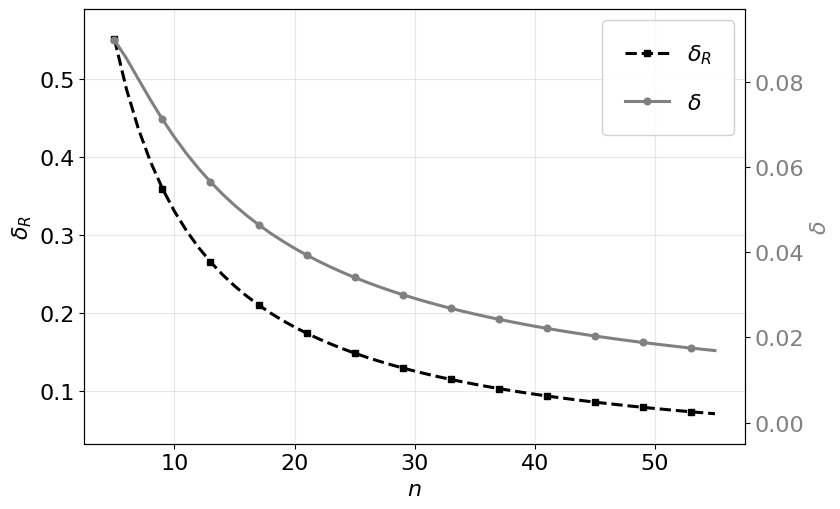}
		\caption{$\alpha=0.999$.}
		\label{fig:eltaRdelta_vs_n_a0999}
	\end{subfigure}%
    \caption{Impact of the pool size $n$ on the agents' bargaining power.}
	\label{fig:deltaRdelta_vs_n}
\end{figure}

%Fig.~\ref{fig:deltaRdelta_vs_n} plots the market power of the P2P reinsurer and the representative peer against $n$ for $\alpha=0.5$ and $\alpha=0.999$. In both cases, market power decreases at a decreasing rate, implying that both agents' market power stabilizes for the large pool size. 
%Juxtaposing Fig.~\ref{fig:deltaRdelta_vs_n_a05} and Fig.~\ref{fig:eltaRdelta_vs_n_a0999}, the P2P reinsurer retains relatively more power for $\alpha=0.5$ as the P2P reinsurer's asymptotic market power is higher in this case.

% \begin{itemize}
%     \item In both cases, market power decreases at a decreasing rate, but for $\alpha=0.5$, the P2P reinsurer retains more power when the pool size expands.
% \end{itemize}

\begin{figure}[!ht]	
	\begin{subfigure}[t]{.49\linewidth}
		\centering\includegraphics[scale=0.415]{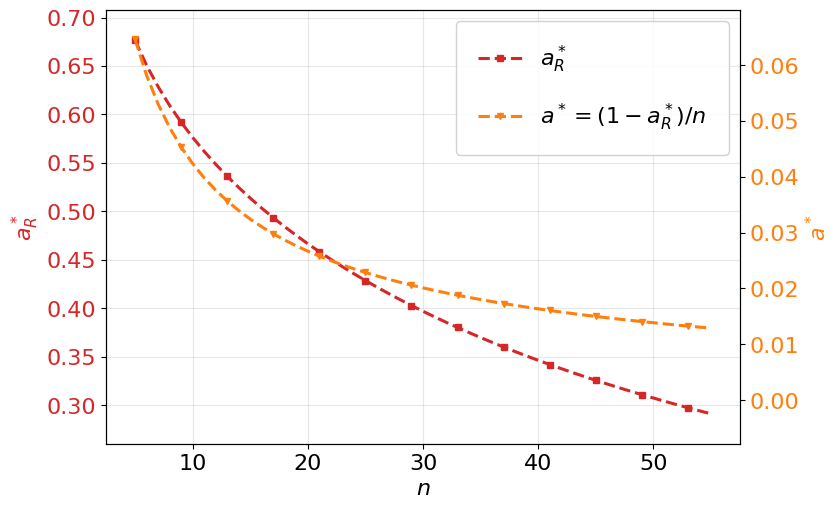}
		\caption{Impact on risk sharing $a_R^*$ and $a^*$.}
		\label{fig:aRa_vs_n}
	\end{subfigure}%
        \begin{subfigure}[t]{.49\linewidth}
		\centering\includegraphics[scale=0.405]{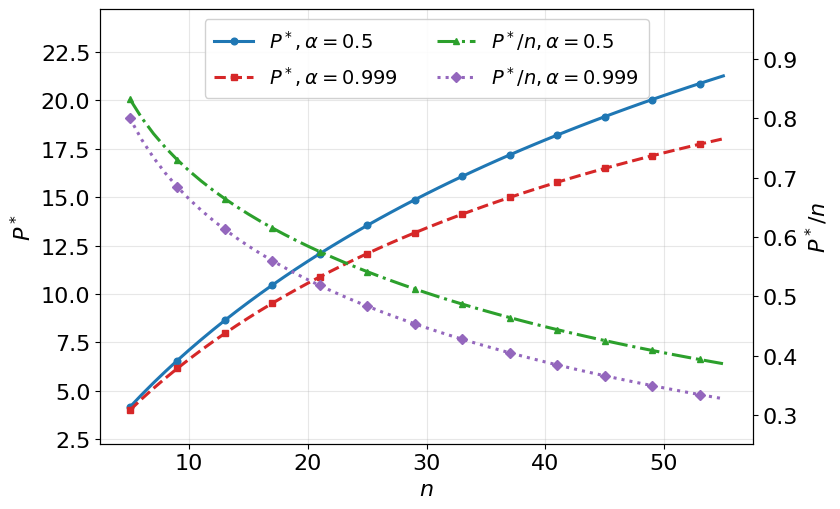}
		\caption{Impact on the premium $P$.}
		\label{fig:P_vs_n.png}
	\end{subfigure}%
    \caption{Impact of the pool size $n$ on the P2P insurance contract.} 
	\label{fig:PaRa_vs_n}
\end{figure}

Fig.~\ref{fig:PaRa_vs_n} shows the contractual response to changes in pool size. The left panel, Fig.~\ref{fig:aRa_vs_n}, plots $a_R^*$ and $a^*$ against $n$; both decrease as $n$ increases. Recall that the risk-sharing strategy is independent of the bargaining power (see Proposition \ref{Prop:ExpHomo} and particularly \eqref{Eq:ExpaREquation}),\footnote{So, we do not label the figure of $\alpha$ in that subplot.} and hence the variations are solely driven by the expansion of the pool size and the enhanced diversification effect. This is economically important: larger pools rely less on external reinsurance, but each peer also bears a smaller proportion of the aggregate retained loss, implying that the pool substitutes internal mutualization for external reinsurance. 

The right panel (Fig.~\ref{fig:P_vs_n.png}) plots the aggregate premium $P^*$ and peers' individual side payments $P^*/n$ against $n$ for $\alpha=0.5$ and $\alpha=0.999$. We observe that $P^*$ increases with $n$, whereas $P^*/n$ exhibits the opposite trend. The reduction in individual side payments as $n$ increases reflects the substitution of internal risk sharing for external reinsurance.
However, the increase in the number of peers outweighs the reduction in individual side payments, leading to an overall increase in the aggregate premium.
% Interestingly, although each peer pays less and $a_R^*$ declines when the pool size expands, more peers still increase the P2P reinsurer's premium income. \KTHNcomment{Not very surprising indeed, since $P^*$ is likely of order $n$ as more peers are paying (despite smaller individual payment)}
Another notable finding is that $P^*$ is lower when $\alpha=0.999$.
As the P2P reinsurer's bargaining power decreases, the bargaining outcome becomes less favorable to her. As a result, despite the same reinsurance contract (see Fig.~\ref{fig:aRa_vs_n}), the premium level is lower than in the $\alpha=0.5$ case.

% \begin{itemize}
%     \item $\uparrow n\Rightarrow \downarrow a_R^*$ and $\downarrow a^*$: Note first that the change of risk sharing is independent of the market power (see Proposition \ref{Prop:ExpHomo} and particularly \eqref{Eq:ExpaREquation}),\footnote{Thus, we do not label the figure of $\alpha$ in that subplot.} so it is mainly due to the expansion of pool size and the enlarging effect of diversification. This is economically important: larger pools rely less on external reinsurance, but each peer also bears a smaller proportion of the aggregate retained loss. The pool substitutes internal mutualization for external reinsurance. 
%     \item $\uparrow n\Rightarrow \uparrow P^*$: Although $a_R^*$ decreases with the pool size $n$, more peers still cause the P2P reinsurer's premium income increases. As the P2P reinsurer possesses less power when $\alpha=0.999$, her premium is affected and thus is less compared to the $\alpha=0.5$ case. 
% \end{itemize}

\begin{figure}[!ht]	
	\begin{subfigure}[t]{.49\linewidth}
		\centering\includegraphics[scale=0.415]{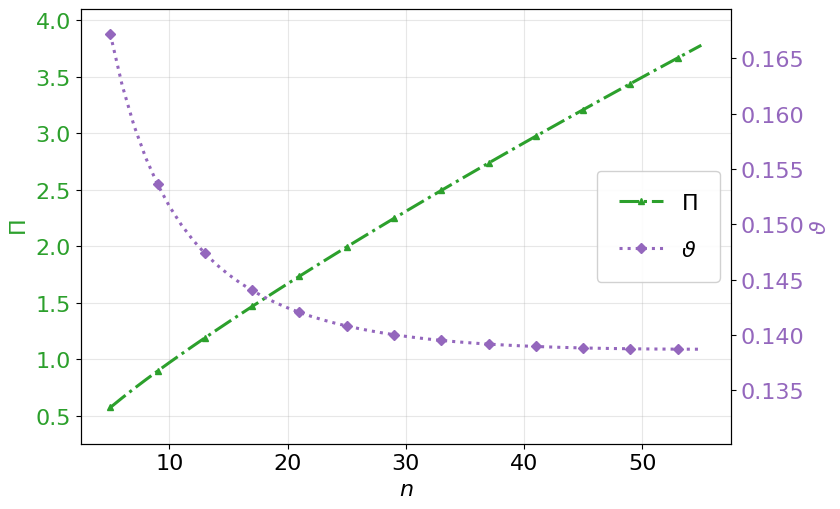}
		\caption{$\alpha=0.5$.}
		\label{fig:Pivartheta_vs_n_a05}
	\end{subfigure}%
        \begin{subfigure}[t]{.49\linewidth}
		\centering\includegraphics[scale=0.415]{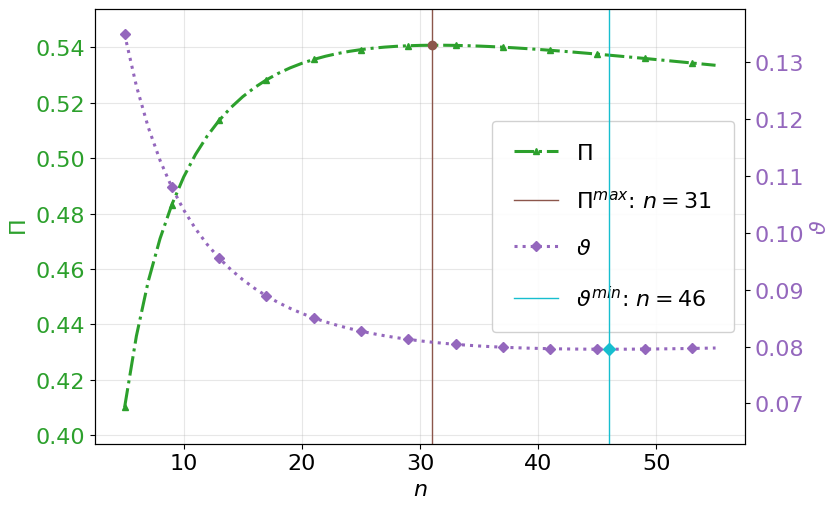}
		\caption{$\alpha=0.999$.}
		\label{fig:Pivartheta_vs_n_a0999}
	\end{subfigure}%
    \caption{Impact of the pool size $n$ on agents' welfare.}
	\label{fig:Pivartheta_vs_n}
\end{figure}

Lastly, we investigate the impact of the pool size on the welfare of all agents. Fig.~\ref{fig:Pivartheta_vs_n} plots all welfare metrics against $n$. For $\alpha=0.5$, Fig.~\ref{fig:Pivartheta_vs_n_a05} reveals that the expansion of the pool leads to an increase in $\Pi$ and a decrease in $\vartheta$, indicating welfare improvements for all agents. 
However, this is not always the case when $\alpha=0.999$; see Fig.~\ref{fig:Pivartheta_vs_n_a0999}, where $\Pi$ displays a hump shape and reaches its maximum (denoted by $\Pi^{max}$) at $n=31$, and $\vartheta$ attains its minimum (denoted by $\vartheta^{min}$) at $n=46$. 

At first sight, one might expect welfare to improve monotonically with the pool size: decreasing $a_R^*$ and increasing $P^*$ should improve the P2P reinsurer's welfare, while decreasing $a^*$ and $P^*/n$ should lower each peer's risk bearing and payment burden. However, the expanding pool also increases the total loss $S$, which explains the non-monotonicity observed when $\alpha=0.999$. In this scenario, the premium level per peer is relatively low due to the P2P reinsurer's weaker bargaining power. Once the pool becomes sufficiently large, the utility gain from further reductions in side payments is limited. At that point, the increase in aggregate losses $S$ outweighs the remaining gains from lower premiums per capita, causing the peers' welfare to decline. A similar explanation applies to the P2P reinsurer's case. Although a larger pool generates higher aggregate premium income, the increase in premium revenue is eventually insufficient to compensate for the marginal covered aggregate losses. Consequently, the P2P reinsurer's welfare also declines beyond a certain pool size.

\section{Concluding remarks and future directions}\label{Sec:Conclusion}

This paper develops a Nash bargaining framework for P2P insurance contracting between a risk-averse P2P reinsurer and a group of risk-averse peers, in which peers’ disagreement points are induced by outside centralized insurance options. We further impose a price-fairness condition that requires all peers’ expected contributions to be determined by a common loading applied to their individual expected losses.

We establish the applicability of this framework by adapting the asymmetric Nash-bargaining characterization in \cite{kalai1977nonsymmetric} to the present multi-agent P2P insurance setting. In addition, we characterize the Nash bargaining solution and derive sufficient ex post stability conditions that rule out viable subgroup deviations, both with and without price fairness.

Our numerical analysis shows that price fairness reduces dispersion in risk-sharing proportions and certainty-equivalent loadings among peers. At the same time, when peers differ in risk aversion, the common-loading rule may redistribute welfare in certainty-equivalent terms. Further analysis under homogeneity reveals the possible non-monotonicity of all agents' welfare with respect to pool size. This non-monotonicity is closely related to how bargaining power evolves with pool size, highlighting the role of bargaining power in determining whether pool expansion is welfare-improving.

% \takwacomment{I realize it's too long so I will put them to the introduction (contribution part) and have a more succinct version here.}

Nash bargaining in P2P insurance deserves further investigation. First, it would be interesting to consider a risk-neutral reinsurer with beliefs that differ from those of the peers, while allowing for stop-loss indemnity; see, e.g., \cite{boonen2025optimal} for such a risk-sharing architecture. 
% ; see, e.g.,  \cite{chi2019optimality,chi2024optimal}. 
In addition, the numerical findings on pool size call for a more general theoretical study of pool expansion under Nash bargaining; see, e.g., \cite{anthropelos2026expansion,blier2026designing}. In particular, future work could identify conditions under which enlarging the pool benefits all agents. We leave these for future work.

% \textbf{Future research}
% \begin{itemize}
%     \item Risk-neutral reinsurer with a heterogeneous belief and stop-loss reinsurance; see \cite{chi2019optimality,chi2024optimal}.
%     \item Kalai–Smorodinsky bargaining solution \cite{kalai1975other}
%     \item Pool expansion under ANB \cite{anthropelos2026expansion}
% \end{itemize}
\begin{spacing}{1.0}
\section*{Acknowledgments}	

Kenneth Ng acknowledges support from the start-up fund at The Ohio State University and the CKER research fund of the Society of Actuaries (Project title: Pricing and Staking of Decentralized Insurance).
 
\bibliographystyle{apalike}
%\setstretch{0.85}
\bibliography{sample}
%\onehalfspacing
%\singlespacing
%\setstretch{0.9}
\end{spacing}
\appendix
\section*{Appendix}
\renewcommand{\thesubsection}{\Alph{subsection}}
%\setstretch{1.2}
\section{Proof of Proposition \ref{pp:V:continuous}}\label{App:pp:V:continuous}
%\Timcomment{Do we show differentiability?}

    For the first two statements, we only show the finiteness, continuity, concavity, and continuous differentiability of $V_R$. The remaining statements can be shown in a similar fashion.
    %We first consider $V_R$.
    For any $(\bm{a},P)\in\Delta^n\times{\mathbb{R}}$, by the monotonicity of $U_R$, we have {\small 
        \begin{equation*}
      U_R\left(w_R + { P} -(1+\tau)S \right)    \leq   U_R\left(w_R + P - (1+\tau)\left(1-\sum_{i=1}^n a_i\right)S \right) \leq U_R\left(w_R +P \right) < \infty\quad\mathbb{P}\text{-a.s.}.
        \end{equation*}}%
    Since $\mathbb{E}[|U_R(w_R + {P}-(1+\tau)S)|]<\infty$ by Assumption \ref{Ass:EU}, we have $|V_R(\bm{a},P)|<\infty$. 
    
    To prove continuity, fix any $(\bm{a}_0,P_0)\in \Delta^n \times {\mathbb{R}}$ and choose $\underline{P},\overline{P}\in \mathbb{R}$ such that $\underline{P}<P_0<\overline{P}$. For any sequence $(\bm{a}_k,P_k)_{k=1}^\infty \subset \Delta^n\times {\mathbb{R}}$ with $\bm{a}_k=(a_{k,i})_{i=1}^n$ and $(\bm{a}_k,P_k)\to (\bm{a}_0,P_0)$ as $k\to\infty$, we have $\underline{P}<P_k<\overline{P}$ for all sufficiently large $k$, and hence
     {\small   \begin{equation*}
            \left|U_R\left(w_R + P_k - (1+\tau)\left(1-\sum_{i=1}^n a_{k,i}\right)S \right) \right| \leq \max\left\{\left|U_R(w_R +  \overline{P}) \right|, \left|U_R(w_R + { \underline{P} } -(1+\tau)S) \right|  \right\}\quad\mathbb{P}\text{-a.s.}.
        \end{equation*}}%
    By the continuity of $U_R$ and Assumption \ref{Ass:EU}, we have, by Dominated Convergence, 
     {\small    \begin{align*}
            V_R(\bm{a}_k,P_k) &= \mathbb{E}\left[U_R\left(w_R + P_k - (1+\tau)\left(1-\sum_{i=1}^n a_{k,i}\right)S \right) \right] 
          %  \\            &\to \mathbb{E}\left[U_R\left(w_R + P_0 - (1+\tau)\left(1-\sum_{i=1}^n a_{0,i}\right)S \right) \right] =
            \to V_R(\bm{a}_0,P_0),
        \end{align*}}%
    as $k\to \infty$. Since $\underline{P},\overline{P}\in\mathbb{R}$ can be taken arbitrarily, we conclude that $V_R$ is continuous.

    Next, for every $\omega\in\Omega$, $W_R(\bm{a},P)$ is affine in $(\bm{a},P)$. Since $U_R$ is concave, the mapping $(\bm{a},P)\mapsto U_R(W_R(\bm{a},P))$ is concave pointwise in $\omega$. Taking expectations preserves concavity. Hence $V_R$ is concave.

    Finally, we show that the partial derivative of $V_R$ with respect to $P$ exists. To this end, fix $(\bm{a}_0,P_0)\in \Delta^n\times \mathbb{R}$ and $e>0$. By the mean value theorem, for any $h\in\mathbb{R}$ such that $|h|<e$, there exists a random variable $P^h$, taking values in $[P_0-e,P_0+e]$, such that, almost surely, 
    % \takwacomment{$P^h$  is random? I changed $\delta$ to e as $\delta$ is used.}
    % \Timcomment{I guess proof is fine, but $P^h$ is indeed not ``random''.} \KTHNcomment{it is random because the MVT is applied depending on the value of S} \takwacomment{I added the lines for the almost sure sense.}
      {\small   \begin{align*}
          Q_R(\bm{a}_0,P_0,h) &:=  \frac{U_R\left(w_R + P_0 + h - (1+\tau)\left(1-\sum_{i=1}^n a_{0,i}\right)S \right) -  U_R\left(w_R + P_0  - (1+\tau)\left(1-\sum_{i=1}^n a_{0,i}\right)S \right)}{h}\\
           &= U_R'\left(w_R + P^h - (1+\tau)\left(1-\sum_{i=1}^n a_{0,i}\right)S \right) \in \left[0, U_R'(w_R + P_0-e - (1+\tau)S) \right],
        \end{align*}}%
    which is uniformly integrable with respect to $h$ by Assumption \ref{Ass:EU}. Therefore, by Dominated Convergence, 
       {\small  \begin{align*}
            \frac{\partial V_R(\bm{a}_0,P_0)}{\partial P} = \lim_{h\to 0}    \mathbb{E}\left[    Q_R(\bm{a}_0,P_0,h)\right]
            %&= \mathbb{E}\left[  \lim_{h\to 0}  Q_R(\bm{a}_0,P_0,h)\right] 
            = \mathbb{E}\left[U_R' \left(w_R + P_0 - (1+\tau)\left(1-\sum_{i=1}^n a_{0,i}\right)S \right) \right] < \infty,
        \end{align*}}%
    which establishes the claim. Likewise, we can establish the existence of $\frac{\partial V_R}{\partial a_i}$, and the continuity of all partial derivatives, by appropriate integrability conditions in Assumption \ref{Ass:EU}. 
    Therefore, $V_R$ is continuously differentiable.

To prove the third statement, note  that for fixed $\bm a\in\Delta^n$, the mapping
$P\mapsto V_i(\bm a,P)$ is continuous and strictly decreasing. Fix $P_0\in\mathbb{R}$, for any $P\leq P_0$, define $\hat{Z}(P):=U_i(W_i(\bm{a},P))-U_i(W_i(\bm{a},P_0))$, which is strictly decreasing in $P$ and nonnegative $\mathbb{P}$-a.s.. By the monotone convergence theorem, we have $\lim_{P\rightarrow-\infty}\mathbb{E}[\hat{Z}(P)]=\lim_{x\rightarrow+\infty}U_i(x)-V_i(\bm{a},P_0)$ and thus $\lim_{P\rightarrow-\infty}V_i(\bm{a},P)=\lim_{x\rightarrow+\infty}U_i(x)>d_i$. Following a similar argument, we also have $\lim_{P\rightarrow+\infty}V_i(\bm{a},P)=\lim_{x\rightarrow-\infty}U_i(x)<d_i$. \qed
    % For $V_i$, $i\in\mathcal{N}$, note that for any $(\bm{a},P)\in\Delta^n\times {\mathbb{R}}$, by the non-decreasing property of $U_i$,  
    %     \begin{equation*}
    %     U_i\left(w_i-S- \left(\frac{P}{\mathbb{E}[S]}+1\right)\mu_i \right)   \leq   U_i\left( w_i - a_i(S-\mathbb{E}[S]) - \left(\frac{P}{\mathbb{E}[S]} + \sum_{j=1}^n a_j \right)\mu_i \right) \leq U_i\left(w_i + \mathbb{E}[S] \right).
    %     \end{equation*}
    % By Assumption \ref{Ass:EU}, we infer that $|V_i(\bm{a},P)|<\infty$. The continuity of $V_i$ can be proven using a similar argument as above. Henceforth, we omit the proof. 

\section{Proof of Lemma \ref{Lem:ConcaveH}}\label{App:ConcaveH}
Under Assumption \ref{Ass:EU},
% and \ref{P_max_assumption}, 
Proposition \ref{pp:V:continuous} implies that the functions $V_j$, $j\in\mathcal{I}$, are finite, continuous, and concave, while the definitions of $P_{\min}$ and $P_{\max}$ are well-posed. To prove that $H(\bm{a})$ is concave, it suffices to show that $P_{\max}(\bm{a})$ is concave and $P_{\min}(\bm{a})$ is convex. First, we show that $P_{\max}^i(\bm{a})$ is concave for each $i\in\mathcal{N}$. 
% Define
% \begin{equation*}
%     f_i(\bm{a},P):=\mathbb{E}\left[U_i\left( w_i - a_i(S-\mathbb{E}[S]) - \left(\frac{P}{\mathbb{E}[S]} + \sum_{j=1}^n a_j \right)\mu_i \right)\right].
% \end{equation*}
By Proposition \ref{pp:V:continuous}, since $V_i$ is concave in $(\bm{a},P)$, the set $    B_i:=\{(\bm{a},P)\in\Delta^n\times\mathbb{R}:V_i(\bm{a},P)\geq d_i\}
$
% \[
%     B_i:=\{(\bm{a},P)\in\Delta^n\times\mathbb{R}:V_i(\bm{a},P)\geq d_i\}
% \]
is convex. As $V_i(\bm{a},\cdot)$ is strictly decreasing in $P$, we can write $
    B_i=\{(\bm{a},P)\in\Delta^n\times\mathbb{R}:P\leq P_{\max}^i(\bm{a})\},$
% \[
%     B_i=\{(\bm{a},P)\in\Delta^n\times\mathbb{R}:P\leq P_{\max}^i(\bm{a})\},
% \]
which is the hypograph of $P_{\max}^i$. Hence, $P_{\max}^i(\bm{a})$ is concave in $\bm{a}$. Since $P_{\max}(\bm{a}) = \min_{i\in\mathcal{N}}P^i_{\max}(\bm{a})$, we conclude that $P_{\max}(\bm{a})$ is also concave.

Next, we prove that $P_{\min}(\bm{a})$ is convex. By similar arguments, 
% we define
% \begin{equation*}
%     g(\bm{a},P):=\mathbb{E}\left[U_i\left( w_i - a_i(S-\mathbb{E}[S]) - \left(\frac{P}{\mathbb{E}[S]} + \sum_{j=1}^n a_j \right)\mu_i \right)\right].
% \end{equation*}
by Proposition \ref{pp:V:continuous}, since $V_R$ is concave in $(\bm{a},P)$, the set $  B_R:=\{(\bm{a},P)\in\Delta^n\times\mathbb{R}:V_R(\bm{a},P)\geq d_R\}$
% \begin{equation*}
%     B_R:=\{(\bm{a},P)\in\Delta\times[0,+\infty):V_R(\bm{a},P)\geq d_R\}
% \end{equation*}
is convex. As $V_R(\bm{a},\cdot)$ is strictly increasing, we have $B_R=\{(\bm{a},P)\in\Delta^n\times\mathbb{R}:P \geq  P_{\min}(\bm{a})\}$, which is the epigraph of $P_{\min}$. Hence, $P_{\min}(\cdot)$ is convex. \qed

\section{Proof of Lemma \ref{Lem:ccF}}\label{App:ccF}
Under Assumption \ref{Ass:EU}, 
% and \ref{P_max_assumption}, 
Lemma \ref{Lem:ConcaveH} applies and Proposition \ref{pp:V:continuous} ensures the continuity of all expected-utility functions.

\noindent\textbf{Convexity}: For any $(\bm{a}^{(1)},P^{(1)}),(\bm{a}^{(2)},P^{(2)}) \in \mathcal{F}$ and $\lambda\in[0,1]$, set $        \overline{\bm{a}}:=\lambda\bm{a}^{(1)}+(1-\lambda)\bm{a}^{(2)}$ and $\overline{P}:=\lambda P^{(1)}+(1-\lambda)P^{(2)}$.
    % \begin{equation*}
    %     \overline{\bm{a}}:=\lambda\bm{a}^{(1)}+(1-\lambda)\bm{a}^{(2)},\quad \overline{P}:=\lambda P^{(1)}+(1-\lambda)P^{(2)}.
    % \end{equation*}
    Since $\Delta^n$ is convex, $\overline{\bm{a}}\in \Delta^n$. By the convexity of $P_{\min}$ and the concavity of $P_{\max}$ established in the proof of Lemma \ref{Lem:ConcaveH}, we have
    {\small\begin{align*}
        &P_{\min}(\overline{\bm{a}})\leq\lambda P_{\min}(\bm{a}^{(1)})+(1-\lambda)P_{\min}(\bm{a}^{(2)})\leq\lambda P^{(1)}+(1-\lambda)P^{(2)}=\overline{P},\\
        &P_{\max}(\overline{\bm{a}})\geq\lambda P_{\max}(\bm{a}^{(1)})+(1-\lambda)P_{\max}(\bm{a}^{(2)})\geq \lambda P^{(1)}+(1-\lambda)P^{(2)}=\overline{P},
    \end{align*}}%
    and thus $P_{\min}(\overline{\bm a}) \le \overline P \le P_{\max}(\overline{\bm a})$. Hence, $\mathcal{F}$ is convex.

\vspace{2mm}
    
\noindent\textbf{Compactness}: Note that $\Delta^n\subset\mathbb{R}^n$ is closed and bounded, and thus its compactness follows. To proceed, we express $\mathcal{F}$ as
{\small\begin{equation*}
    \mathcal{F}=(\Delta^n\times[0,+\infty))\cap V_R^{-1}([d_R,+\infty))\cap \bigcap_{i=1}^nV_i^{-1}([d_i,+\infty)).
\end{equation*}}%
Since $V_R$ and $V_i$ are continuous (by {Proposition \ref{pp:V:continuous}}), the preimages $V_R^{-1}([d_R,+\infty))$ and 
$V_i^{-1}([d_i,+\infty))$, $i\in\mathcal N$, are closed. Therefore, $\mathcal{F}$ is closed.

To complete the proof, it suffices to show that $\mathcal{F}$ is bounded. Note that $\bm{a}$ is automatically bounded as it belongs to $\Delta^n$, and it remains to bound $P$. Using the monotonicity of $U_i$, we have
{\small\begin{equation*}
U_i(w_i-(1+\theta)\mu_i)\leq\mathbb{E}[U_i(W_i(\bm{a},P))]\leq U_i\left(w_i+\mathbb{E}[S]-\frac{\mu_i}{\mathbb{E}[S]}P\right),
\end{equation*}}%
which leads to $    P\leq \hat{P}:= \min_{i\in\mathcal{N}}\frac{\mathbb{E}[S]}{\mu_i}(\mathbb{E}[S]+(1+\theta)\mu_i)
$.
% \begin{equation*}
%     P\leq \hat{P}:= \frac{\mathbb{E}[S]}{\mu_i}(\mathbb{E}[S]+(1+\theta)\mu_i).
% \end{equation*}
Hence, $\mathcal{F}\subset\Delta^n\times[0,\hat{P}]$,
%, where
% \begin{equation*}
% \hat{P}:=\min_{1\leq i\leq n}\frac{\mathbb{E}[S]}{\mu_i}(\mathbb{E}[S]+(1+\theta)\mu_i),
% \end{equation*}
and the boundedness of $\mathcal{F}$ follows. Since $\Delta^n\times[0,\hat{P}]$ is compact and $\mathcal{F}$ is closed, $\mathcal{F}$ is a compact set. \qed

\section{Proof of Lemma \ref{Lemma:SF}}\label{App:SF}
Under Assumptions \ref{Ass:EU} and
Proposition \ref{pp:V:continuous},
% \ref{P_max_assumption}, 
$P_{min}$, $P_{max}$ and $H$ are well-defined by the preceding results. If $\mathcal{K}\neq\emptyset$, then there exists $(\bm{a},P)\in\mathcal{K}$ such that $P_{\min}(\bm{a})<P<P_{\max}(\bm{a})$, and thus $H(\bm{a})>0$. Therefore, $\sup_{\bm{a}\in\Delta^n}H(\bm{a})>0$. Conversely, suppose that $\sup_{\bm{a}\in\Delta^n}H(\bm{a})>0$. Then there exists $\hat{\bm{a}}\in\Delta^n$ such that $H(\hat{\bm{a}})>0$; otherwise, $H(\bm{a})\leq 0$ for all $\bm{a}\in\Delta^n$, contradicting the assumption. Thus, $P_{\min}(\hat{\bm{a}})<P_{\max}(\hat{\bm{a}})$. Choose any $\hat{P}\in(P_{\min}(\hat{\bm{a}}),P_{\max}(\hat{\bm{a}}))$. Since $P_{\min}(\hat{\bm{a}})\geq 0$, we obtain $\hat{P}\geq 0$ and $(\hat{\bm{a}},\hat{P})\in\mathcal{K}$, whence $\mathcal{K}\neq\emptyset$. \qed

%  ``Only if": Suppose that there exists $(\bm{a},P)$ such that $\mathbb{E}\left[U_j\left(W_j(\bm{a},P)\right)\right]
% > d_j$ for all $j\in\mathcal{I}$.
% %     \begin{equation*}
% %          \mathbb{E}[U_R(W_R(\bm{a},P))]> d_R, \quad
% % \mathbb{E}\left[U_i\left(W_i(\bm{a},P)\right)\right]
% % > d_i,\quad\text{for all }i\in\mathcal{N}.
% %     \end{equation*}
%     The first inequality implies that $P>P_{\min}(\bm{a})$, and the remaining leads to $P<P_{\max}^i(\bm{a})$ for all $i\in\mathcal{N}$, and thus $P<P_{\max}(\bm{a})$. Therefore, $P_{\min}(\bm{a})<P_{\max}(\bm{a})$ and hence \eqref{Eq:supH} holds.

% \vspace{2mm}

% \noindent``If": Suppose that there exists $\bm{a}^*\in\Delta^n$ such that $P_{\min}(\bm{a}^*)<P_{\max}(\bm{a}^*)$. By picking $P^*\in(P_{\min}(\bm{a}^*),$ $P_{\max}(\bm{a}^*))$, we have $\mathbb{E}[U_R(W_R(\bm{a}^*,P^*))]> d_R$ and $\mathbb{E}\left[U_i\left(W_i(\bm{a}^*,P^*)\right)\right]
% > d_i$ for all $i\in\mathcal{N}$.
% % \begin{equation*}
% %          \mathbb{E}[U_R(W_R(\bm{a}^*,P^*))]> d_R\quad\text{and}\quad\mathbb{E}\left[U_i\left(W_i(\bm{a}^*,P^*)\right)\right]
% % > d_i\quad\text{for all }i\in\mathcal{N}.
% %     \end{equation*}
%     Therefore, $(\bm{a}^*,P^*)$ is strictly feasible.

\section{Proof of Lemma \ref{Lem:ccU}}\label{App:ccU}

By Assumption \ref{Ass:EU}, 
% and \ref{P_max_assumption}, 
Proposition \ref{pp:V:continuous} and Lemma \ref{Lem:ccF} apply throughout the following analysis.

\noindent\textbf{Convexity}: For any $u^{(1)},u^{(2)}\in\mathcal{U}$, there exists $(\bm{a}^{(1)},P^{(1)}),(\bm{a}^{(2)},P^{(2)}) \in \mathcal{F}$ such that $ d_j\leq u_j^{(k)}\leq V_j(\bm{a}^{(k)},P^{(k)})$ for all $j\in\mathcal{I}$ and $k=1,2$.
% \begin{equation*}
%     d_j\leq u_j^{(k)}\leq V_j(\bm{a}^{(k)},P^{(k)}),\quad \text{for all }j\in\mathcal{I}\text{ and } k=1,2.
% \end{equation*}
Adopting $(\overline{\bm{a}},\overline{P})\in\mathcal{F}$ defined in Appendix \ref{App:ccF}, as $V_j$ is concave, we have $V_j(\overline{\bm{a}},\overline{P})\geq \lambda V_j(\bm{a}^{(1)},P^{(1)})+(1-\lambda)V_j(\bm{a}^{(2)},P^{(2)})\geq\lambda u_j^{(1)}+(1-\lambda) u_j^{(2)}\geq d_j.$
% \begin{equation*}
% V_j(\overline{\bm{a}},\overline{P})\geq \lambda V_j(\bm{a}^{(1)},P^{(1)})+(1-\lambda)V_j(\bm{a}^{(2)},P^{(2)})\geq\lambda u_j^{(1)}+(1-\lambda) u_j^{(2)}\geq d_j.
% \end{equation*}
Thus, $\lambda u^{(1)}+(1-\lambda) u^{(2)}\in\mathcal{U}$.
%and $\mathcal{U}$ is convex.

\vspace{2mm}

\noindent\textbf{Compactness}: Due to the compactness of $\mathcal{F}$ and the continuity of each $V_j$, each $V_j$ attains a finite maximum $    \hat{V}_j:=\max_{(\bm{a},P)\in\mathcal{F}}V_j(\bm{a},P)<\infty$.
% \begin{equation*}
%     \hat{V}_j:=\max_{(\bm{a},P)\in\mathcal{F}}V_j(\bm{a},P)<\infty.
% \end{equation*}
Thus, $\mathcal{U}\subset \prod_{j\in\mathcal{I}}[d_j,\hat{V}_j]$ is bounded.

To show closedness, define a sequence $u^{(m)}\in\mathcal{U}$ such that $\lim_{m\rightarrow\infty}u^{(m)}=u$. For each $m$, select $(\bm{a}^{(m)},P^{(m)})\in\mathcal{F}$ such that $ d_j\leq u_j^{(m)}\leq V_j(\bm{a}^{(m)},P^{(m)})$ for all $j\in\mathcal{I}$.
% \begin{equation*}
%     d_j\leq u_j^{(m)}\leq V_j(\bm{a}^{(m)},P^{(m)}),\quad \text{for all }j\in\mathcal{I}.
% \end{equation*}
Since $\mathcal{F}$ is compact, there exists a convergent subsequence $(\bm{a}^{(m_i)},P^{(m_i)})\rightarrow(\bm{a},P)\in\mathcal{F}$. By the continuity of $V_j$, $\lim_{i\rightarrow\infty}V_j(\bm{a}^{(m_i)},P^{(m_i)})= V_j(\bm{a},P)$. Taking limits in $u_j^{(m_i)}\leq V_j(\bm{a}^{(m_i)},P^{(m_i)})$ leads to $u_j\leq V_j(\bm{a},P)$. Along with $u_j\geq d_j$, we obtain $u\in\mathcal{U}$ and thus $\mathcal{U}$ is closed. \qed

\section{Proof of Theorem \ref{Thm:properties}}\label{App:properties}

We prove that the bargaining outcome $f(\mathcal{V},\bm{d})$,   $(\mathcal{V},\bm{d})\in\mathcal{B}$, is well-defined and satisfies the four properties.

\noindent\textbf{Well-defined}: The Nash product is continuous on the compact set $\mathcal{V}$, and hence it attains a maximum. Since $\mathcal{V}$ contains a point strictly above $\bm{d}$, the maximal Nash product is positive. Therefore, every maximizer satisfies $\bm{u}>\bm{d}$. On the strictly positive region $\{\bm{u}:\bm{u}>\bm{d}\}$, maximizing the Nash product is equivalent to maximizing $\sum_{j\in\mathcal{I}}\delta_j\ln(u_j-d_j)$, which is strictly concave. Hence, the maximizer is unique, and \eqref{eq:general_bargaining_form} is well-defined. Given this, denote $\bm{u}^*$ as the optimal utility vector of the problem in \eqref{eq:general_bargaining_form}.
\vspace{2mm}

\noindent\textbf{Strict feasibility}: Since $(\mathcal{V},\bm{d})\in\mathcal{B}$, there exists $\bar{\bm{u}}\in\mathcal{V}$ such that $\bar{\bm{u}}>\bm{d}$. Hence, the maximum Nash product is strictly positive. If $\bm{u}^*=f(\mathcal{V},\bm{d})$ gives $u^*_j=d_j$ for some $j\in\mathcal{I}$, its Nash product would be zero, contradicting the positivity. Therefore, $\bm{u}^*>\bm{d}$.

%It follows directly from Lemma \ref{Lemma:SF}, since there exists a feasible contract with strictly positive Nash product; any solution to Problem \eqref{Prob:hetero} cannot have a binding IR constraint, so does the optimal utility vector of Problem \eqref{Prob:NB} due to their equivalence.

\vspace{2mm}

\noindent\textbf{Pareto optimality}: Assume that $\bm{u}^*$ is not Pareto-optimal, i.e., there exists $\bm{\Tilde{u}}\in\mathcal{V}$ such that $    \Tilde{u}_j\geq u_j^*$ for all $j\in\mathcal{I}$,
% \begin{equation*}
%     \Tilde{u}_j\geq u_j^*\quad\text{for all }j\in\mathcal{I}
% \end{equation*}
with at least one inequality being strict. Together with strict feasibility that $\bm{u}^*>\bm{d}$, this leads to $\prod_{j\in\mathcal{I}}(\Tilde{u}_j-d_j)^{\delta_j}>\prod_{j\in\mathcal{I}}(u_j^*-d_j)^{\delta_j}$,
% \begin{equation*}
% \prod_{j\in\mathcal{I}}(\Tilde{u}_j-d_j)^{\delta_j}>\prod_{j\in\mathcal{I}}(u_j^*-d_j)^{\delta_j},
% \end{equation*}
contradicting the optimality of $\bm{u}^*$.

\vspace{2mm}

%\noindent\textbf{Invariance to positive affine transformations}:
\noindent\textbf{IPAT}: Adopting the transformation \eqref{Eq:positiveaffine}, let $\hat{\mathcal{V}}=T(\mathcal{V})$ and $\hat{\bm{d}}=T(\bm{d})$. For any $\hat{\bm{u}}\in\hat{\mathcal{V}}$, there exists $\bm{u}\in\mathcal{V}$ such that $\hat{\bm{u}}=T(\bm{u})$. Moreover, $\hat{\bm{u}}-\hat{\bm{d}}=\bm{\alpha}(\bm{u}-\bm{d})$. Hence, for $\hat{\bm{u}}\in\hat{\mathcal{V}}$, $\prod_{j\in\mathcal{I}}(\hat{u}_j-\hat{d}_j)^{\delta_j}=\left(\prod_{j\in\mathcal{I}}\alpha_j^{\delta_j}\right)\prod_{j\in\mathcal{I}}(u_j-d_j)^{\delta_j}.$
% {\small\begin{equation*}
% \prod_{j\in\mathcal{I}}(\hat{u}_j-\hat{d}_j)^{\delta_j}=\left(\prod_{j\in\mathcal{I}}\alpha_j^{\delta_j}\right)\prod_{j\in\mathcal{I}}(u_j-d_j)^{\delta_j}.
% \end{equation*}}%
The factor $\prod_{j\in\mathcal{I}}\alpha_j^{\delta_j}$ is positive and independent of $\bm{u}$. Therefore, $\bm{u}^*$ maximizes the original Nash product over $\mathcal{V}$ if and only if $T(\bm{u}^*)$ maximizes the transformed Nash product over $\hat{\mathcal{V}}$. Thus, $f(T(\mathcal{V}),T(\bm{d}))=T(f(\mathcal{V},\bm{d}))$.
% % \begin{equation*}
% %     \hat{V}_j=\alpha_jV_j+\beta_j,\quad \hat{d}_j=\alpha_jd_j+\beta_j,\quad\text{for all }j\in\mathcal{I}, 
% % \end{equation*}
% % where $\alpha_j>0$, 
% the objective the problem in \eqref{eq:general_bargaining_form} then becomes $  \prod_{j\in\mathcal{I}}(\hat{u}_j-\hat{d}_j)^{\delta_j}=\prod_{j\in\mathcal{I}}\alpha_j^{\delta_j}\prod_{j\in\mathcal{I}}(u_j-d_j)^{\delta_j}$.
% % \begin{equation*}
% %     \prod_{j\in\mathcal{I}}(\hat{u}_j-\hat{d}_j)^{\delta_j}=\prod_{j\in\mathcal{I}}\alpha_j\prod_{j\in\mathcal{I}}(u_j-d_j)^{\delta_j}.
% % \end{equation*}
% As $\prod_{j\in\mathcal{I}}\alpha_j^{\delta_j}$ is a positive constant, it would not change the original maximizer.

\vspace{2mm}

%\noindent\textbf{Independence of irrelevant alternatives}:
\noindent\textbf{IIA}: Let $\mathcal{V}'\subset\mathcal{V}$ such that $\bm{u}^*\in\mathcal{V}'$. Then, along with the optimality of $\bm{u}^*$ in $\mathcal{V}$, we have in particular that $ \prod_{j\in\mathcal{I}}(u^*_j-d_j)^{\delta_j}\geq\prod_{j\in\mathcal{I}}(u_j-d_j)^{\delta_j}$ for all $\bm u\in\mathcal{V}'$. This implies that $f(\mathcal{V}',\bm{d})=f(\mathcal{V},\bm{d})$.
% \begin{equation*}
%     \prod_{j\in\mathcal{I}}(\bm{u}^*_j-d_j)^{\delta_j}\geq\prod_{j\in\mathcal{I}}(u_j-d_j)^{\delta_j}\quad \text{for all }u\in\mathcal{U}'.
% \end{equation*}

\vspace{2mm}
Next, we show that any bargaining outcome $f:\mathcal{B}\to\mathbb{R}^{n+1}$ satisfying the four properties can be represented as a Nash-bargaining solution. By IPAT, it suffices to consider the normalized problem with disagreement point $\bm{0}$. Indeed, the affine transformation $T(\bm{u})=\bm{u}-\bm{d}$ maps $(\mathcal{V},\bm{d})$ to $(\mathcal{X},\bm{0})$, where $    \mathcal{X}:=\mathcal{V}-\bm{d}:=\{\bm{u}-\bm{d}:\bm{u}\in\mathcal{V}\}
$,
% \begin{equation*}
%     \mathcal{X}:=\mathcal{U}-\bm{d}:=\{x=\bm{u}-\bm{d}:\bm{u}\in\mathcal{U}\},
% \end{equation*}
which is a convex and compact set. We thus aim to characterize $f(\mathcal{X})\in\mathcal{X}$, where we omit the disagreement-point argument $\bm{0}$ in $f$ to ease the presentation.

\vspace{2mm}
\noindent\textbf{Step 1: Construction of $\bm{\delta}$}\\
% Define the standard simplex $\Delta_{+}:=\{\bm{x}\in\mathbb{R}^{n+1}_+:\sum_{i=1}^{n+1}x_i\leq1\}$.
Let $\bm{\delta}:=f(\Delta^{n+1})$, where $\Delta^{n+1}:=\{\bm{x}\in\mathbb{R}^{n+1}_+:\sum_{i=1}^{n+1}x_i\leq 1\}$. %\Timcomment{$\Delta$ already defined, but with equality.} \takwacomment{You mean $\Delta$? I think $\Delta$ and $\Delta$ are the same, I will use $\Delta$ (with inequality) and $\Delta_+$ (with equality) from now on.}\Timcomment{Sounds good. I see it's changed} 
By strict feasibility, we have $\bm{\delta}>\bm{0}$. In addition, Pareto optimality implies that $\bm{\delta}$ lies on the Pareto frontier of $\Delta$. Thus, $\bm{\delta}\in \Delta^{n+1}_+$.

\vspace{2mm}

\noindent\textbf{Step 2: Showing }$\bm{f(\mathcal{E}(\bm{e}))=\argmax_{\bm{x}\in\mathcal{E}(\bm e)}\sum_{j\in\mathcal{I}}\delta_j\ln x_j}$\\
Take $\bm{e}\in\mathbb{R}^{n+1}_{++}$, define a set $\mathcal{E}(\bm{e}):=\{\bm{x}\in\mathbb{R}^{n+1}_+:\sum_{i=1}^{n+1}e_ix_i\leq1\}$ and a positive affine transformation $\hat{T}(\bm{x}):=(e_1x_1,\dots,e_{n+1}x_{n+1})$. Then, $\hat{T}(\mathcal{E}(\bm e))=\Delta^{n+1}$. By the invariance to positive affine transformations, we have $\hat{T}(f(\mathcal{E}(\bm{e})))=f(\Delta^{n+1})={\bm\delta}$. Hence, relabeling the index $n+1$ as $R$, we have $ f_j(\mathcal{E}(\bm{e}))=\frac{\delta_j}{e_j}$, $j\in \mathcal{I}$.
% \begin{equation*}
%     f_j(\mathcal{E}(\bm{e}))=\frac{\delta_j}{e_j}\quad\text{for all }j\in\mathcal{I},
% \end{equation*}

We show that $f(\mathcal{E}(\bm{e}))$ is the solution of the optimization problem $\max_{\bm{x}\in\mathcal{E}(\bm e)}\sum_{j\in\mathcal{I}}\delta_j\ln x_j$.
% \begin{equation*}
%     \max_{\bm{x}\in\mathcal{E}}\sum_{j\in\mathcal{I}}\delta_j\ln x_j.
% \end{equation*}
Indeed, consider the Lagrangian $\mathcal{L}:=\sum_{j\in\mathcal{I}}\delta_j\ln x_j+z(1-\sum_{j\in\mathcal{I}}e_jx_j)$, the first-order condition leads to $x_j=\delta_j/ze_j$ for all $j\in\mathcal{I}$. Using the fact that $\sum_{j\in\mathcal{I}}e_jx_j=1$ {and $\sum_{j\in\mathcal{I}}\delta_j=1$, we have} $z=1$ and thus $x_j=\delta_j/e_j$.
\vspace{2mm}

\noindent\textbf{Step 3: Showing }$\bm{f(\mathcal{X})=\argmax_{\bm{x}\in\mathcal{X}}\sum_{j\in\mathcal{I}}\delta_j\ln x_j}$ \\
Consider the optimization problem 
{\small\begin{equation}\label{Eq:auxNB}
    \max_{\bm{x}\in\mathcal{X}}\sum_{j\in\mathcal{I}}\delta_j\ln x_j
\end{equation}}%
and define $\bm{c}:=\argmax_{\bm{x}\in\mathcal{X}}\sum_{j\in\mathcal{I}}\delta_j\ln x_j>\bm{0}$. Note that $\bm{c}$ is unique because of the strict concavity of the objective function and the convexity of $\mathcal{X}$. Fix $\bm{x}\in\mathcal{X}$, and $t\in[0,1]$, the vector $\bm{x}(t):=\bm{c}+t(\bm{x}-\bm{c})\in\mathcal{X}$
% \begin{equation*}
%     \bm{x}(t):=\bm{c}+t(\bm{x}-\bm{c})\in\mathcal{X},\quad\text{for all } t\in[0,1],
% \end{equation*}
is feasible in Problem \eqref{Eq:auxNB} due to the convexity of $\mathcal{X}$. 

Define $\varphi(t):=\sum_{j\in\mathcal{I}}\delta_j\ln (c_j+t(x_j-c_j))$, the optimality of $\bm{c}$ implies that $\varphi'_+(0)\leq 0$. By direct calculation, we have $  \varphi'_+(0)=\sum_{j\in\mathcal{I}}\frac{\delta_j}{c_j}(x_j-c_j)\leq 0$ for all $x\in\mathcal{X}$,
% \begin{equation*}
%     \varphi'_+(0)=\sum_{j\in\mathcal{I}}\frac{\delta_j}{c_j}(x_j-c_j)\leq 0\quad\text{for all }\bm{x}\in\mathcal{X},
% \end{equation*}
and thus $  \sum_{j\in\mathcal{I}}\delta_jx_j/c_j\leq\sum_{j\in\mathcal{I}}\delta_j=1$.
% \begin{equation*}
%     \sum_{j\in\mathcal{I}}\frac{\delta_jx_j}{c_j}\leq\sum_{j\in\mathcal{I}}\delta_j=1
% \end{equation*}
Let $p_i:=\delta_i/c_i$ for all $i=1,\dots,n$ and $p_{n+1}:=\delta_R/c_R$, we have $\mathcal{X}\subset\mathcal{E}(\bm{p})$. According to Step 2, $f(\mathcal{E}(\bm{p}))=\bm{c}$, along with the independence of irrelevant alternatives, we obtain $f(\mathcal{X})=\bm{c}$, and the proof is complete. \qed

\section{Proof of Lemma \ref{Lem:POSai}}\label{App:POSai}

Assume, on the contrary, that $a_i^*=0$ and $a_k^*>0$ for two distinct peers $i,k\in\mathcal{N}$ and $i\neq k$. In that case, $W_i^*=w_i-(1-a_R^*+P^*/\mathbb{E}[S])\mu_i$ and $W_k^*=w_{{k}}-a_k^*(S-\mathbb{E}[S]){-}(1-a_R^*+P^*/\mathbb{E}[S])\mu_{{k}}$. 

Consider a contract $(\bm{a}^{\varepsilon},P^*)$ such that $a_i^{\varepsilon}=\varepsilon>0$ and $a_k^{\varepsilon}=a_k^*-\varepsilon\geq0$ with other variables, $P^*$, $a_R^*$ and $a_j^{\varepsilon}=a_j^*$ for all $j\in\mathcal{N}\backslash\{i,k\}$, remain unchanged from the optimal contract.
The new contract $(\bm{a}^{\varepsilon},P^*)$
%with unchanged $a_R^*$ 
clearly satisfies the market-clearing condition. We then show the strict feasibility of this contract and $\bm{a}^{\varepsilon}\in[0,1]^n$. Note that the IR of the P2P reinsurer remains strictly satisfied as $a_R^*$ and $P^*$ are not changed. 

Define $s_m:=V_m(\bm{a}^*,P^*)-d_m$ for all $m\in\mathcal{I}$.
%\KTHNcomment{$m\in\mathcal{N}$?} \takwacomment{I also use $s_R$ to restrict $\overline{\iota}$ to reach the strict feasibility below.}. 
As each $V_m$ is continuous at $(\bm{a}^*,P^*)$ due to Proposition \ref{pp:V:continuous}, for each $m\in\mathcal{N}$, there exists $r_m>0$ such that if $\|(\bm{a}^*,P^*)-(\bm{a},P)\|_{\infty}<r_m$, then $|V_m(\bm{a}^*,P^*)-V_m(\bm{a},P)|<s_m/2$. By choosing $0<\varepsilon<\min\{a_k^*/2,1/2,\min_{m\in\mathcal{N}}r_m\}$, we have $\bm{a}^{\varepsilon}\in[0,1]^n$, and $\|(\bm{a}^*,P^*)-(\bm{a}^{\varepsilon},P^*)\|_{\infty}=\varepsilon<r_m$ leads to $|V_m(\bm{a}^*,P^*)-V_m(\bm{a}^{\varepsilon},P^*)|<s_m/2$ for all $m\in\mathcal{N}$, and thus the strict feasibility follows.
% \takwacomment{Obviously, $\bm{a}^{\varepsilon}\in\Delta$ and $P_{\min}(\bm{a}^{\varepsilon})=P_{\min}(\bm{a}^*)$. However, $P_{\max}(\bm{a}^{\varepsilon})\leq P_{\max}(\bm{a}^*)$ so we need to choose sufficiently small $\varepsilon$ so that $P^*<P_{\max}(\bm{a}^{\varepsilon})$}
% \KTHNcomment{Also need to satisfies $P_{\min}$ $P_{\max}$ condition along the direction. }
% \takwacomment{I changed to new version to directly show the strict feasibility. }

Along the perturbation $\bm{a}^{\varepsilon}$,  define $\Phi(\varepsilon):=\mathcal{J}(\bm{a}^{\varepsilon},P^*)$ and consider its first derivative at $\varepsilon=0^+$, we obtain
{\small\begin{align*}
\Phi'(0^+)=&\ \frac{\delta_i}{V_i(\bm{a}^*,P^*)-d_i}\left.\frac{d}{d\varepsilon}V_i(\bm{a}^\varepsilon,P^*)\right|_{\varepsilon=0^+}+\frac{\delta_k}{V_k(\bm{a}^*,P^*)-d_k}\left.\frac{d}{d\varepsilon}V_k(\bm{a}^\varepsilon,P^*)\right|_{\varepsilon=0^+}\\
=&\ \frac{\delta_i}{V_i(\bm{a}^*,P^*)-d_i}(-U'_i(W_i^*)\mathbb{E}[S-\mathbb{E}[S]])+\frac{\delta_k}{V_k(\bm{a}^*,P^*)-d_k}\mathbb{E}[U'_k(W_k^*){(S-\mathbb{E}[S])} ]
\\
=&\ \frac{\delta_k}{V_k(\bm{a}^*,P^*)-d_k}
\operatorname{Cov}\!\left(S-\mathbb{E}[S],
U_k'\!\left(W_k^*\right)\right)\\
=&\ \frac{\delta_k}{V_k(\bm{a}^*,P^*)-d_k}
\operatorname{Cov}\!\left(S-\mathbb{E}[S],
U_k'\!\left(w_k-a_k^*(S-\mathbb{E}[S])
-\left(1-a_R^*+\frac{P^*}{\mathbb{E}[S]}\right)\mu_k\right)\right)>0,
\end{align*}}%
% \Timcomment{Shouldn't the denominator read $V_i-d_i$?} \takwacomment{I am not sure what you mean? As $E[S-E[S]]=0$ the related term disappear, there will be no $.../(V_i-d_i)$ term.}\Timcomment{What I meant is that I think we take a derivative of roughtly $\ln(V_i-d_i)$, and so the denominator in $\frac{\delta_i}{V_i(\bm{a}^{\varepsilon},P^*)}$ should be $V_i-d_i$ instead of $V_i$?} \takwacomment{I see it now, not sure how to make that mistake and I amend it now. }
where the covariance is positive since $U'_k(W_k^*)$ is strictly increasing in $S-\mathbb{E}[S]$ by the strict concavity of $U_k$, and due to the non-degeneracy of $S$.
% (Assumption \ref{Ass:non_degenerate_S}). 
This contradicts the optimality of $(\bm{a}^*,P^*)$. 
Since $\mathcal{K}$ is non-empty (see Lemma \ref{Lemma:SF}) and we are confined into the set $\bm{a}\in \Delta^n$, the fact that $a_i=0$  and $a_k>0$ for some $i\neq k\in \mathcal{N}$ is sub-optimal implies that either one of the two cases depicted in the statement could hold.

% \takwacomment{Note that Condition \eqref{Eq:supH} ensures $\bm{a}^*\in\Delta$ according to Lemma \ref{Lemma:SF} \KTHNcomment{Condition \eqref{Eq:supH} only ensures $\mathcal{K}$ is non-empty, it does not imply the solution of the global, unconstrained problem is such that $\bm{a}\in\Delta$. Instead, we can say: ``Since $\mathcal{K}$ is non-empty and we are confined into the set $\bm{a}\in \Delta$, the fact that $a_i=0$  and $a_k>0$ for some $i\neq k\in \mathcal{N}$ is sub-optimal implies that either one of the two cases depicted in the statement could hold.} %then ruling out $a_i=0$ for some $i\in\mathcal{N}$ results in either cases in the statement.
% } 
%\takwacomment{I changed it to strict concavity;otherwise, we don't have the strict sign in the inequality.}

It remains to show that, when $\tau=0$, the second alternative cannot have $a_R^*=0$. Suppose, to the contrary, that $a_R^*=0$. When it holds, we have $P_{\min}(\bm{a})=0$. Since $(\bm{a}^*,P^*)\in\mathcal{K}$, strict feasibility implies $P^*>P_{\min}(\bm{a})=0$. Hence, the constraint $P\geq 0$ is slack, and the first-order condition with respect to $P$ applies. For $\iota\in(0,1)$, we define the perturbed strategy $(\bm{a}^\iota,P^\iota)$ as follows. Let $P^\iota:=P^*>0$ remain unchanged; define  $  a_R^{\iota}=\iota\in(0,1)$, $a_i^{\iota}=(1-\iota)a_i^*\in(0,1)$ for $i\in\mathcal{N}$.
% \begin{equation*}
%     a_R^{\iota}=\iota\in(0,1),\quad 
%     a_i^{\iota}=(1-\iota)a_i^*\in(0,1)\quad\text{for all }i\in\mathcal{N}.
% \end{equation*}
Clearly, the market-clearing condition is satisfied. In addition, following the above continuity argument, one can show that there exists {$\overline{\iota}\leq  1\wedge\min_{m\in\mathcal{I}}r_m$}
%\takwacomment{$\overline{\iota}\in(1,\min_{m\in\mathcal{I}}r_m)$} 
%\KTHNcomment{$\iota\in(0,1)$ otherwise $a^\iota_i<0$; also note that we are using $(1-\iota)a$ instead of absolute change here.} \takwacomment{It's true because $\overline{\iota}$ is the upper bound of $\iota$. I added 0 as the lower bound below.} \KTHNcomment{How to derive the range of $\overline{\iota}$? And if $\overline{\iota}>1$, will this mean $\iota\in(0,1)$ in particular which is too wide?} 
%\takwacomment{Following the above logic, $||(\bm{a}^{\iota},P^*)-(\bm{a}^{\iota},P^*)||_{\infty}=\iota<r_m$ for all $m\in\mathcal{I}$, thus $||V(\bm{a}^{\iota},P^*)-V(\bm{a}^{\iota},P^*)||_{\infty}<s_m/2$ for all $m\in\mathcal{I}$ and the strict feasibility follows. (I made a typo that messed with $r_m$ and $s_m$.)}
such that $(\bm{a}^\iota,P^*)$ is strictly feasible whenever {$\iota \in (0, \overline{\iota})$. } 
%\KTHNcomment{I still don't see why $\overline{\iota} \in (1,....)$, as $\iota$ must always be $\leq 1$; the range of $\overline{\iota}$ looks weird. Did you mean $\overline{\iota}\in(0,...)?$  } \takwacomment{We can just say there exists $\iota\in(0,...)$ to avoid $\overline{\iota}$.}
%(I didn't check this although I believe it's true) 
%Also, we further restrict $\iota\in(1,\min_{m\in\mathcal{I}})$, following a similar argument as above, we can ensure the stricty feasibility.
 %\KTHNcomment{what does it mean, and how to ensure the perturbed contract is feasible?} \takwacomment{The proof strategy is to make up something feasible that improves the optimal contract so that contraction arises; $p$ is something I make up, and terms with $p$ will be canceled out in the last expression. I added some conditions for the feasibility.} \KTHNcomment{How to show $P_{\min}(\bm{a}^\iota)\leq  P^\iota \leq P_{\max}(\bm{a}^\iota)$ for all $\iota>0$ (or within some interval $(0,\varepsilon)$? I guess you will need to use some convexity/concavity of $P_{\min}$ and $P_{\max}$, which means one needs to consider convex combination of $\bm{a}$ and $h$ instead?; see my revised proof of the corollary below.}
Then, we have
{\small\begin{align*}
&W_R^{\iota}=w_R+P^*-\iota S,\quad W_i^{\iota}=w_i-a_i^{\iota}(S-\mathbb{E}[S])-\left(1-\iota+\frac{P^*}{\mathbb{E}[S]}\right)\mu_i,\quad\text{for all }i\in\mathcal{N},\\ 
&W_R^*=w_R+P^*,\quad W_i^*=w_i-a_i^*(S-\mathbb{E}[S])-\left(1+\frac{P^*}{\mathbb{E}[S]}\right)\mu_i,\quad\text{for all }i\in\mathcal{N}.
\end{align*}}%
Consider $\Psi(\iota):=\mathcal{J}(\bm{a}^{\iota},P^{\iota})$ and its first derivative at $\iota=0^+$, which is given by 
{\small\begin{equation*}
\Psi'(0^+)
=-\frac{\delta_RU'_R(W_R^*)\mathbb{E}[S]}{V_R(\bm{a}^*,P^*)-d_R}+\sum_{i=1}^n\frac{\delta_i\mathbb{E}[U_i'(W_i^*)(a_i^*(S-\mathbb{E}[S])+\mu_i)]}{V_i(\bm{a}^*,P^*)-d_i}.
\end{equation*}}%
{Given that $\bm{a}^*\in[0,1]^n$, we have $P_{\min}(\bm{a}^*)\geq 0$; see the discussion in Section \ref{sec:prelim}. Along with the fact that the Nash product is well-defined only if $P\geq P_{\min}(\bm{a}^*)$, the optimal premium $P^*$ must satisfy $P^*\geq P_{\min}(\bm{a}^*)\geq 0$. Hence, the constraint $P\geq 0$ is unbinding, and the FOC with respect to $P$ for  $(\bm{a}^*,P^*)$ reads}
%As $P^*>0$ and the constraint $P\geq0$ is not active, obtaining the FOC with respect to $P$ with the contract $(\bm{a}^*,P^*)$, we have \KTHNcomment{Not sure about this, since $P^*$ could have been obtained from the Lagrangian instead of the raw unconstrained FOC, where the former makes $P^*>0$, not the other way around (i.e., not because $P^*>0$, it solves the raw FOC). Although I think adding the Lagrange multipler still preserve the result by the slackness condition (please check the sign I might got it wrong)  }
%\takwacomment{I checked the maths and the sign, your version should be correct. My version is like Condition \eqref{Eq:supH} applies in this case as well, so $P^*>P_{\min}=0$; otherwise, $P^*=0$ can lead to arbitrary $\bm{a}^*$ such that $\sum_ia_i^*=1$. Somehow, this is trivial and maybe we should exclude this case at the begining.} \KTHNcomment{I guess given that $\bm{a}\in[0,1]^n$, $P_{\min}(\bm{a}) \geq 0$ by definition. As the Nash product is well-defined only if $P\in(P_{\min},P_{\max})$, the fact that $P\geq P_{\min}$ automatically guarantees $P\geq  0$ without having to introduce the Lagrange multiplier? If this argument is true, maybe we can reinforce this when we introduce $P_{\min}$.   }
%\takwacomment{The logic is basically like that, but we need $P^*>0$ to avoid introducing the multiplier. One simple way is just to stay the condition $P^*>0$ in Lemma \ref{Lem:POSai}.} \KTHNcomment{Why do we need $P^*>0$?} \takwacomment{Indeed, for this Lemma, we don't need it; for the following theorem, to reach a nicer FOC conditions on P, relaxing it is better.}
{\small\begin{equation*}
\frac{\delta_RU_R'(W_R^*)}{V_R(\bm{a}^*,P^*)-d_R}=\frac{1}{\mathbb{E}[S]}\sum_{i=1}^n\frac{\delta_i\mu_i\mathbb{E}[U_i'(W_i^*)]}{V_i(\bm{a}^*,P^*)-d_i},
\end{equation*}}%
%\KTHNcomment{where $\lambda_P\geq 0$ is a Lagrange multiplier that satisfies the complementary slackness condition $\lambda_PP^*=0$.} 
This leads to
{\small\begin{equation*}
    \Psi'(0^+)= 
    %\KTHNcomment{\lambda_P\mathbb{E}[S]}+
    \sum_{i=1}^n\frac{\delta_ia_i^*\mathbb{E}[U_i'(W_i^*)(S-\mathbb{E}[S])]}{V_i(\bm{a}^*,P^*)-d_i}=\sum_{i=1}^n\frac{\delta_ia_i^*\operatorname{Cov}(U_i'(W_i^*),S-\mathbb{E}[S])}{V_i(\bm{a}^*,P^*)-d_i}>0,
\end{equation*}}%
contradicting the optimality of $(\bm{a}^*,P^*)$. Thus, $a_R^*=0$ with $\bm{a}^*>0$ is sub-optimal. \qed

\section{Proof of Theorem \ref{Thm:EUsol}}\label{App:EUsol}

At boundary points of $\Delta^n$, all derivatives below are interpreted as feasible one-sided directional derivatives. These derivatives exist under Assumption \ref{Ass:EU}.

\vspace{2mm}
\noindent\textbf{Existence}: Note that the objective {function} in Problem \eqref{Prob:hetero} is continuous {in $(\bm{a},P)$} on a convex and compact set $\mathcal{F}$ (see Lemma \ref{Lem:ccF}). By the extreme value theorem, it attains a maximum on $\mathcal{F}$. Using the strict feasibility (see Lemma \ref{Lemma:SF}), the maximizer lies in {$\mathcal{K}$} and thus the welfare for each agent is improved.

\vspace{2mm}
\noindent\textbf{Uniqueness}: By Lemma \ref{Lem:ccF},  the feasible set of Problem \eqref{Prob:hetero} is convex.
% We first show that the Nash product is strictly concave in the strictly feasible region $\mathcal{K}$.  
% This 
Then, the uniqueness result follows if the log-transformation of the objective in Problem \eqref{Prob:hetero}, $\mathcal{J}(\bm{a},P)$ in \eqref{Eq:J},
is strictly concave in $\mathcal{K}$.
%{\color{red}TB: I think this statement is not true, but also not needed. We get uniqueness if the log-transformed objective is strictly concave...} \takwacomment{I see, the Nash product is not necessarily strictly concave. I changed to the version you suggested.}
    
To this end, we prove that along any two distinct strictly feasible
contracts, at least one peer's terminal wealth differs with positive
probability; the corresponding expected utility surplus is then strictly
concave along that segment.
    % \KTHNcomment{(It seems that you can show this to hold for all $i\in\mathcal{N}$ if I have not mistaken.)} \takwacomment{Yes, this is the minimal requirement as Tim requested (if I was not mistaken)} \KTHNcomment{So instead of saying at least one, we can say all?} \takwacomment{Yes, it seems "say all" is better when considering the log-transformed problem so that the objective is finite.}
    Take $(\bm{a},P),(\Tilde{\bm{a}},\Tilde{P})\in\mathcal{K}$, we show that {$\mathbb{P}(W_i(\bm{a},P) \neq W_i(\Tilde{\bm{a}},\Tilde{P}))>0$ for some $i\in\mathcal{N}$.} Assume the contrary, i.e., for all $i\in\mathcal{N}$, 
 {\small   \begin{equation*}
W_i(\bm{a},P)-W_i(\Tilde{\bm{a}},\Tilde{P})
=(\Tilde{a}_i-a_i)(S-\mathbb{E}[S])-\left(\frac{P-\Tilde{P}}{\mathbb{E}[S]}+\sum_{j=1}^n(a_j-\Tilde{a}_j)\right)\mu_i=0 \quad \mathbb{P}\text{-a.s.}.
\end{equation*}}%
Since $S$ is non-degenerate, this leads to $\Tilde{a}_i=a_i$ for all $i\in \mathcal{N}$. Substituting this into the displayed equality yields $P=\Tilde P$. Hence, $(\bm a,P)=(\Tilde{\bm a},\Tilde P)$, contradicting the assumption that the two contracts are distinct.

%Therefore, if two contracts are distinct, then at least one peer’s terminal wealth must differ with positive probability. 
%Hence, for that peer, the strict concavity of $U_i$ gives a strict Jensen inequality. That is enough to prove that at least one utility surplus term is strictly concave, and then $\mathcal{J}$ is strictly concave on $\mathcal{K}$.

    Let $0<\lambda<1$, {and $i\in \mathcal{N}$ be such that $\mathbb{P}(W_i(\bm{a},P) \neq W_i(\Tilde{\bm{a}},\Tilde{P}))>0$.}
    % \takwacomment{The subtle thing is that we did not define $\mathbb{P}$} \KTHNcomment{If we don't have $\mathbb{P}$ then where does $\mathbb{E}$ come from} \takwacomment{I mean shall we define the prob. space explicitly or like our last one, just implicitly assuming that there is only one law here?} \KTHNcomment{We can just define a probability space $(\Omega,\mathcal{G},\mathbb{P})$  with an expectation $\mathbb{E}$ at the very beginning ($\mathcal{F}$ was taken as the admissible set; or change it to $\mathcal{A}$ if needed)} \takwacomment{I checked Tim did this in his ASTIN paper as well, so we can some lines for that.} 
    Using the concavity of $U_j$, $j\in\mathcal{I}$, and Jensen's inequality, we obtain 
    % \Timcomment{For at least one agent, only?} \takwacomment{Yes, we only need one strictly concave U and the other can be just concave, then it's sufficient to assert that the log-transformation is strictly concave.}
  {\small  \begin{align*}
        &(\mathbb{E}[U_R(\lambda W_R(\bm{a},P)+(1-\lambda)W_R({\Tilde{\bm{a}}},\Tilde{P}))]-d_R) \geq  \lambda(\mathbb{E}[U_R(W_R(\bm{a},P))]-d_R)+(1-\lambda)(\mathbb{E}[U_R(W_R({\Tilde{\bm{a}}},\Tilde{P}))]-d_R),\\
        &\mathbb{E}\left[U_i\left(\lambda W_i(\bm{a},P)+(1-\lambda) W_i(\Tilde{\bm{a}},\Tilde{P})\right)\right]-d_i \geq \lambda\left(\mathbb{E}\left[U_i\left(W_i(\bm{a},P)\right)\right]-d_i\right)
        +(1-\lambda)\left(\mathbb{E}\left[U_i\left(W_i(\Tilde{\bm{a}},\Tilde{P})\right)\right]-d_i\right). 
    \end{align*}}%
    {Note that strict inequality holds for $i\in \mathcal{N}$ since $U_i$ is strictly concave and $\mathbb{P}(W_i(\bm{a},P) \neq W_i(\Tilde{\bm{a}},\Tilde{P}))>0$. }
   % where at least one inequality is strict.
    Along with the fact that $\ln(\cdot)$ is increasing and concave, we conclude that $\mathcal{J}$ is strictly concave on $\mathcal{K}$.

    Now, with $\mathcal{K}\neq\emptyset$, assume the contrary that there are two maximizers $(\bm{a},P),(\Tilde{\bm{a}},\Tilde{P})\in\mathcal{K}$ of Problem \eqref{Prob:hetero}, where $(\bm{a},P)\neq (\Tilde{\bm{a}},\Tilde{P})$. Note that {any maximizers of Problem \eqref{Prob:hetero}} must lie in $\mathcal{K}$;  otherwise any contract lying in { $\mathcal{F}\setminus\mathcal{K}$}
    % $\mathcal{F}\setminus\mathcal{F}$ 
    must lead to a zero Nash product, contradicting the existence of a strictly feasible point with positive product. By the convexity of $\mathcal{F}$ (Lemma \ref{Lem:ccF}), $\lambda(\bm{a},P)+(1-\lambda)(\Tilde{\bm{a}},\Tilde{P}) \in \mathcal{F}$.
    Moreover, since both $(\bm a,P)$ and $(\widetilde{\bm a},\widetilde P)$
belong to $\mathcal K$ and each $V_j$ is concave, all IR constraints remain
strict at their convex combination. Hence, $  \lambda(\bm{a},P)+(1-\lambda)(\widetilde{\bm{a}},\widetilde{P})
    \in \mathcal K .$
% \[
%     \lambda(\bm{a},P)+(1-\lambda)(\widetilde{\bm{a}},\widetilde{P})
%     \in \mathcal K .
% \]
 In addition, by the strict concavity of $\mathcal{J}$ {and the fact that  $(\bm{a},P)\neq (\Tilde{\bm{a}},\Tilde{P})$,} we have $        \mathcal{J}(\lambda(\bm{a},P)+(1-\lambda)(\Tilde{\bm{a}},\Tilde{P}))>\lambda\mathcal{J}(\bm{a},P)+(1-\lambda)\mathcal{J}(\Tilde{\bm{a}},\Tilde{P})=\mathcal{J}(\bm{a},P),
$
    % \begin{equation*}
    %     \mathcal{J}(\lambda(\bm{a},P)+(1-\lambda)(\Tilde{\bm{a}},\Tilde{P}))>\lambda\mathcal{J}(\bm{a},P)+(1-\lambda)\mathcal{J}(\Tilde{\bm{a}},\Tilde{P})=\mathcal{J}(\bm{a},P),
    % \end{equation*}
    contradicting the simultaneous optimality of $(\bm{a},P)$ and $(\Tilde{\bm{a}},\Tilde{P})$. 

\vspace{2mm}
\noindent\textbf{Sufficiency and necessity of optimality}: For the remaining statements, we first show the sufficiency of optimality for all cases, and tackle the necessity by cases. As $\mathcal{K}\neq\emptyset$ and $\mathcal{J}$ is only defined thereon 
% $P_{\min}(\bm{a}) < P < P_{\max}(\bm{a})$ 
with $P_{\min}(\bm{a})\geq 0$ for $\bm{a}\in \Delta^n$, we consider the Lagrangian that relaxes the inequality constraint $P\geq0$  and enforces only the constraints for $\bm{a}\in \Delta^n$:
{\small         \begin{equation}
        \label{eq:L2}
            \mathcal{L}(\bm{a},P,\bm{\lambda},\upsilon) := \mathcal{J}(\bm{a},P) + \sum_{i=1}^n \lambda_i a_i + \upsilon \left(1-\sum_{i=1}^na_i \right),
        \end{equation}}%
    where $\bm{\lambda} = (\lambda_i)_{i=1}^n$, and $\lambda_i,\upsilon\geq 0$ for all $i\in \mathcal{N}$.
Then, {any optimal contract} $(\bm{a}^*,P^*)$ {must} satisfy the first-order condition and the complementary slackness, i.e., 
{\small \begin{equation}
        \label{Eq:KKT}
        \begin{aligned}
                &\frac{\partial\mathcal{L}(\bm{a}^*,P^*,\bm{\lambda}^*,\upsilon^*)}{\partial a_i} = 0, \quad \lambda_i^*a_i^*=0,  \quad i\in \mathcal{N}; \quad \frac{\partial\mathcal{L}(\bm{a}^*,P^*,\bm{\lambda}^*,\upsilon^*)}{\partial P} = 0, \quad \upsilon^*\left(1-\sum_{i=1}^n a_i^* \right)=0,
        \end{aligned}
        \end{equation}}%
        where $\bm{\lambda}^*$ and $\upsilon^*$ are the corresponding optimal Lagrangian multipliers. Note that each case corresponds to a different value set of $\bm{\lambda}^*$ and $\upsilon^*$, shown in Table \ref{tab:kkt_regimes}.
        
        By the concavity of $\mathcal{J}$ and \eqref{Eq:KKT}, we obtain for any $(\bm{a},P)\in\mathcal{K}$,
       {\small  \begin{align*}
            \mathcal{J}(\bm{a},P)-
            \mathcal{J}(\bm{a}^*,P^*)&\leq\nabla\mathcal{J}(\bm{a}^*,P^*)\cdot((\bm{a},P)-(\bm{a}^*,P^*)) =\sum_{i=1}^n(\upsilon^*-\lambda_i^*)(a_i-a_i^*)=\upsilon^*\left(\sum_{i=1}^na_i-1\right)-\sum_{i=1}^n\lambda_i^*a_i\leq 0,
        \end{align*}}%
        and thus $(\bm{a}^*,P^*)$ maximizes $\mathcal{J}$ over $\mathcal{K}$. Recall that any contract in $\mathcal{F}\backslash\mathcal{K}$ has at least one binding IR constraint and therefore yields a zero Nash product, whereas $(\bm{a}^*,P^*)\in\mathcal{K}$ yields a strictly positive Nash product. Hence $(\bm{a}^*,P^*)$ is globally optimal over $\mathcal{F}$, and the sufficiency follows.

        \begin{table}[htbp]
\centering
\small
\begin{tabular}{lll}
\toprule
Regime & Constraint status & Lagrangian multipliers \\
\midrule
Partial reinsurance 
& 
$a_i^*>0$ for all $i\in\mathcal N$, and $\sum_{i=1}^n a_i^*<1$
&
$\lambda_i^*=0$ for all $i\in\mathcal N$, and $\upsilon^*=0$
\\[0.8em]

Full reinsurance
&
$a_i^*=0$ for all $i\in\mathcal N$, and $\sum_{i=1}^n a_i^*=0<1$
&
$\upsilon^*=0$ and $\lambda_i^*\ge 0$ 
\\[0.8em]

Zero reinsurance
&
$a_i^*>0$ for all $i\in \mathcal N$, and $\sum_{i=1}^n a_i^*=1$
&
$\lambda_i^*=0$ for all $i\in\mathcal N$ and $\upsilon^*\ge 0$
\\
\bottomrule
\end{tabular}
\caption{Active constraints and KKT multipliers in the three reinsurance regimes.}
\label{tab:kkt_regimes}
\end{table}

  %\KTHNcomment{In the sequel, we assume that the feasibility condition $P\in (P_{\min}(\bm{a}),P_{\max}(\bm{a}))$ is satisfied (so that $\mathcal{J}$ is well-defined), and introduce the Lagrangian that enforces the constraint $\bm{a}\in \Delta$:
    %     \begin{equation}
    %     \label{eq:L}
    %         \mathcal{L}(\bm{a},P,\bm{\lambda},\lambda) := \mathcal{J}(\bm{a},P) + \sum_{i=1}^n \lambda_i a_i + \lambda \left(1-\sum_{i=1}^na_i \right),
    %     \end{equation}
    % where $\bm{\lambda} = (\lambda_i)_{i=1}^n$, and $\lambda_i,\lambda\geq 0$ for all $i\in \mathcal{N}$. By the concavity of $\mathcal{J}$ and the Karush–Kuhn–Tucker (KKT) conditions, the optimal solution $(\bm{a}^*,P^*)$ and the Lagrange multiplier $(\bm{\lambda}^*,\lambda^*)$ must satisfy 
    %     \begin{equation}
    %     \label{eq:KKT}
    %     \begin{aligned}
    %             &\frac{\partial\mathcal{L}(\bm{a}^*,P^*,\bm{\lambda}^*,\lambda^*)}{\partial a_i} = 0, \ \lambda_i^*a_i^*=0,  \ i\in \mathcal{N}; \ \frac{\partial\mathcal{L}(\bm{a}^*,P^*,\bm{\lambda}^*,\lambda^*)}{\partial P} = 0, \ \lambda^*\left(1-\sum_{i=1}^n a_i^* \right)=0.
    %     \end{aligned}
    %     \end{equation}

    In what follows, we prove the necessity for each reinsurance regime.

    \vspace{2mm}
    \noindent\textbf{Partial reinsurance}: 
    % {Recall that the objective function is well-defined only when $P_{\min}(\bm{a})<P<P_{\max}(\bm{a})$. This condition is intrinsic to the domain of definition of the problem and is therefore implicitly enforced; it does not need to be imposed as an additional constraint in the Lagrangian formulation.}
    {By Lemma \ref{Lem:POSai}, if $0<\sum_{i=1}^na_i^*<1$, we must have $a^*_i>0$ for all $i\in \mathcal{N}$, and thus $\bm{a}^*\in \Delta^\circ := \{\bm{a}\in (0,1)^n:\sum_{i=1}^na_i<1\}$. We also have $P^*> P_{\min}(\bm{a}^*)> 0$; see Section \ref{sec:prelim}. This implies that $(\bm{a}^*,P^*)$ lies in the interior of the constraint set of contracts $\Delta^n\times [0,\infty)$, and thus all constraints are non-binding. By the complementary slackness condition in \eqref{Eq:KKT}, we have $\upsilon^*=0$ and $\lambda_i^*=0$ for all $i\in \mathcal{N}$.  Therefore, the KKT condition for the contract $(\bm{a}^*,P^*)$ is characterized by the following unconstrained FOC:}
%   For the interior solution, as no constraints are active, we have, by direction differentiations of $\mathcal{J}$ with respect to $P$ and $a_k$, $k\in\mathcal{N}$, we obtain  
 {\small    \begin{align*}
&\frac{\delta_R\mathbb{E}[S]\mathbb{E}[U_R'(W_R(\bm{a}^*,P^*))]}
{\mathbb{E}[U_R(W_R(\bm{a}^*,P^*))]-d_R}
=\sum_{i=1}^n
\frac{\delta_i\mu_i\mathbb{E}\!\left[U_i'(W_i(\bm{a}^*,P^*))\right]}
{\mathbb{E}[U_i(W_i(\bm{a}^*,P^*))]-d_i},\\
&\frac{\delta_R(1+\tau)\mathbb{E}[U_R'(W_R(\bm{a}^*,P^*))S]}
{\mathbb{E}[U_R(W_R(\bm{a}^*,P^*))]-d_R} =
\sum_{j=1}^n
\frac{\delta_j\mu_j\mathbb{E}\!\left[U_j'(W_j(\bm{a}^*,P^*))\right]}
{\mathbb{E}[U_j(W_j(\bm{a}^*,P^*))]-d_j} +
\frac{\delta_i\mathbb{E}\!\left[U_i'(W_i(\bm{a}^*,P^*))(S-\mathbb{E}[S])\right]}
{\mathbb{E}[U_i(W_i(\bm{a}^*,P^*))]-d_i}, \quad i\in \mathcal{N}.
\end{align*}}%
    This system is equivalent to \eqref{Eq:HeteroFOC}.

    \vspace{2mm}
    \noindent\textbf{Full reinsurance}: 
Suppose that $(\bm{0},P^*(\bm{0}))$ is globally optimal. {Since $\sum_{i=1}^na_i^* = 0$, by the complementary slackness condition in \eqref{Eq:KKT}, we have $\upsilon^*=0$. Consequently, the KKT condition is reduced to 
 {\small    \begin{equation*}
       0= \frac{\partial \mathcal{L}}{\partial P}(\bm{0},P^*(\bm{0}),\bm{\lambda}^*,0)  = \frac{\partial\mathcal{J}}{\partial P}(\bm{0},P^*(\bm{0})) \quad \text{and} \quad 0 = \frac{\partial\mathcal{J}}{\partial a_i}(\bm{0},P^*(\bm{0})) + \lambda_i^*, \quad i\in \mathcal{N}.
    \end{equation*}}%
The former is equivalent to  \eqref{eq:P*0:equation}; and the latter can be expressed as, for $i\in \mathcal{N}$,  
  {\small   \begin{align*}
        0 &=\lambda_i^*+  \frac{\delta_R(1+\tau)\mathbb{E}[U_R'(W_R(\bm{0},P^*(\bm{0})))S]}
{\mathbb{E}[U_R(W_R(\bm{0},P^*(\bm{0})))]-d_R}
-
\sum_{j=1}^n
\frac{\delta_j\mu_j U_j'(W_j(\bm{0},P^*(\bm{0}))) }
{U_j(W_j(\bm{0},P^*(\bm{0})))-d_j}  - \frac{\delta_i U'_i(W_i(\bm{0},P^*(\bm{0})))\mathbb{E}[S-\mathbb{E}[S]]}{U_i(W_i(\bm{0},P^*(\bm{0})))-d_i} \\
&= \lambda_i^* +  \frac{\delta_R(1+\tau) \mathbb{E}[U_R'(W_R(\bm{0},P^*(\bm{0})))S]}
{\mathbb{E}[U_R(W_R(\bm{0},P^*(\bm{0})))]-d_R}
-\sum_{j=1}^n
\frac{\delta_j\mu_j U_j'(W_j(\bm{0},P^*(\bm{0}))) }
{U_j(W_j(\bm{0},P^*(\bm{0})))-d_j}.
    \end{align*}}%
% where we have used \eqref{eq:P*0:equation} in the last line. 
Since $\lambda_i^*\geq 0$, we arrive at \eqref{Eq:a0NPD}. 
}

%%%%%%%%%%%% hideen pertubation argument
% Fix $k\in\mathcal{N}$ and consider a one-side perturbation on $\bm{0}$, i.e., $\bm{a}^{\varepsilon}$ such that $a_k^{\varepsilon}=\varepsilon$ and $a_i^{\varepsilon}=0$ for all $i\in\mathcal{N}\backslash\{k\}$. As $(\bm{0},P^*(\bm{0}))\in\mathcal{K}$, the same continuity argument as in the proof of Lemma \ref{Lem:POSai} (see Appendix \ref{App:POSai}) implies that $(\bm{a}^{\varepsilon},P^*(\bm{0}))\in\mathcal{K}$ for all sufficiently small $\varepsilon$. Then, the global optimality of $(\bm{0},P^*(\bm{0}))$ entails that 
% \begin{equation*}
% \mathcal{J}(\bm{a}^{\varepsilon},P^*(\bm{0}))\leq\mathcal{J}(\bm{0},P^*(\bm{0}))\Rightarrow\lim_{\varepsilon\downarrow0}\frac{\mathcal{J}(\bm{a}^{\varepsilon},P^*(\bm{0}))-\mathcal{J}(\bm{0},P^*(\bm{0}))}{\varepsilon}=\frac{\partial\mathcal{J}}{\partial a_k}(\bm{0},P^*(\bm{0}))\leq0
% \end{equation*}
% for any $k\in\mathcal{N}$ as $k$ is arbitrary, and this implies \eqref{Eq:a0NPD}. \\
%%%%%%%%%%%%%%%%%%%%%%%%

% \section{Proof of Corollary \ref{Cor:FR}}\label{App:FR}

% It follows directly from the statement that the full reinsurance contract is optimal iff every feasible directional derivative from $\bm{a}=\bm{0}$ is nonpositive due to the strict concavity of $\mathcal{J}$, and the latter is equivalent to \eqref{Eq:a0NPD}.

\vspace{2mm}
\noindent\textbf{Zero reinsurance}: Note that $\tau>0$ is necessary for the zero reinsurance; see Lemma \ref{Lem:POSai}. {By the same lemma, we also have $\bm{a}^*>0$, which implies the Lagrange multipliers $\bm{\lambda}$ in \eqref{eq:L2} satisfy $\lambda_i^*=0$  for all $i\in \mathcal{N}$. }
To proceed, with $\sum_{i=1}^na_i^*=1$, the KKT condition reads as, for  $i\in\mathcal{N}$,  
  {\small   \begin{align}
    \label{eq:ZeroReinCond0}
       0&\leq  \upsilon^* = \frac{\partial\mathcal{J}}{\partial a_i}\left(\bm{a}^*,P^*\right) = -\sum_{j=1}^n \frac{\delta_j\mu_j \mathbb{E}[U_j'(W_j(\bm{a}^*,P^*))]}{\mathbb{E}[U_j(W_j(\bm{a}^*,P^*))]-d_j} -\frac{\delta_i\mathbb{E}[U_i'(W_i(\bm{a}^*,P^*))(S-\mathbb{E}[S])]}{\mathbb{E}\left[U_i\left(W_i(\bm{a}^*,P^*) \right)\right]-d_i}  + \frac{\delta_R(1+\tau)\mathbb{E}[S]U_R'(w_R+P^*) }{U_R(w_R+P^*)-d_R},\nonumber\\
        0 &= \frac{\partial \mathcal{J}}{\partial P}\left(\bm{a}^*,P^*\right) = - \sum_{j=1}^n \frac{\delta_j\mu_j\mathbb{E}[U_j'(W_j(\bm{a}^*,P^*))]}{\mathbb{E}[S]\left(\mathbb{E}[U_j(W_j(\bm{a}^*,P^*))]-d_j\right)} + \frac{\delta_R U_R'(w_R+P^*)}{U_R(w_R+P^*)-d_R}.
    \end{align}}%
The second equation in \eqref{eq:ZeroReinCond0} yields the last equation of \eqref{Eq:ZeroReinCond}. Substituting the last equation of \eqref{Eq:ZeroReinCond} into the first $n$ expressions in \eqref{eq:ZeroReinCond0}, we have, for all $i\in \mathcal{N}$, 
   {\small  \begin{equation*}
     \widehat{\lambda}:=    \frac{\delta_R\tau\mathbb{E}[S]U_R'(w_R+P^*) }{U_R(w_R+P^*)-d_R} - \upsilon^* =  \frac{\delta_i\mathbb{E}[U_i'(W_i(\bm{a}^*,P^*))(S-\mathbb{E}[S]) ]}{\mathbb{E}\left[U_i\left(W_i(\bm{a}^*,P^*) \right)\right]-d_i} .
    \end{equation*}}%
By the assumption that $P^*<P_{\max}(\bm{a}^*)$, we have $\mathbb{E}\left[U_i\left(W_i(\bm{a}^*,P^*) \right)\right]-d_i>0$. In addition, with $\bm{a}^*>0$,  $W_i(\bm{a}^*,P^*)$ and $S-\mathbb{E}[S]$ are 
%(strictly) 
counter-monotonic.  Along with the strictly decreasing property of $U_i'(\cdot)$, the random variable $U_i'(W_i(\bm{a}^*,P^*))$ is a strictly increasing function of $S$. Since $S$ is non-degenerate and $a_i^*>0$, we have $\mathbb{E}[U_i'(W_i(\bm{a}^*,P^*))(S-\mathbb{E}[S])] >0.
$
% \[
% \mathbb{E}[U_i'(W_i(\bm{a}^*,P^*))(S-\mathbb{E}[S])] >0.
% \]
Hence, $\widehat{\lambda}>0$, which establishes the first $n$ equations in \eqref{Eq:ZeroReinCond}. Finally, as $\upsilon^*\geq 0$,   we arrive at \eqref{Eq:ZeroReinCond} since $        \frac{\delta_R\tau\mathbb{E}[S]U_R'(w_R+P^*) }{U_R(w_R+P^*)-d_R} = \widehat{\lambda} + \upsilon^* \geq \widehat{\lambda}. 
$ \qed

\section{Proof of Lemma \ref{lem:NoFair:strict:feasible}}\label{App:NoFair:strict:feasible}
Note that Assumption \ref{ass:mgf} is {necessary for} \eqref{Eq:IRCond},
and the strict IR conditions of the P2P reinsurer and all peers are, respectively, equivalent to the following: there exist $\bm{a}\in \Delta^{n}$ and $\bm{b}\in\mathbb{R}^n$ such that, with $a_R=1-\sum_{i=1}^n a_i$,
%\Timcomment{What is $\partial \Delta$?} \takwacomment{It seems replace function does not work, I changed it to $\Delta^{n+1}$.}
{\small \begin{equation}
\label{eq:IR:equiv:exp}
    C_R(a_R)<\sum_{i=1}^nb_i, \quad b_i<(1+\theta)\mu_i-C_i(a_i),\quad i=1,\dots,n. 
\end{equation}}%
%where 

{To show the existence of a triplet $(\bm{a},a_R,\bm{b})$} that satisfies \eqref{eq:IR:equiv:exp}, note that from Condition \eqref{Eq:IRCond}, there exists $(\bm{a},a_R)\in \Delta^{n}$ and $a_R=1-\sum_{i=1}^na_i$ such that $D:=(1+\theta)\mathbb{E}[S]-\sum_{i=1}^nC_i(a_i)-C_R(a_R)>0
$.
% \begin{equation*}
%     D:=(1+\theta)\mathbb{E}[S]-\sum_{i=1}^nC_i(a_i)-C_R(a_R)>0.
% \end{equation*}
Let $\epsilon:=D/2n>0$, and set $    b_i:=(1+\theta)\mu_i-C_i(a_i)-\epsilon<(1+\theta)\mu_i-C_i(a_i),
$
% \begin{equation*}
%     b_i:=(1+\theta)\mu_i-C_i(a_i)-\epsilon<(1+\theta)\mu_i-C_i(a_i),
% \end{equation*}
which ensures that each peer's welfare strictly improves. In addition, $    \sum_{i=1}^nb_i=(1+\theta)\mathbb{E}[S]-\sum_{i=1}^nC_i(a_i)-\frac{D}{2}=C_R(a_R)+\frac{D}{2}>C_R(a_R),
$
% \begin{equation*}
%     \sum_{i=1}^nb_i=(1+\theta)\mathbb{E}[S]-\sum_{i=1}^nC_i(a_i)-\frac{D}{2}=C_R(a_R)+\frac{D}{2}>C_R(a_R),
% \end{equation*}
and thus the P2P reinsurer's IR condition is also strictly satisfied. \qed

\section{Proof of Proposition \ref{Prop:NoFairExpSol}}\label{App:NoFairExpSol}
Assumption \ref{ass:mgf} ensures that $M_S$, $m_S$ and $m_S'$ are well-defined below. We first show that $m_S(\cdot)$ is strictly increasing in $(-\infty,\bar{t})$. Indeed, for any $t\in \mathbb{R}$, by the Cauchy-Schwarz inequality and the non-degeneracy of $S$,
% (see Assumption \ref{Ass:non_degenerate_S}), 
 {\small    \begin{equation}\label{Eq:dm_S}
        m_S'(t) = \frac{M_S(t)M_S''(t)-[M_S'(t)]^2}{M_S^2(t)} = \frac{\mathbb{E}[e^{tS}]\mathbb{E}[S^2e^{tS}] -\mathbb{E}^2[Se^{tS}]  }{M_S^2(t)} > 0. 
    \end{equation}}%

\noindent\textbf{Unique solution of the system \eqref{Eq:OptNFaRa}}: 
We begin by showing that the system \eqref{Eq:OptNFaRa} admits a solution {$(\bm{a}^*,a_R^*)\in(0,1)^n\times [0,1)$} if \eqref{eq:Cond:exp:no:fairness} holds. {To this end, define an axillary function $\bm{\varphi}:[0,\overline{\gamma}]\to \mathbb{R}$ by:
{\small\begin{equation*}
    \bm{\varphi}(y):=m_S(y)-(1+\tau)m_S\left(\gamma_R(1+\tau)\left(1-\frac{y}{\overline{\gamma}}
    \right)\right). 
\end{equation*}}
We shall show that $\bm{\varphi}$ admits a unique root $y^*\in(0,\overline{\gamma}]$; recall that $\overline{\gamma}$ was defined in Assumption \ref{ass:mgf}.

Using \eqref{Eq:dm_S}, $\bm{\varphi}'(y)=m_S'(y)+\gamma_R(1+\tau)^2m_S'(\gamma_R(1+\tau)(1-y/\overline{\gamma}))/\overline{\gamma}>0$, i.e., $\bm{\varphi}$ is strictly increasing, ensuring the uniqueness of root, if any. To show that $\bm{\varphi}$ admits a root in $[0,\overline{\gamma}]$, note that $\bm{\varphi}(0)=m_S(0)-(1+\tau)m_S(\gamma_R(1+\tau))<0$ and $\bm{\varphi}\left(\overline{\gamma}\right)=m_S\left(\overline{\gamma}\right)-(1+\tau)m_S(0)=m_S\left(\overline{\gamma}\right)-(1+\tau)\mathbb{E}[S]\geq0,$
% \begin{align*}
%     \bm{\varphi}(0)&=m_S(0)-(1+\tau)m_S(\gamma_R(1+\tau))<0, \     \bm{\varphi}\left(\overline{\gamma}\right)=m_S\left(\overline{\gamma}\right)-(1+\tau)m_S(0)=m_S\left(\overline{\gamma}\right)-(1+\tau)\mathbb{E}[S]\geq0,
% \end{align*}
where the first and second inequality follow from \eqref{Eq:dm_S} and \eqref{eq:Cond:exp:no:fairness}, respectively. 
%and the second inequality from \eqref{eq:Cond:exp:no:fairness}. 
Since $\bm{\varphi}$ is continuous, by the intermediate value theorem, $\bm{\varphi}$ admits a root $y^*\in(0,\overline{\gamma}]$. A solution of the system \eqref{Eq:OptNFaRa} is then given by 
% $a_R^*=1-y^*\sum_{i=1}^n\gamma_i^{-1}$ 
$a_R^*=1-y^*/\overline{\gamma}$},
and $a_i^*=y^*/\gamma_i$, $i\in\mathcal{N}$ with $(\bm{a}^*,a_R^*)\in (0,1)^n\times [0,1)$; indeed, $y^*\in\left(0,\overline{\gamma}\right]$ leads to $a_R^*\in\left[0,1\right)$; since $y^*>0$, $a_i^*>0$, $i\in\mathcal{N}$, and, since $\overline{\gamma}<\gamma_i$, $i\in\mathcal{N}$ by the definition of $\overline{\gamma}$, we have $a_i^*<1$ for all $i\in\mathcal{N}$. Indeed, any solution of \eqref{Eq:OptNFaRa} is characterized by $\bm{\varphi}$. If $(\bm{a}^*,a_R^*)$ is a solution to \eqref{Eq:OptNFaRa}, then using \eqref{Eq:dm_S} leads to $\gamma_ia_i^*=\gamma_ja_j^*$, $a_i^*$ must satisfy $\bm{\varphi}(\gamma_ia_i^*)=0$. Hence, $a_i^*$ is uniquely given by $a_i^*=y^*/\gamma_i$.

\noindent\textbf{Optimal contract $(\bm{a}^*,a_R^*,\bm{b}^*,P)$}:
% \KTHNcomment{For notational convenience, we replace $P$ and $a_R$ by $\sum_{i=1}^nb_i$ and $1-\sum_{i=1}^na_i$ in the objective function in Problem \eqref{Prob:auxNF}, respectively. Consider the unconstrained optimization problem $\max_{(\bm{a},\bm{b})\in \mathbb{R}^n\times \mathbb{R}^n } \widetilde{\mathcal{J}}(\bm{a},\bm{b})$, where 
%     \begin{equation*}
%         \widetilde{\mathcal{J}}(\bm{a},\bm{b}) := \delta_R\kn\left(1-\exp\left(\gamma_R\left(C_R(1-\sum_{i=1}^na_i \right) \right) \right)
%     \end{equation*}
% }
The objective in \eqref{Prob:auxNF} is concave on 
%the strictly feasible set 
$\widetilde{\mathcal{K}}$. 
% As each normalized utility surplus is a positive constant minus the expectation of an exponential of an affine function, it is concave in the decision variables.
Indeed, each normalized utility surplus is concave in the decision variables as it is a positive constant minus the expectation of an exponential of an affine function. 
Since $\log(\cdot)$ is increasing and concave, the logarithm of each positive surplus is concave. The feasible constraints are affine. Hence, the KKT conditions are sufficient for global optimality on $\widetilde{\mathcal{K}}$. As contracts in $\widetilde{\mathcal{F}}\backslash\widetilde{\mathcal{K}}$ yield a zero Nash product while $\widetilde{\mathcal{K}}\neq\emptyset$, a maximizer of the Nash product must lie in $\widetilde{\mathcal{K}}$.

Consider the Lagrangian {that enforces the market-clearing condition, $\sum_{i=1}^nb_i=P$, and $(\bm{a},a_R)\in[0,1]^{n+1}$}:
{\small \begin{align}\label{Eq:NFLag}
    \mathcal{L}(\bm{a},a_R,\bm{b},P)
    :=&\ \delta_R\ln\left(1-e^{\gamma_R(C_R(a_R)-P)}\right)+\sum_{i=1}^n\delta_i\ln\left(1-e^{\gamma_i(C_i(a_i)+b_i-(1+\theta)\mu_i)}\right) \nonumber \\
    &+\lambda_1\left(1-a_R-\sum_{i=1}^na_i\right)+\lambda_2\left(P-\sum_{i=1}^nb_i\right) {+ \sum_{i=1}^n\nu_i a_i + \nu_R a_R} ,
\end{align}}%
where $\lambda_1, \lambda_2 \in\mathbb{R}$ {and $\nu_R,\nu_i\geq 0$, $i\in\mathcal{N}$,} are Lagrangian multipliers. At this stage, we do not incorporate the constraint $P\geq 0$; rather, we show that the maximizer of \eqref{Eq:NFLag} automatically satisfies this constraint under Condition \eqref{eq:Cond:exp:no:fairness}. In addition, we show that $a_R^*=1$ is impossible under \eqref{Eq:IRCond}. 
%\takwacomment{and $a_R^*=1$ leads to a contraction and thus cannot be the case}.
%\takwacomment{and they are not active}.  }

Let 
{\small \begin{equation*}
    \Lambda_R:=   \frac{\delta_R\gamma_Re^{\gamma_R(C_R(a_R)-P)}}{1-e^{\gamma_R(C_R(a_R)-P)}},\quad {\Lambda_i}:= \frac{\delta_i\gamma_ie^{{\gamma_i}(b_i+C_i(a_i)-(1+\theta)\mu_i)}}{1-e^{\gamma_i(b_i+C_i(a_i)-(1+\theta)\mu_i)}},\quad i\in\mathcal{N}.
\end{equation*}}%
By noting  $C_R'(a_R)=(1+\tau)m_S(\gamma_R(1+\tau)a_R)$ and $C_i'(a_i)=m_S(\gamma_ia_i),\,i\in\mathcal{N}$, 
% \begin{equation*}
% C_R'(a_R)=(1+\tau)m_S(\gamma_R(1+\tau)a_R),\quad C_i'(a_i)=m_S(\gamma_ia_i),\quad i=1,\dots,n.
% \end{equation*}
the FOCs of \eqref{Eq:NFLag} read as
{\small \begin{align*}
    &\frac{\partial\mathcal{L}}{\partial a_R}= {-}\Lambda_RC_R'(a_R)-\lambda_1+ {\nu_R} =0,\quad \frac{\partial\mathcal{L}}{\partial a_i}={-}\Lambda_iC_i'(a_i)-\lambda_1 + {\nu_i} =0,\quad i\in\mathcal{N},\\
    &\frac{\partial\mathcal{L}}{\partial P}=\Lambda_R+\lambda_2=0,\quad \frac{\partial\mathcal{L}}{\partial {b}_i}={-}\Lambda_i-\lambda_2=0,\quad i\in\mathcal{N}.
\end{align*}}%
These imply $\Lambda_R C'_R(a_R) - {\nu_R} = \Lambda_iC'_i(a_i) -{\nu_i} $ and $\Lambda_R= \Lambda_i$ for all $i\in \mathcal{N}$, {leading to the system 
   {\small  \begin{equation}
    \label{eq:FOC:lagrange:NoFairness}
        \Lambda_R\left[(1+\tau)m_S(\gamma_R(1+\tau)a_R) - m_S(\gamma_ia_i) \right] = \nu_R-\nu_i, \ i\in \mathcal{N}.  
    \end{equation}}%
Since \eqref{Eq:OptNFaRa} admits a unique solution $(\bm{a}^*,a_R^*)\in {(0,1)^n\times[0,1)} $ by  \eqref{eq:Cond:exp:no:fairness}, we conclude that $(\bm{a}^*,a_R^*)$ is optimal with the Lagrange multipliers satisfying $\nu_R^*=\nu_i^*$ and $\nu_i^*a_i^*=\nu_R^*a_R^*=0$ for $i\in \mathcal{N}$. 
} {In particular}, when $\tau=0$, \eqref{eq:Cond:exp:no:fairness} holds since $m_S\left(\overline{\gamma}\right)\geq m_S(0)=\mathbb{E}[S],
$
% $m_S\left(1/{\sum_{i=1}^n\gamma_i^{-1}}\right)\geq m_S(0)=\mathbb{E}[S],
% $
% \[
% m_S\left(\frac{1}{\sum_{i=1}^n\gamma_i^{-1}}\right)\geq m_S(0)=\mathbb{E}[S],
% \]
and \eqref{Eq:tau0aRa} follows by solving \eqref{Eq:OptNFaRa} in this case.

% \KTHNcomment{[Indeed, there can be other characterization if $a_R^*=0$ that may not need to solve \eqref{Eq:OptNFaRa}, may need to check on this like the case with price fairness. The condition will likely be: $(1+\tau)m_S(0) \geq m_S(\gamma_ia_i^*)$ and there exists $\lambda >0$ such that $m_S(\gamma_ia_i^*)=\lambda$ for all $i$; please check. ]}
% \takwacomment{In this case, $a_i^*=(1/\gamma_i)/\sum_j\gamma_j^{-1}$ and we must have $(1+\tau)m_S(0) = m_S(\gamma_ia_i^*)$ due to Condition \eqref{eq:Cond:exp:no:fairness}, then we can integrate this case with the system  \eqref{Eq:OptNFaRa}.}\KTHNcomment{So equality in \eqref{Eq:OptNFaRa} needs to hold for that case, which is consistent with the case $G(0)=0$.  }\takwacomment{You mean \eqref{eq:Cond:exp:no:fairness}? I added some lines to remark this case; please uncomment the added lines if you see fit.}
% \begin{equation}\label{Eq:LambdaRi}
%     {\Lambda_R C'_R(a_R) = \Lambda_iC'_i(a_i) \quad\text{and}\quad }  \Lambda_R= \Lambda_i\quad \text{for all }i\in\mathcal{N},
% \end{equation}
% $\Lambda_R=\Lambda_i$ for all $i=1,\dots,n$, 
% leading to the system \eqref{Eq:OptNFaRa}, which, from the above proof, admits a unique solution $(\bm{a}^*,a_R^*)\in {(0,1)^n\times[0,1)} $ by Condition \eqref{eq:Cond:exp:no:fairness}.

We now establish \eqref{Eq:Optbi}. Set $q(P):=e^{-\gamma_R(C_R(a_R^*)-P)}$, the equations $\Lambda_i=\Lambda_R$, $i\in\mathcal{N}$, can be simplified as 
%it follows that \KTHNcomment{Can show more steps for the below? How to ensure there is a unique $q^*\in(0,1)$? } \takwacomment{I recalculate all of $b_i$ and it turns out they are wrong from the definition of $\Lambda_R$ $\Lambda_i$ but now they are fixed.}
{\small \begin{align*}
  %  &\frac{\chi_i}{\takwacomment{q(P)}-1}=\frac{1}{e^{-\gamma_i(C_i(a_i^*)+b_i-(1+\theta)\mu_i)}-1},\quad i=1,\dots,n,\\
    %\Rightarrow&
    e^{-\gamma_i(C_i(a_i^*)+b_i-(1+\theta)\mu_i)}=1+\frac{q(P)-1}{\chi_i}, \quad i\in\mathcal{N},
\end{align*}}%
which is equivalent to \eqref{Eq:Optbi} with $q^*=q(P)$. Summing \eqref{Eq:Optbi} over $i\in \mathcal{N}$ yields 
   {\small \begin{equation*}
         \sum_{i=1}^n b_i^*(P) =    (1+\theta)\mathbb{E}[S]-\sum_{i=1}^n\left(C_i(a_i^*)+\frac{1}{\gamma_i}\ln\left(1+\frac{q(P)-1}{\chi_i}\right)\right). 
    \end{equation*}}%

{As $P =C_R(a_R^*)+ \frac{\ln q(P)}{\gamma_R}$, we can solve $(\bm{b}^*,P^*)$ with $\sum_{i=1}^n b_i^*=P^*$ if and only if $\widetilde{G}:[1,\infty) \to \mathbb{R}$,
%defined by
  {\small   \begin{equation*}
        \widetilde{G}(x):= (1+\theta)\mathbb{E}[S]-\sum_{i=1}^n\left(C_i(a_i^*)+\frac{1}{\gamma_i}\ln\left(1+\frac{{x}-1}{\chi_i}\right)\right) -  C_R(a_R^*) - \frac{\ln x}{\gamma_R}
    \end{equation*}}%
admits a root. Since $\widetilde{G}$ is continuous, strictly decreasing in $x\in(1,+\infty)$}, with $\widetilde{G}(1)>0$ by Condition \eqref{Eq:IRCond}, and $\widetilde{G}(x)\to -\infty$ as $x\to+\infty$,
%and has different signs when $\takwacomment{x}=1$ and $\takwacomment{x}\rightarrow+\infty$ due to Condition \eqref{Eq:IRCond},
the claim follows and $\bm{b}^*$ is given by \eqref{Eq:Optbi}. {Note that} the constraint $P^*\geq0$ {is unbinding. Indeed,} 
%, $a_R\leq1$ and $a_i\geq0$, $i\in\mathcal{N}$, are not active.
by \eqref{Eq:Optbi}--\eqref{Eq:qstar} and the fact that $q^* > 1$, we have $P^*=\sum_{i=1}^n b_i^* = C_R(a_R^*) + \frac{\ln q^*}{\gamma_R}>0$.
% \takwacomment{In addition, the system \eqref{eq:Cond:exp:no:fairness} explicitly ruled out the following cases:
%     \begin{enumerate}
%         \item $a_i=0$ and $a_j>0$ for $i\neq j$ and $i,j\in\mathcal{N}$. This is because $m_S(\gamma_ia_i) = m_S(\gamma_ja_j) \rightarrow \mathbb{E}[S] = m_S(\gamma_j a_j) > m_S(0) = \mathbb{E}[S]$;
%         \item $a_R=1$: this is because $a_i=0$ for all $i\in\mathcal{N}$, and $\mathbb{E}[S] = m_S(0) = (1+\tau)m_S(\gamma_R(1+\tau)a_R) > (1+\tau)m_S(0)=(1+\tau)\mathbb{E}[S]$. 
%     \end{enumerate}
% So the only possibility is $a_i>0$ for all $i\in\mathcal{N}$ and $a_R<1$.}
% \takwacomment{In addition, 
% similar to Lemma \ref{Lem:POSai}, nonzero $a_i$ can be shown by considering the perturbation: Let $(\bm{a}^*,\bm{b}^*)$ with $a_j^*=0$ and $a_i^*>0$ for all $j\in\mathcal{N}\backslash\{j\}$. For $j\neq k$ and $j,k\in\mathcal{N}$, we consider, for sufficiently small $\epsilon$,}
% \begin{equation*}
% a_j^{\epsilon}=\epsilon>0,\quad b_j^{\epsilon}=b_j^*-\epsilon\mathbb{E}[S],\quad a_k^{\epsilon}=a_k^*-\epsilon>0,\quad b_k^{\epsilon}=b_k^*+\epsilon\mathbb{E}[S].
% \end{equation*}
%where $j\neq k$.

{Next, we verify that the IR constraints for all agents are all strictly satisfied, and it suffices to show that $(\bm{a}^*,\bm{b}^*)$ satisfies \eqref{eq:IR:equiv:exp}. Note that the IR constraint for the P2P reinsurer holds since $P^* = C_R(a_R^*) + \frac{\ln q^*}{\gamma_R}> C_R(a_R^*)$. By \eqref{Eq:Optbi} and the fact that $q^*>1$, we have $b_i^* < (1+\theta)\mu_i-C_i(a_i^*)$ for $i\in \mathcal{N}$, and \eqref{eq:IR:equiv:exp}  follows.   }

Then, we show that $a_R^*\neq 1$ under Condition \eqref{Eq:IRCond}. Assume the contrary that $(\bm{a}^*,a_R^*)=(\bm{0},1)$. By the complementary slackness condition, we must have $\nu_R^*=0$. Hence, \eqref{eq:FOC:lagrange:NoFairness} is reduced to $\Lambda_R[(1+\tau)m_S(\gamma_R(1+\tau)) - m_S(0)] = -\nu_i^*\leq 0$.  However, $(1+\tau)m_S(\gamma_R(1+\tau))>m_S(0)$, and by the definition of $\Lambda_R$ and $P^*=C_R(a_R^*)+\frac{\ln q^*}{\gamma_R}$, we have $\Lambda_R = \frac{\delta_R\gamma_R  }{q^*-1} >0$ since $q^*>1$. This leads to a contradiction and thus $a_R^*\neq 1$.

We remark that for $a_R^*=0$, we must have $(1+\tau)m_S(0) \geq m_S(\gamma_ia_i^*)$, and there exists $\iota >0$ such that $m_S(\gamma_ia_i^*)=\iota$ for all $i\in\mathcal{N}$. 
In this case, 
% \takwacomment{$a_i^*=\gamma_i^{-1}/\overline{\gamma}$} 
$a_i^*=\overline{\gamma}/\gamma_i$
% $a_i^*=(1/\gamma_i)/\sum_{j\in\mathcal{N}}\gamma_j^{-1}$ 
and we must have $(1+\tau)m_S(0) = m_S(\gamma_ia_i^*)$ due to Condition \eqref{eq:Cond:exp:no:fairness}, which is consistent with $\bm{\varphi}(\overline{\gamma})=0$ and the system \eqref{Eq:OptNFaRa}. \qed
%\KTHNcomment{This requires showing $\Lambda_R \geq 0$ first; maybe we can discuss $P$ and $\bm{b}^*$ first before we talk about the system for $\bm{a}$?} \takwacomment{OK, I put this as a final remark.}

\section{Proof of Proposition \ref{Prop:OptesSR}}\label{App:OptesSR}
Theorem \ref{Thm:EUsol} establishes the unique existence of $(\bm{a}^*,P^*)$.
Using \eqref{Eq:PricingFairness} and $\mathbb{E}[S]=n\mu$, Problem \eqref{Prob:hetero} reads:
{\small \begin{align}\label{Prob:auxhomo}
    \begin{split}
        &\max_{a_R\in[0,1],\,\bm{a}\in[0,1]^n,\,P\geq 0} {\delta_R \ln\left(\mathbb{E}[U_R(w_R+P-(1+\tau)a_RS)] -d_R  \right)   } \\
        &\quad \quad +\delta\sum_{i=1}^n\ln\left(\mathbb{E}\left[U\left(w-a_i(S- n\mu)-\frac{P}{n}-(1-a_R)\mu)\right]-d\right)\right) \quad 
        s.t.\quad  a_R+\sum_{i=1}^na_i=1.
    \end{split}
\end{align}}%
For fixed $(a_R,P)$, define $ g(a):=
    \ln\left(
    \mathbb{E}\left[
    U\left(w-a(S-n\mu)-\frac{P}{n}-(1-a_R)\mu\right)
    \right]-d
    \right).$
% {\small \[
%     g(a):=
%     \ln\left(
%     \mathbb{E}\left[
%     U\left(w-a(S-n\mu)-\frac{P}{n}-(1-a_R)\mu\right)
%     \right]-d
%     \right).
% \]}%
On the strictly feasible region, the term inside the logarithm is positive.
Moreover, it is concave in $a$ because $U$ is concave and the argument is
affine in $a$. Since $\ln(\cdot)$ is increasing and concave, $g$ is concave.
By applying Jensen's inequality,
{\small \begin{align*}
    &\ln\left(\mathbb{E}\left[U\left(w-\frac{(1-a_R)(S-n\mu)+P}{n}-(1-a_R)\mu\right)\right]-d\right)
    =\ln\left(\mathbb{E}\left[U\left(w-\frac{(S-n\mu){\sum_{i=1}^n} a_i + P}{n} -(1-a_R)\mu\right)\right]-d\right)\\
    &\geq  \frac{1}{n}\sum_{i=1}^n\ln\left(\mathbb{E}\left[U\left(w-a_i(S-n\mu)-\frac{P}{n}-(1-a_R)\mu\right)\right]-d\right). 
\end{align*}}%
Therefore, for each fixed $(a_R,P)$, the peer part of the objective is maximized by the equal-splitting choice $a_i=(1-a_R)/n$, $i\in \mathcal{N}$.
%  $a_1=a_2=\dots=a_n=(1-a_R)/n
% $.
% \begin{equation*}
% a_1=a_2=\dots=a_n=\frac{1-a_R}{n}.
% \end{equation*}
Therefore, it suffices to consider the equal-splitting rule \eqref{Eq:esSharing} under Assumption \ref{Ass:homo}. Since $\mu_i=\mu$ and $a_i^*=(1-a_R^*)/n$ for all $i\in\mathcal{N}$, the
price-fairness condition gives $b_i^*= (1+\rho(P^*,a_R^*))\mu-a_i^*\mathbb E[S]$, $i\in\mathcal{N}$.
% \[
%     b_i^*= (1+\rho(P^*,a_R^*))\mu-a_i^*\mathbb E[S],
%     \qquad i=1,\dots,n.
% \]
The right-hand side is independent of $i$, and as
$\sum_{i=1}^n b_i^*=P^*$, we obtain $b_i^*=P^*/n$ for all $i\in\mathcal{N}$. \qed
%The statement on $\bm{b}$ follows from {the equal-splitting rule of $\bm{a}$ and the price-fairness condition \eqref{Eq:PricingFairness}.

%direct calculation.

\section{Proof of Proposition \ref{Prop:ExpHomo}}\label{App:ExpHomo}

The objective of Problem \eqref{Prob:Homo1} is equivalent to $ \mathcal{J}(a_R,P):= (1-n\delta)\ln\left(1-e^{-\gamma_RP}M_S(\gamma_R(1+\tau)a_R)\right)+n\delta\ln\left(1-e^{\gamma\left(\frac{P}{n}-(1+\theta)\mu\right)}M_S\left(\frac{\gamma(1-a_R)}{n}\right)\right).$
% {\small \begin{equation*}
%      \mathcal{J}(a_R,P):= (1-n\delta)\ln\left(1-e^{-\gamma_RP}M_S(\gamma_R(1+\tau)a_R)\right)+n\delta\ln\left(1-e^{\gamma\left(\frac{P}{n}-(1+\theta)\mu\right)}M_S\left(\frac{\gamma(1-a_R)}{n}\right)\right).
% \end{equation*}}%
The FOC of $\mathcal{J}$ with respect to $a_R$ and $P$ leads to \eqref{Eq:ExpaREquation} and \eqref{Eq:HomoPFOC}. Hence, the proof is complete once we show the unique existence of a solution $(a_R^*,P^*)$ to \eqref{Eq:ExpaREquation}--\eqref{Eq:HomoPFOC} that satisfies $a_R^*\in[0,1)$ and $P^*\in( P_{\min}(a_R^*),P_{\max}(a_R^*))$.

Note that \eqref{Eq:ExpaREquation} is well-defined by the first condition
in the statement and is equivalent to the equation $\overline{H}'(a_R)=0$.
By 
% Assumption \ref{Ass:non_degenerate_S} and 
the positivity of
$m_S'$ on its domain; see \eqref{Eq:dm_S}, 
$\overline{H}'(a_R)$ is strictly decreasing. Also, $\overline{H}'(0)
= m_S(\gamma/n)-(1+\tau)n\mu \geq 0$ and $\overline{H}'(1)
= n\mu-(1+\tau)m_S(\gamma_R(1+\tau))<0.$
% \[
% \overline{H}'(0)
% = m_S(\gamma/n)-(1+\tau)n\mu \geq 0,
% \qquad
% \overline{H}'(1)
% = n\mu-(1+\tau)m_S(\gamma_R(1+\tau))<0.
% \]
Thus, by continuity and strict monotonicity, there exists a unique
$a_R^*\in[0,1)$ solving \eqref{Eq:ExpaREquation}, which is  the
unique maximizer of $\overline{H}$. Finally, define
\begin{equation*}
    Q(P)
:=\frac{(1-n\delta)\gamma_RM_S(\gamma_R(1+\tau)a_R^*)}
{e^{\gamma_RP}-M_S(\gamma_R(1+\tau)a_R^*)}  
-\frac{\delta\gamma M_S\left(\frac{\gamma(1-a_R^*)}{n}\right)}
{e^{\gamma\left((1+\theta)\mu-\frac{P}{n}\right)}
-M_S\left(\frac{\gamma(1-a_R^*)}{n}\right)} .
\end{equation*}
% \[
% \begin{aligned}
% Q(P)
% :={}&
% \frac{(1-n\delta)\gamma_RM_S(\gamma_R(1+\tau)a_R^*)}
% {e^{\gamma_RP}-M_S(\gamma_R(1+\tau)a_R^*)}  
% -\frac{\delta\gamma M_S\left(\frac{\gamma(1-a_R^*)}{n}\right)}
% {e^{\gamma\left((1+\theta)\mu-\frac{P}{n}\right)}
% -M_S\left(\frac{\gamma(1-a_R^*)}{n}\right)} .
% \end{aligned}
% \]
Then $Q(P)\to+\infty$ as
$P\downarrow P_{\min}(a_R^*)$, while $Q(P)\to-\infty$ as
$P\uparrow P_{\max}(a_R^*)$. Since $Q$ is continuous and strictly
decreasing on
$(P_{\min}(a_R^*),P_{\max}(a_R^*))$, the proof is complete. \qed

% Finally, define 
% \begin{equation*}
%     C(a_R):=m_S\left(\frac{\gamma(1-a_R)}{n}\right)-(1+\tau)m_S(\gamma_R(1+\tau)a_R),
% \end{equation*}
% we have
% \begin{equation*}
%     C(0)=m_S\left(\frac{\gamma}{n}\right)-(1+\tau)\mathbb{E}[S]>0\quad\text{and}\quad C(1)=\mathbb{E}[S]-(1+\tau)m_S(\gamma_R(1+\tau))<0,
% \end{equation*}
% where the second inequality follows from the fact that $\mathbb{E}[S]=m_S(0)$ and $m_S$ is strictly increasing. 
% So, $a_R^*\in(0,1)$ follows from the intermediate value theorem. 

\section{Proof of Lemma \ref{Lem:BoundPsiM}}\label{App:BoundPsiM}
The stated conditions ensure that $\hat{u}_j^{\mathcal{N}}$,
$j\in\mathcal{I}$, are well-defined, enabling the subsequent analysis.
As $U_R$ is strictly concave, applying Jensen's inequality on \eqref{Eq:PminM} yields {\small$$\hat{u}_R^{\mathcal{N}}\leq U_R\left(w_R+\hat{P}_{\min}^{\mathcal{M}}\left(\bm{a}^{\mathcal{M}}\right)-(1+\tau)\left(1-\sum_{j\in \mathcal{M}}a_j^{\mathcal{M}}\right)\mathbb{E}[S_{\mathcal{M}}]\right).$$}
% {\small \begin{equation*}
%     \hat{u}_R^{\mathcal{N}}\leq U_R\left(w_R+\hat{P}_{\min}^{\mathcal{M}}\left(\bm{a}^{\mathcal{M}}\right)-(1+\tau)\left(1-\sum_{j\in \mathcal{M}}a_j^{\mathcal{M}}\right)\mathbb{E}[S_{\mathcal{M}}]\right).
% \end{equation*}}%
Since $U_R$ is strictly increasing, the above inequality and \eqref{Eq:Pi} imply $\hat{P}_{\min}^{\mathcal{M}}(\bm{a}^{\mathcal{M}})\geq \Pi+(1+\tau)(1-\sum_{j\in \mathcal{M}}a_j^{\mathcal{M}})\mathbb{E}[S_{\mathcal{M}}].$
% {\small \begin{equation}\label{Eq:PminMINE}
% \hat{P}_{\min}^{\mathcal{M}}(\bm{a}^{\mathcal{M}})\geq \Pi+(1+\tau)\left(1-\sum_{j\in \mathcal{M}}a_j^{\mathcal{M}}\right)\mathbb{E}[S_{\mathcal{M}}].
% \end{equation}}%

Using the above bound, the wealth of Peer $i$ in subgroup $\mathcal{M}$ at the premium level $\hat{P}_{\min}^{\mathcal{M}}(\bm{a}^{\mathcal{M}})$ satisfies
{\small \begin{align*}
W_i^{\mathcal{M}}(\bm{a}^{\mathcal{M}},\hat{P}_{\min}^{\mathcal{M}}(\bm{a}^{\mathcal{M}}))=&\ w_i-a_i^{\mathcal{M}}(S_{\mathcal{M}}-\mathbb{E}[S_{\mathcal{M}}])-\frac{\mu_i}{\mathbb{E}[S_{\mathcal{M}}]}\hat{P}_{\min}^{\mathcal{M}}(\bm{a}^{\mathcal{M}})-\sum_{j\in \mathcal{M}}a_j^{\mathcal{M}}\mu_i\\
% \end{equation*}}%
% Using \eqref{Eq:PminMINE}, we have
% {\small \begin{align*}
%W_i^{\mathcal{M}}(\bm{a}^{\mathcal{M}},\hat{P}_{\min}^{\mathcal{M}}(\bm{a}^{\mathcal{M}}))
\leq \ &w_i-a_i^{\mathcal{M}}(S_{\mathcal{M}}-\mathbb{E}[S_{\mathcal{M}}])-\frac{\mu_i}{\mathbb{E}[S_{\mathcal{M}}]}\Pi-\left[(1+\tau)\left(1-\sum_{j\in \mathcal{M}}a_j^{\mathcal{M}}\right)+\sum_{j\in \mathcal{M}}a_j^{\mathcal{M}}\right]\mu_i\\
\leq&\ w_i-a_i^{\mathcal{M}}(S_{\mathcal{M}}-\mathbb{E}[S_{\mathcal{M}}])-\mu_i-\frac{\mu_i}{{\mathbb{E}[S_\mathcal{M}]}}\Pi,
%\leq&\ w_i-a_i^{\mathcal{M}}(S_{\mathcal{M}}-\mathbb{E}[S_{\mathcal{M}}])-\mu_i-\frac{\mu_i}{\mathbb{E}[S]-\min_{j\in \mathcal{N}\backslash\{i\}}\mu_j}\Pi,
\end{align*}}%
where the last inequality follows from $(1+\tau)(1-\sum_{j\in \mathcal{M}}a_j^{\mathcal{M}})+\sum_{j\in \mathcal{M}}a_j^{\mathcal{M}}\geq1$.
%, and \KTHNcomment{(why don't we take a potentially tighter bound) $\mathbb{E}[S_{\mathcal{M}}] = \mathbb{E}[S] - \sum_{j\not\in\mathcal{M}}\mu_j \leq \mathbb{E}[S] - |\mathcal{N}\setminus \mathcal{M}|\min_{j\not\in \mathcal{M} }\mu_j$?}
%$\mathbb{E}[S_{\mathcal{M}}]\leq\max_{i\in \mathcal{M},\mathcal{M}\subsetneq\mathcal{N}}\mathbb{E}[S]=\mathbb{E}[S]-\min_{j\in \mathcal{N}\backslash\{i\}}\mu_j$. 
%\KTHNcomment{Is this bound correct? For example, consider $\mathcal{M}=\{i\}$, then $\mathbb{E}[S_\mathcal{M}] = \mu_i$ }
As $U_i$ is strictly increasing,
%we obtain
{\small \begin{equation*}
\hat{\Psi}_i^{\mathcal{M}}(\bm{a}^{\mathcal{M}})\leq\mathbb{E}\left[U_i\left(w_i-a_i^{\mathcal{M}}(S_{\mathcal{M}}-\mathbb{E}[S_{\mathcal{M}}])-\mu_i-\frac{\mu_i}{{\mathbb{E}[S_\mathcal{M}]}}\Pi\right)\right]\leq   U_i\left(w_i-\mu_i-\frac{\mu_i}{{\mathbb{E}[S_\mathcal{M}]}}\Pi\right),
\end{equation*}}%
where we have used Jensen's inequality.  
%and noting that $\mathbb{E}[S_{\mathcal{M}}-\mathbb{E}[S_{\mathcal{M}}]]=0$
%we have $\hat{\Psi}_i^{\mathcal{M}}(\bm{a}^{\mathcal{M}})\leq U_i\left(w_i-\mu_i-\frac{\mu_i}{{\mathbb{E}[S_\mathcal{M}]}}\Pi\right).$
% {\small \begin{equation*}
% \hat{\Psi}_i^{\mathcal{M}}(\bm{a}^{\mathcal{M}})\leq U_i\left(w_i-\mu_i-\frac{\mu_i}{{\mathbb{E}[S_\mathcal{M}]}}\Pi\right).
% \end{equation*}}%
% \begin{equation*}
% \hat{\Psi}_i^{\mathcal{M}}(\bm{a}^{\mathcal{M}})\leq\mathbb{E}\left[U_i\left(w_i-a_i^{\mathcal{M}}(S_{\mathcal{M}}-\mathbb{E}[S_{\mathcal{M}}])-\mu_i-\frac{\mu_i}{\mathbb{E}[S]-\min_{j\in \mathcal{N}\backslash\{i\}}\mu_j}\Pi\right)\right].
% \end{equation*}
% Adopting Jensen's inequality again and note that $\mathbb{E}[S_{\mathcal{M}}-\mathbb{E}[S_{\mathcal{M}}]]=0$, we have
% \begin{equation*}
% \hat{\Psi}_i^{\mathcal{M}}(\bm{a}^{\mathcal{M}})\leq \mathbb{E}\left[U_i\left(w_i-\mu_i-\frac{\mu_i}{\mathbb{E}[S]-\min_{j\in \mathcal{N}\backslash\{i\}}\mu_j}\Pi\right)\right].
% \end{equation*}
% where the strict sign follows from $U_i$ is strictly concave and $S_{\mathcal{M}}$ is non-degenerate (see Assumption \ref{Ass:SM}). 
% {\color{red}TB: Do we have that $a_i^Mneq 0$?} \takwacomment{As $\bm{a}^{\mathcal{M}}\in\Delta^{|\mathcal{M}|}$, it does not preclude $a_i^{\mathcal{M}}=0$. I see point that that does not validate the strict inequality. Now, I replace the strict sign and move it to Condition \ref{Eq:varthetaUpperBound}.}
Taking the supremum over $\bm{a}^{\mathcal{M}} \in \Delta^{|\mathcal{M}|}$ yields \eqref{Eq:BoundPsiM}. \qed

\section{Proof of Theorem \ref{Thm:core}}\label{App:core}
{For any $i\in \mathcal{N}$,  $\max_{\mathcal{M}\subsetneq \mathcal{N}, i\in\mathcal{M} }\mathbb{E}[S_\mathcal{M}] \leq  \mathbb{E}[S]- \min_{j \in \mathcal{N}\backslash \{i\} } \mu_j$.} 
%\takwacomment{The left hand side is to maximize over all subgroups but the right hand side is to minimize within a particular subgroup $M$?}
%\takwacomment{This seems strange; $\mu_1=1,\mu_2=2,\mu_3=3$, for peer $3$, the largest should be 5 instead of 3 from your formula.} 
By Lemma \ref{Lem:BoundPsiM},  \eqref{Eq:vartheta} and \eqref{Eq:varthetaUpperBound}, we have, for any   $\mathcal{M}\subsetneq\mathcal{N}$ and any peer $i\in \mathcal{M}$, $    \sup_{\bm{a}^{\mathcal{M}}\in\Delta^{|\mathcal{M}|}}\hat{\Psi}_i^{\mathcal{M}}(\bm{a}^{\mathcal{M}}) < U_i\left( w_i - (1+\vartheta_i)\mu_i\right) = \hat{u}_i^{\mathcal{N}}$ 
% \begin{equation*}
%     \sup_{\bm{a}^{\mathcal{M}}\in\Delta^{|\mathcal{M}|}}\hat{\Psi}_i^{\mathcal{M}}(\bm{a}^{\mathcal{M}}) < U_i\left( w_i - (1+\vartheta_i)\mu_i\right) = \hat{u}_i^{\mathcal{N}}
% \end{equation*}
. If there is no feasible contract $(\bm{a}^{\mathcal{M}},P^{\mathcal{M}})$ that weakly improves upon the P2P reinsurer's welfare in $\mathcal{M}$ (i.e., $\bm{a}^{\mathcal{M}} \in \Delta^{|\mathcal{M}|}$ and $P^{\mathcal{M}}\geq \hat{P}_{\min}^{\mathcal{M}}(\bm{a}^{\mathcal{M}})$ so that $\mathbb{E}[U_R(W^{\mathcal{M}}_R(\bm{a}^{\mathcal{M}},P^{\mathcal{M}}))] \geq \hat{u}^{\mathcal{N}}_R$), then $\mathcal{M}$ is not incentivized to be formed.
%then peers in $\mathcal{M}$ have no incentive to form a subgroup.

Assume that there exists a feasible contract $(\bm{a}^{\mathcal{M}},P^{\mathcal{M}})$ {with $\bm{a}^{\mathcal{M}}\in\Delta^{|\mathcal{M}|}$ and $P^{\mathcal{M}}\geq \hat{P}_{\min}^{\mathcal{M}}(\bm{a}^{\mathcal{M}})$,}
so that $ \mathbb{E}[U_i(W_i^{\mathcal{M}}(\bm{a}^{\mathcal{M}},P^{\mathcal{M}}))]
    \leq \hat{\Psi}_i^{\mathcal{M}}(\bm{a}^{\mathcal{M}})\leq \sup_{\bm{a}^{\mathcal{M}}\in\Delta^{|\mathcal{M}|}}\hat{\Psi}_i^{\mathcal{M}}(\bm{a}^{\mathcal{M}})<\hat{u}_i^{\mathcal{N}},$
% {\small \begin{equation*}
%     \mathbb{E}[U_i(W_i^{\mathcal{M}}(\bm{a}^{\mathcal{M}},P^{\mathcal{M}}))]
%     \leq \hat{\Psi}_i^{\mathcal{M}}(\bm{a}^{\mathcal{M}})\leq \sup_{\bm{a}^{\mathcal{M}}\in\Delta^{|\mathcal{M}|}}\hat{\Psi}_i^{\mathcal{M}}(\bm{a}^{\mathcal{M}})<\hat{u}_i^{\mathcal{N}}, 
% \end{equation*}}%
i.e., all peers in the subgroup $\mathcal{M}$ are strictly worse off than the welfare achieved under the grand coalition $\mathcal{N}$. This proves that no viable subgroup formation exists under Condition \eqref{Eq:varthetaUpperBound}.  The last statement follows from \eqref{Eq:smallvartheta}  {and the fact that $\mathbb{E}[S]- \min_{j \in\mathcal{N}\backslash \{i\} } \mu_j \leq \mathbb{E}[S]-\min_{j\in\mathcal{N}}\mu_j$} for all $i\in\mathcal{N}$. 
%noting that the right-hand side of \eqref{Eq:thetaCond2} is equal to $\min_{i\in \mathcal{N}}(\Pi/(\mathbb{E}[S]-\min_{j\in \mathcal{N}\backslash\{i\}}\mu_j))$, and the proof is complete.

% \section{Proof of Corollary \ref{Cor:thetaCond}}\label{App:thetaCond}

% From the strict improvement of Peer $i$'s welfare, we have
% \begin{equation*}
%     U_i^{-1}(\hat{u}_i^{\mathcal{N}})>w_i-(1+\theta)\mu_i.
% \end{equation*}
% Then, \eqref{Eq:varthetaUpperBound} follows if 
% \begin{equation*}
%     w_i-(1+\theta)\mu_i>w_i-\mu_i-\frac{\mu_i}{\mathbb{E}[S]}(U_R^{-1}(\hat{u}_R^{\mathcal{N}})-w_R)
% \end{equation*}
% is checked, which is equivalent to \eqref{Eq:thetaCond2} and thus the result follows. 

\section{Proof of Proposition \ref{Prop:core2}}\label{App:core2}
Proposition \ref{Prop:NoFairExpSol} applies, and hence the unique optimal
contract $(\bm{a}^*,a_R^*,\bm{b}^*,P^*)$ exists. According to Definition \ref{Def:subgroupformationNG}, a subgroup $\mathcal{M}\subsetneq\mathcal{N}$ can be viable only if  
{\small \begin{equation}\label{Eq:PMlowerbound}
    \sum_{i\in \mathcal{M}}b_i^{\mathcal{M}}\geq \Tilde{\Pi}+\frac{1}{\gamma_R}\ln\mathbb{E}\left[e^{\gamma_R(1+\tau)(1-\sum_{i\in \mathcal{M}}a_i^{\mathcal{M}})S_{\mathcal{M}}}\right], \quad   b_i^{\mathcal{M}}\leq (1+\Tilde{\vartheta}_i)\mu_i-\frac{1}{\gamma_i}\ln\mathbb{E}\left[e^{\gamma_ia_i^{\mathcal{M}}S_{\mathcal{M}}}\right], \ i\in \mathcal{M}.
\end{equation}}%
% and for all $i\in \mathcal{M}$,
% {\small \begin{equation}\label{Eq:biM}
%     b_i^{\mathcal{M}}\leq (1+\Tilde{\vartheta}_i)\mu_i-\frac{1}{\gamma_i}\ln\mathbb{E}\left[e^{\gamma_ia_i^{\mathcal{M}}S_{\mathcal{M}}}\right].
% \end{equation}}%
From \eqref{Eq:PMlowerbound}, we have the following, which gives an upper bound on the aggregate premium for subgroup $\mathcal{M}$ that keeps all peers in $\mathcal{M}$ weakly better off than in the grand coalition $\mathcal{N}$:
{\small \begin{equation}\label{Eq:PMupperbound}
    \sum_{i\in \mathcal{M}}b_i^{\mathcal{M}}\leq \sum_{i\in \mathcal{M}}\left((1+\Tilde{\vartheta}_i)\mu_i-\frac{1}{\gamma_i}\ln\mathbb{E}[e^{\gamma_ia_i^{\mathcal{M}}S_{\mathcal{M}}}]\right). 
\end{equation}}%
%which signifies 

Combining \eqref{Eq:PMlowerbound} and \eqref{Eq:PMupperbound}, $\mathcal{M}$ can be formed only if
{\small \begin{equation*}
    \frac{1}{\gamma_R}\ln\mathbb{E}[e^{\gamma_R(1+\tau)(1-\sum_{i\in \mathcal{M}}a_i^{\mathcal{M}})S_{\mathcal{M}}}]+\sum_{i\in \mathcal{M}}\frac{1}{\gamma_i}\ln\mathbb{E}[e^{\gamma_ia_i^{\mathcal{M}}S_{\mathcal{M}}}]\leq\sum_{i\in \mathcal{M}}\left((1+\Tilde{\vartheta}_i)\mu_i\right)-\Tilde{\Pi},
\end{equation*}}%
and applying Jensen's inequality leads to $    \sum_{i\in \mathcal{M}}((1+\Tilde{\vartheta}_i)\mu_i)-\Tilde{\Pi}\geq (1+\tau)(1-\sum_{i\in \mathcal{M}}a_i^{\mathcal{M}})\mathbb{E}[S_{\mathcal{M}}]+\sum_{i\in \mathcal{M}}a_i^{\mathcal{M}}\mathbb{E}[S_{\mathcal{M}}]$ $\geq \mathbb{E}[S_{\mathcal{M}}]. 
$
% {\small \begin{equation*}
%     \sum_{i\in \mathcal{M}}\left((1+\Tilde{\vartheta}_i)\mu_i\right)-\Tilde{\Pi}\geq (1+\tau)\left(1-\sum_{i\in \mathcal{M}}a_i^{\mathcal{M}}\right)\mathbb{E}[S_{\mathcal{M}}]+\sum_{i\in \mathcal{M}}a_i^{\mathcal{M}}\mathbb{E}[S_{\mathcal{M}}]\geq \mathbb{E}[S_{\mathcal{M}}]. 
% \end{equation*}}%
This implies that if $\sum_{i\in \mathcal{M}}\Tilde{\vartheta}_i\mu_i<\Tilde{\Pi}$, then
% \begin{equation}\label{Eq:PsiBoundM}
%     \sum_{i\in \mathcal{M}}\Tilde{\vartheta}_i\mu_i<\Tilde{\Pi},
% \end{equation}
 $\mathcal{M}$ is not incentivized to be formed. The final statement holds by noting that $\Tilde{\vartheta}_i < \theta$ and $\max_{\emptyset\neq \mathcal{M}\subsetneq \mathcal{N}} \sum_{i\in\mathcal{M}}\Tilde{\vartheta}_i\mu_i \leq \theta \mathbb{E}[S]$. \qed 
 
 %Then, note that the R.H.S of \eqref{Eq:PsiBound} is $\max_{\emptyset\neq \mathcal{M}\subsetneq\mathcal{N}}\sum_{i\in \mathcal{M}}\Tilde{\vartheta}_i\mu_i$ and thus imposing Condition \eqref{Eq:PsiBound}
% ensure that \eqref{Eq:PsiBoundM} is true for any possible subgroup $\mathcal{M}\subsetneq\mathcal{N}$, and the proof is complete.

\end{document}